\documentclass[12pt,reqno]{amsart}
\usepackage{amsmath}
\usepackage{amsxtra}
\usepackage{amscd}
\usepackage{amsthm}
\usepackage{amsfonts}
\usepackage{amssymb}
\usepackage{mathtools}
\usepackage{eucal}
\usepackage{cases}
\usepackage{paralist} 
\usepackage{array}
%\usepackage{bbm}
% % % % % % % % % % % % % % % % % % % % %
%\usepackage{graphicx,amsmath,amssymb,pictex}
\usepackage{etex} 
\usepackage{color}
\usepackage{float}
\usepackage{cite}
\usepackage{tensor}

\usepackage{graphicx,color}

\definecolor{darkblue}{rgb}{0,0,.5}
\definecolor{darkred}{rgb}{.5,0,0}
\definecolor{darkgreen}{rgb}{0,0.5,0}

\usepackage[hang,flushmargin]{footmisc} % Fußnoten nicht mehr einrücken
\usepackage[arrow, matrix, curve]{xy}
\usepackage{perpage}
\MakePerPage{footnote}

\usepackage{hyperref}
\hypersetup {
	pdfstartview=FitH,
	colorlinks,
	citecolor=darkgreen,
	linkcolor=darkred,
	urlcolor=darkblue
}

%\usepackage{amsmath}
%\usepackage{amsxtra}
%\usepackage{amscd}
%\usepackage{amsthm}
%\usepackage{amsfonts}
%\usepackage{amssymb}
%\usepackage{eucal}
%\usepackage{cases}
%\usepackage{paralist}
%\usepackage{array}

%  \usepackage{showkeys}
%%\usepackage{bbm}
% % % % % % % % % % % % % % % % % % % % %
%%\usepackage{graphicx,amsmath,amssymb,pictex}
% \usepackage{etex}
%%\usepackage{misccorr}
%\usepackage{color}
%\usepackage{float}
%\usepackage{cite}

%\usepackage{graphicx,color}

%\allowdisplaybreaks
\allowdisplaybreaks

\def\smallskip{\vskip\smallskipamount}
\def\medskip{\vskip\medskipamount}
\def\bigskip{\vskip\bigskipamount}

%%%%%%%%%%%%%%%%%%%%%%%%%%%%%%%%%%%%%%%%%%%%%%%%%%%%
%\overfullrule=10pt
\textwidth=18cm
\textheight=22cm
\hoffset=-3cm
\voffset=-2cm
\baselineskip=18pt plus 3pt
%%%%%%%%%%%%%%%%%%%%%%%%%%%%%%%%%%%%%%%%%%%%%%%%%%%%
\def\({\left(}
\def\){\right)}
\def\[{\left[}
\def\]{\right]}
\newcommand{\bra}[1]{\langle #1 |}        %bra
\newcommand{\ket}[1]{{| #1 \rangle}}      %ket
\newcommand{\br}[1]{{\langle #1 \rangle}}  %bracket

%%%%%%%%%%%%%%%%%%%%%%%%%%%%%%%%%%%%%%%%%%%%%%%%%%%%%

\newcommand{\bea}{\begin{eqnarray}}
	\newcommand{\ena}{\end{eqnarray}}
\def\bel{\begin{eqnarray}}
	\def\enl{\end{eqnarray}}
%\newcommand{\be}{\begin{eqnarray*}}
	%	\newcommand{\en}{\end{eqnarray*}}
%\newcommand{\ba}{\begin{array}}
	%	\newcommand{\ea}{\end{array}}

%%%%%%%%%%%%%%%%%%%%%%%%%%%%%%%%%%%%%%%%%Leosha'a macros%%%%%%%%%%%%
\def\ba{\begin{eqnarray}}
	\def\ea{\end{eqnarray}}

\def\be{\begin{equation}}
	\def\ee{\end{equation}}
%\def\theequation{\thesubsection.\arabic{equation}}
%\def\theequation{\arabic{section}.\arabic{subsection}.\arabic{equation}}

%\def\thesection{\arabic{section}.\arabic{subsection}}

%%%%%%

%%%%%%%%%%%%%%%%%%%%%%%%%%%%%%%%%%%%%%%%%%%%%%%%%%%%%%%

\newcommand{\sltr}{\mathfrak{sl}_3}

\newcommand{\sln}{\mathfrak{sl}_n}

\newcommand{\id}{{\rm id}}

\newcommand{\tr}{{\rm tr}}

\newcommand{\End}{\mathop{\rm End}}

\newenvironment{tenumerate}{
	\begin{enumerate}
		
	}{\end{enumerate}}
\newcommand{\bi}{\begin{tenumerate}}
	\newcommand{\ei}{\end{tenumerate}}
\newcommand{\isoto}[1][]%
{{\mathop{\buildrel{\sim}\over\longrightarrow}\limits_{#1}}}

%%%%%%%%%%%%%%%%%%%%%%%%%%%%%%%%%%%%%%%%%%%%%%%%%%%%

%%%%%%%%%%%%% Frank's macros %%%%%%%%%%%%%%%%%%%
 %\let\be=\beta

\def\2{\frac{1}{2}} \def\4{\frac{1}{4}}

\def\6{\partial}

\def\+{\dagger}

\def\<{\langle} \def\>{\rangle}

 %\newcommand{\CD}{{\cal D}}

 %\DeclareMathOperator{\tr}{tr}

 %\DeclareMathOperator{\End}{End}
%\DeclareMathOperator{\id}{id}
%\DeclareMathOperator{\res}{res}

%\def\tr{{\rm tr}}

%%%%%%%%%%%%%%%%%%%%%%%%%%%%%%%%%%%%%%%%%%%%%%%%%%%%%%%%%%%%%%%%%%%%%%%%
%%%%%%%%%%%%%%%%%%%%%%%%%%%%%%%%%%%%%%%%%%%%%%%%%%%%%%
\numberwithin{equation}{section}
\newtheorem{thm}{Theorem}[section]
\newtheorem{prop}[thm]{Proposition}
\newtheorem{ppt}[thm]{Properties}
\newtheorem{conj}[thm]{Conjecture}

\newtheorem{rem}[thm]{Remark}
\newtheorem{cor}[thm]{Corollary}
\newtheorem{definition}[thm]{Definition}

\newcommand{\recbinom}{\genfrac{[}{]}{0pt}{}}
\newcommand{\rightarrowdbl}{\rightarrow\mathrel{\mkern-14mu}\rightarrow}
\newcommand{\xrightarrowdbl}[2][]{\xrightarrow[#1]{#2}\mathrel{\mkern-14mu}\rightarrow}

\newcounter{saveenumi}
\newcommand{\seti}{\setcounter{saveenumi}{\value{enumi}}}
\newcommand{\conti}{\setcounter{enumi}{\value{saveenumi}}}

\begin{document}
	\begin{title}[]
		%{Solving special difference equations}
		{On the properties of the density matrix of the $\boldsymbol{\mathfrak{sl}_{n+1}}$-invariant model}
	\end{title}
	\date{\today}
	\author{ H.~Jürgens, H.~Boos}
	\address{Math Department, University of Modena and Reggio Emilia, It-41125,
		Modena, Italy}\email{hjuergens@unimore.it}
	\address{Physics Department, University of Wuppertal, D-42097,
		Wuppertal, Germany}\email{hboos@uni-wuppertal.de}
	%\address{Physics Department, University of Wuppertal, D-42097,
		%	Wuppertal, Germany}
	%\address{Bogoliubov Laboratory of Theoretical Physics, JINR, Dubna,
		%	141 980 Moscow Region, Russia}

	\begin{abstract}
		We present an ansatz of generalizing the construction of recursion relations for the correlation functions of the $\mathfrak{sl}_2$-invariant fundamental exchange model in the thermodynamic limit by Jimbo, Miwa, Smirnov, Takeyama and one of our present authors in 2004 for higher rank. Due to the structure of the correlators as functions of their inhomogeneity parameters, a recursion formula for the reduced density matrix was proven. In the case of $\sltr$, we use the explicit results of Klümper and Ribeiro, and Nirov, Hutsalyuk and one of our present authors for the reduced density matrix of up to operator length three to verify whether it is possible to relate the residues of the density matrix of length $n$ to the density matrix of length smaller than $n$ as in $\mathfrak{sl}_2$. This is unclear, since the reduced quantum Knizhnik--Zamolodchikov equation splits into two parts for higher rank. In fact, we show two relations, one of which is a straightforward generalisation to the $\mathfrak{sl}_2$ case and one which is completely new. This allows us to construct an analogue of the operator $X_k$ which we call Snail Operator. In the $\mathfrak{sl}_2$-case, this operator has many nice properties including in particular the fact that only one irreducible representation of the Yangian $Y(\mathfrak{sl}_2)$, the Kirillov--Reshetikhin module $W_k$, contributed the residue at $\lambda_i-\lambda_j=-(k+1)$. Here, we give an overview of the mathematical background, T-systems, and show a new application of the extended T-systems introduced by Mukhin and Young in 2012 regarding the Snail Operator.
	\end{abstract}
	
	\maketitle
	
	\tableofcontents
	\section{ Introduction}
	\subsection{The $\mathfrak{sl}_2$ case \cite{BJMST}}
	Let's consider the Hamiltonian $H_{XXX} = \frac{1}{2}\sum_j\left(\sigma^x_j\sigma^x_{j+1}+\sigma^y_j\sigma^y_{j+1}+\sigma^z_j\sigma^z_{j+1}\right)$ of the rational $\mathfrak{sl}_2$-invariant model which corresponds to the gapless case $q\to 1$, $\Delta = \frac{q+q^{-1}}{2}$ of the Heisenberg XXZ spin chain. One can now consider a more general inhomogeneous chain generated by the transfer matrix \begin{align*}
		\operatorname{tr}\left(R_{a,-L+1}(\lambda)\cdots R_{a,0}(\lambda)R_{a,1}(\lambda-\lambda_1)\cdots R_{a,n}(\lambda-\lambda_n)R_{a,n+1}(\lambda)\cdots R_{a,L}(\lambda)\right),
	\end{align*}
	which is still exactly solvable.%, but the interaction is non-local.
	The reduced density matrix of this model in the thermodynamic limit is obtained in terms of multiple integrals from the vertex operator approach for the gapped regime by analytic continuation. Factorisation of these multiple integrals into products of single integrals was proven in papers \cite{BK}\cite{BKS}. However, the construction introduced in 2004 in the paper \cite{BJMST} provides another way of describing the correlation functions (i.e. the reduced density matrix). In particular the conjecture
	\begin{align}
		\left[D_{1,\dots,n}(\lambda_1,\dots,\lambda_n)\right]_{\epsilon_1\dots\epsilon_n} ^{\bar{\epsilon}_1\dots\bar{\epsilon}_n}:= \bra{0}(\tensor{E}{_{\epsilon_1}^{\bar{\epsilon}_1}})_1\cdots (\tensor{E}{_{\epsilon_n}^{\bar{\epsilon}_n}})_n\ket{0}= \sum \prod \omega(\lambda_i-\lambda_j) f(\lambda_1,\dots,\lambda_n)
		\label{eqn:factorisation_prop}
	\end{align}
	where $\omega(\lambda)$ is a single transcendental function and the functions $f(\lambda_1,\dots,\lambda_n)$ are rational was proven. This is due to a recursion relation where the correlation function is presented in terms of a transfer matrix over an auxiliary space of 'fractional dimension'. It is an analytic continuation of an operator $X_k$ with respect to $k$ \footnote{In our notation we also call this operator the 'Snail Operator'.}. For the homogeneous limit, odd integer values of the $\zeta$-function appear as the coefficients in the Taylor series of $\omega$ (cf. \cite{BJMST}). This result was generalized for the XXZ spin chain and further led to the fermionic structure of the correlation functions described in the series of papers "Hidden Grassmann structure in the XXZ-model" (\cite{BJMST2}, \cite{BJMST_HGS}, \cite{BJMST_HGS2}, etc.).
	\subsection{The $\mathfrak{sl}_3$ case}
	For the rational $\mathfrak{sl}_3$-invariant model the vertex operator approach still works by taking the limit $q\to 1$ from the massive regime. Though, it is not known whether a factorisation property similar to (\ref{eqn:factorisation_prop}) does exist. As for now, not much is known about the correlation functions. Nevertheless, the rqKZ-equation could be generalized for the $\mathfrak{sl}_3$ case and an attempt to explicitly solve it led to the first explicit results for the short range correlation functions of up to operator length three \cite{BHN},\cite{KR}. Since the construction for $\mathfrak{sl}_2$ is mainly based on the rqKZ-equation and some fusion relations, there is hope for it to have a $\mathfrak{sl}_3$ generalization as well. Especially the operator $X_k$ in \cite{BJMST} seems to have a promising generalization. This will be explained in more detail later. Anyway, we still don't know whether all the other properties of the operator $X_k$ can be generalized and if there is a way to prove a recursion relation for the correlation functions just like in the case of $\mathfrak{sl}_2$.
	\subsection{Structure}
	Let us now give an overview of the structure of the paper.
	
	In section \ref{sect:gen_def} we introduce our notation and some important definitions which we will use throughout the paper. We explain how our graphical notation is understood and introduce our main object of interest, the reduced density matrix $D_m$.
	
	In section \ref{sect:sl2snailconstr} we review the construction of the paper \cite{BJMST} in our notation. We use our graphical notation to visualize the construction as well as the operator $X_k$ \footnote{which is the Snail Operator $\tilde{X}_k$ in our notation}. In section \ref{subsect:T-systems} we explain how T-systems play the main role in proving that only the Kirillov--Reshetikhin module $W_k$ contributes to the operator $X_k$. We clarify how this is the main reason that we can write the operator $X_k$ as a transfer matrix over an auxiliary space of 'fractional dimension' $\lambda\in\mathbb{C}$.
	
	In section \ref{sect:sl3snailconstr} we present our Ansatz for the generalization of this construction. As a first step, we introduce new reduction relations for the residues as they constitute the starting point of it. We present our new results for $\mathfrak{sl}_{n+1}$ graphically and discuss how the generalization for higher rank works. We also introduce our definition $\tilde{X}_k$ of the Snail Operator for rank $n\in\mathbb{N}$, thereby generalising the $\mathfrak{sl}_2$ (rank $1$) result in a straightforward way.
	
	In section \ref{subsect:extT-systems} we review the extended T-systems introduced in the paper \cite{MY} and show how they can be used to analyse the representations that contribute to $\tilde{X}_k$. As we shall see, the result for $\mathfrak{sl}_2$ naturally generalizes as certain minimal snake modules take the role of the Kirillov--Reshetikhin modules for rank $n>1$. Finally, we compare our findings with the $\mathfrak{sl}_2$ case and point out the next steps that shall follow.
	\section{General definitions}
	\label{sect:gen_def}
	\subsection{Cartan data}
	Let $\mathfrak{g}$ be a finite-dimensional simple Lie algebra of rank $n$ over $\mathbb{C}$ and let $\mathfrak{h}$ be a Cartan subalgebra of $\mathfrak{g}$. We normalize the invariant inner product $\br{\cdot,\cdot}$ on $\mathfrak{g}$ such that the square length of the maximal root equals $2$. Furthermore, we identify $\mathfrak{h}$ and $\mathfrak{h}^*$ through $h\mapsto\br{\cdot,h}$. Let $I=\{1,\dots,n\}$ and let $\{\alpha_i\}_{i\in I}$ be the set of simple roots with the corresponding simple coroots $\{\alpha_i^\vee\}_{i\in I}$ and fundamental weights $\{\omega_i\}_{i\in I}$. Let $A:=(a_{ij})$ denote the Cartan matrix and let $d_i$, $i=1,\dots,n$, be the relatively prime integers such that $B=(b_{ij})=(d_i a_{ij})$ is symmetric. We have
	\begin{align}
		2\br{\alpha_i,\alpha_j} = a_{ij}\br{\alpha_i,\alpha_i}, \quad 2\br{\alpha_i,\omega_j} = \delta_{i,j}\br{\alpha_i,\alpha_i}, \quad b_{ij} = d^\vee \br{\alpha_i,\alpha_j},
	\end{align}
	where $d^\vee$ is the maximal number of edges connecting two nodes in the Dynkin diagram of $\mathfrak{g}$. At last we denote the (positive) weight and root lattice by $P$ ($P^+$) and $Q$ ($Q^+$), respectively. We have that $P$ ($P^+$) and $Q$ ($Q^+$) are the $\mathbb{Z}$-span ($\mathbb{Z}_{\geq0}$-span) of the fundamental weights and simple roots, respectively. Then we have a partial order $\leq$ on $P$ in which $\lambda\leq\lambda'$ iff $\lambda'-\lambda\in Q^+$.
	\subsection{The Yangian $Y(\mathfrak{g})$ and the quantum affine algebra $U_q(\tilde{\mathfrak{g}})$}
	Since we will work with finite dimensional representations of the Yangian $Y(\mathfrak{g})$ for the XXX spin chain and the rational six vertex model, we shall recap its second Drinfeld realization. As we will also consider finite-dimensional (type 1) representations of the quantum affine algebra $U_q(\tilde{\mathfrak{g}})$ later, it is helpful to write down its second Drinfeld realization as well. We refer to the book \cite{CPBook} and the paper \cite{CP1991}, where almost all the information can be found.
	\begin{definition}[the second Drinfeld realization of the Yangian]
		The second Drinfeld realization of the Yangian $Y(\mathfrak{g})$ is the associative algebra with generators $X^\pm_{i,r}$, $H_{i,r}$, $i=1,\dots,n$, $r\in\mathbb{N}$ and defining relations
		\begin{align*}
			[H_{i,r},H_{j,s}]&=0,\\
			[H_{i,0},X^\pm_{j,s}]&=\pm d_i a_{ij}X^\pm_{j,s},\\
			[H_{i,r+1},X^\pm_{j,s}]-[H_{i,r},X^\pm_{j,s+1}]&=\pm \frac{1}{2}d_i a_{ij}(H_{i,r}X^\pm_{j,s}+X^\pm_{j,s}H_{i,r}),\\
			[X^+_{i,r},X^-_{j,s}]&=\delta_{i,j}H_{i,r+s},\\
			[X^\pm_{i,r+1},X^\pm_{j,s}]-[X^\pm_{i,r},X^\pm_{j,s+1}]&=\pm \frac{1}{2}d_i a_{ij}(X^\pm_{i,r}X^\pm_{j,s}+X^\pm_{j,s}X^\pm_{i,r}),\\
			\sum_{\sigma\in S_m}[X^\pm_{i,r_{\sigma(1)}},[X^\pm_{i,r_{\sigma(2)}}&,\dots,[X^\pm_{i,r_{\sigma(m)}},X^\pm_{j,s}]\cdots]]=0,
		\end{align*}
		for all sequences of non-negative integers $r_1,\dots,r_m$, where $m=1-a_{ij}$ and $S_m$ is the symmetric group of degree $m$. $\odot$
	\end{definition}
	Let us also recap that this Hopf algebra has a one-parameter group of automorphisms $\tau_a,\, a\in\mathbb{C}$. It is one of the main reasons for its importance as it allows us to define a one parameter family of modules to any given module by adjoining different parameters $a\in\mathbb{C}$ by $\tau_a$, i.e. pulling back by $\tau_a$. The proof of the following proposition is given in \cite{CPBook} (proposition 12.1.5). 
	\begin{prop}[the Hopf algebra automorphism $\tau_a$]
		\label{prop:tau_a_yangian}
		There is a one-parameter group of Hopf algebra automorphisms $\tau_a$ of $Y(\mathfrak{g})$, $a\in\mathbb{C}$, given by
		\begin{align}
			\tau_a(H_{i,r})=\sum_{s=0}^{r}\binom{r}{s}a^{r-s}H_{i,s},\quad \tau_a(X_{i,r}^\pm)=\sum_{s=0}^{r}\binom{r}{s}a^{r-s}X_{i,s}^\pm.
			\label{eqn:tau_a_yangian}
		\end{align}
		$\odot$
	\end{prop}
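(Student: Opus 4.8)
The plan is to pass from the Drinfeld generators to the generating series (Drinfeld currents) and to recognise $\tau_a$ as the shift of the spectral parameter. For each $i$ define the formal series
\begin{align*}
X_i^\pm(u)=\sum_{r\ge 0}X^\pm_{i,r}\,u^{-r-1},\qquad H_i(u)=\sum_{r\ge 0}H_{i,r}\,u^{-r-1}.
\end{align*}
Expanding $(u-a)^{-r-1}=u^{-r-1}\sum_{k\ge 0}\binom{r+k}{k}(a/u)^{k}$ for large $u$ and collecting the coefficient of $u^{-m-1}$ (set $m=r+k$) gives $\sum_{r=0}^{m}\binom{m}{r}a^{m-r}X^\pm_{i,r}$, which is precisely $\tau_a(X^\pm_{i,m})$, and likewise for $H_i$. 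Hence the formulas in the statement are nothing but the substitution $X_i^\pm(u)\mapsto X_i^\pm(u-a)$, $H_i(u)\mapsto H_i(u-a)$ at the level of currents. This is the observation I would build the whole argument on.

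Next I would rewrite each defining relation in current form. Combining the generator relations, the $H$--$X$, the $X^+$--$X^-$, and the $X^\pm$--$X^\pm$ relations all become identities in which the spectral parameters enter only through the difference $u-v$ (together with the constants $d_i a_{ij}$), while $[H_{i,0},X^\pm_{j,s}]=\pm d_i a_{ij}X^\pm_{j,s}$ is encoded in the leading term and merely fixes the normalisation. Since $u-v$ is invariant under the simultaneous shift $u\mapsto u-a$, $v\mapsto v-a$, every such relation is manifestly preserved, so $\tau_a$ respects them and extends to an algebra endomorphism. The one-parameter group law is then immediate from composing shifts, $\tau_a\tau_b=\tau_{a+b}$ and $\tau_0=\mathrm{id}$; on generators this is the Vandermonde-type identity $\sum_s\binom{r}{s}\binom{s}{t}a^{r-s}b^{s-t}=\binom{r}{t}(a+b)^{r-t}$. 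In particular each $\tau_a$ has inverse $\tau_{-a}$ and is therefore an automorphism.

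The two delicate points, which I expect to be the real obstacles, are the Serre relations and the Hopf compatibility. Written in current form as symmetrised iterated brackets of the $X_i^\pm(u_k)$ against $X_j^\pm(v)$, the Serre relations are homogeneous in the currents and depend on the parameters only through differences, so the shift maps them to themselves; one only has to check that the symmetrisation is respected, which holds because the shift acts identically on every parameter $u_k$. For the Hopf structure the cleanest route is to transport the claim to the RTT (first Drinfeld) realisation, where the coproduct, counit and antipode are the matrix ones, $\Delta(t_{ij}(u))=\sum_k t_{ik}(u)\otimes t_{kj}(u)$ and $\epsilon(t_{ij}(u))=\delta_{ij}$, and where the shift $T(u)\mapsto T(u+a)$ is visibly compatible with all three structure maps. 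Applying the standard Hopf isomorphism between the two realisations then yields that $\tau_a$ is a Hopf automorphism. The genuine difficulty is exactly this transport: the coproduct is far from diagonal in the Drinfeld generators $X^\pm_{i,r},H_{i,r}$, so a termwise verification on them would be cumbersome, which is why I would route the coalgebra statement through the RTT presentation rather than check it directly.
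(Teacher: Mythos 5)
The paper does not actually prove this proposition: it only cites Chari--Pressley, \emph{A guide to quantum groups}, Proposition 12.1.5, so there is no in-paper argument to compare yours against. Your algebra-level argument is essentially sound. The identification of $\tau_a$ with the substitution $u\mapsto u-a$ in the currents is correct, the Vandermonde computation for $\tau_a\tau_b=\tau_{a+b}$ is right, and the preservation of the defining relations can indeed be checked either in current form or, more safely, directly on the modes via Pascal's rule (e.g.\ for $[H_{i,r+1},X^\pm_{j,s}]-[H_{i,r},X^\pm_{j,s+1}]=\pm\tfrac12 d_ia_{ij}(H_{i,r}X^\pm_{j,s}+X^\pm_{j,s}H_{i,r})$ one splits $\binom{r+1}{p}=\binom{r}{p}+\binom{r}{p-1}$ and reindexes). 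One caveat about your phrasing: when you pass to currents, the index shifts $r\mapsto r+1$ produce boundary terms involving the zero modes, so the relations are not literally functions of $u-v$ alone; the mode-level check is the cleaner way to make your ``shift-invariance'' argument rigorous.

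The genuine gap is the Hopf compatibility. You propose to transport the statement to the RTT realisation, but the proposition is asserted for an arbitrary finite-dimensional simple $\mathfrak{g}$, and an elementary RTT presentation with $\Delta(t_{ij}(u))=\sum_k t_{ik}(u)\otimes t_{kj}(u)$ is available without substantial extra machinery essentially only in type $A$; even there you would need the nontrivial Gauss-decomposition isomorphism between the RTT and Drinfeld generators, together with the fact that it intertwines the two shift automorphisms. The standard route, and the one behind the citation the paper gives, is the \emph{first} realisation of $Y(\mathfrak{g})$ with generators $x\in\mathfrak{g}$ and $J(x)$: there $\tau_a(x)=x$, $\tau_a(J(x))=J(x)+ax$, and since $\Delta(J(x))=J(x)\otimes1+1\otimes J(x)+\tfrac12[x\otimes1,\Omega]$ with the last term fixed by $\tau_a\otimes\tau_a$, the identity $\Delta\circ\tau_a=(\tau_a\otimes\tau_a)\circ\Delta$ (and the corresponding statements for $\epsilon$ and $S$) is a one-line check. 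The binomial formula of the statement is then recovered on the Drinfeld generators through the isomorphism of the two realisations, since $X^\pm_{i,1}$ and $H_{i,1}$ differ from $J(x^\pm_i)$, $J(h_i)$ by $\tau_a$-invariant corrections quadratic in the degree-zero generators, and the higher modes are generated recursively. If you replace the RTT detour by this first-realisation argument, your proof is complete.
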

	Thus, if $V$ is any $Y(\mathfrak{g})$ module, it is convenient to write $V(\lambda)$ for the pullback of $V$ by $\tau_\lambda$.
	\begin{rem}
		\label{rem:ev_a_for_yangian}
		If $\mathfrak{g}$ is of type $A$, one has in addition an evaluation homomorphism $\operatorname{ev}_a:Y(\mathfrak{g})\rightarrowdbl U(\mathfrak{g})$ for all $a\in \mathbb{C}$ such that the composition $U(\mathfrak{g})\lhook\joinrel\xrightarrow{\iota} Y(\mathfrak{g})\xrightarrowdbl{\operatorname{ev}_a}U(\mathfrak{g})$ is the identity map (cf. \cite{CPBook} proposition 12.1.15). A representation defined through the pullback by $\operatorname{ev}_a$ is then called an evaluation representation. If $V$ is any representation of $\mathfrak{g}$, we have $$\tau_b^*\operatorname{ev}_a^*(V)=\tau_0^*\operatorname{ev}_{a+b}^*(V)=\tau_{a+b}^*\operatorname{ev}_0^*(V)\text{ for all }a,b\in\mathbb{C}$$ in this case. We may therefore write $V(\lambda):=\operatorname{ev}_\lambda^*(V)$ such that $V(\lambda)=\tau_\lambda^*(V)$ by identifying $V$ with $\operatorname{ev}_0^*(V)$ as a representation of the Yangian. Finally, we should emphasize that $\operatorname{ev}_a$ is just an algebra homomorphism, not a Hopf algebra homomorphism. $\odot$
	\end{rem}
	
	For the definition of the $q$-deformed case, we may assume that $q\in \mathbb{C}$ is not a root of unity and use the definitions
	\begin{align*}
		[n]_q:=\frac{q^n-q^{-n}}{q-q^{-1}},\quad[n]_q!:=[n]_q[n-1]_q\cdots[1]_q,\quad\recbinom{n}{m}_q:=\frac{[n]_q!}{[n-m]_q![m]_q!}
	\end{align*}
	for the $q$-number, $q$-factorial and $q$-binomial, respectively.
	\begin{definition}[the second Drinfeld realization of $U_q(\tilde{\mathfrak{g}})$]
		\label{def:second_drin_quantum_aff}
		Let $\tilde{\mathfrak{g}}$ be the untwisted affine Lie algebra associated to $\mathfrak{g}$. The second Drinfeld realization of the quantum affine algebra $U_q(\tilde{\mathfrak{g}})$ is the associative algebra with generators $\mathcal{C}^{\pm1/2}$, $\mathcal{K}_i^{\pm1}$, $\mathcal{H}_{i,r}$, $\mathcal{X}^\pm_{i,s}$, $i=1,\dots,n$, $r\in\mathbb{Z}\backslash\{0\}$, $s\in\mathbb{Z}$ and defining relations
		\begin{align*}
			\mathcal{K}_i\mathcal{K}_i^{-1} =  \mathcal{K}_i^{-1}&\mathcal{K}_i = 1, \mathcal{C}^{1/2}\mathcal{C}^{-1/2}=1,\\
			\mathcal{C}^{\pm1/2}&\text{ are}\text{ central},\\
			[\mathcal{K}_i,\mathcal{K}_j]&=[\mathcal{K}_i,\mathcal{H}_{j,r}]=0,\\
			[\mathcal{H}_{i,r},\mathcal{H}_{j,s}]&=\delta_{r,-s}\frac{1}{r}[ra_{ij}]_{q_i}\frac{\mathcal{C}^r-\mathcal{C}^{-r}}{q_j-q_j^{-1}},\\
			\mathcal{K}_i\mathcal{X}^\pm_{j,r}\mathcal{K}_i^{-1}&=q_i^{\pm a_{ij}}\mathcal{X}_{j,r}^\pm,\\
			[\mathcal{H}_{i,r},\mathcal{X}^\pm_{j,s}]&=\pm \frac{1}{r}[ra_{ij}]_{q_i}\mathcal{C}^{\mp|r|/2}\mathcal{X}^\pm_{j,r+s},\\
			\mathcal{X}_{i,r+1}^\pm\mathcal{X}_{j,s}^\pm-q_i^{\pm a_{ij}}\mathcal{X}_{j,s}^\pm\mathcal{X}_{i,r+1}^\pm&=q_i^{\pm a_{ij}}\mathcal{X}_{i,r}^\pm\mathcal{X}_{j,s+1}^\pm-\mathcal{X}_{j,s+1}^\pm\mathcal{X}_{i,r}^\pm\\
			[\mathcal{X}_{i,r}^+,\mathcal{X}_{j,s}^-]=\delta_{i,j}&\frac{\mathcal{C}^{(r-s)/2}\varPhi_{i,r+s}^+-\mathcal{C}^{-(r-s)/2}\varPhi_{i,r+s}^-}{q_i-q_i^{-1}}\\
			\sum_{\sigma\in S_m}\sum_{k=0}^{m}(-1)\recbinom{m}{k}_{q_i}
			\mathcal{X}^\pm_{i,r_{\sigma(1)}}&\cdots\mathcal{X}^\pm_{i,r_{\sigma(k)}}\mathcal{X}_{j,s}^\pm\mathcal{X}_{i,r_{\sigma(k+1)}}^\pm\cdots\mathcal{X}^\pm_{i,r_{\sigma(m)}}=0,\, i\neq j,
		\end{align*}
		for all sequences of non-negative integers $r_1,\dots,r_m$, where $m=1-a_{ij}$, $q_i=q^{d_i}$ and the elements $\varPhi_{i,r}^\pm$ are determined by equating coefficients of powers of $u$ in the formal power series
		\begin{align}
			\Phi^\pm = \sum_{r=0}^{\infty}\varPhi_{i,\pm r}^\pm u^{\pm r}= \mathcal{K}_i^{\pm1}\exp\left(\pm(q_i-q_i^{-1})\sum_{s=1}^{\infty}\mathcal{H}_{i,\pm s}u^{\pm s}\right)
		\end{align}
		$\odot$
	\end{definition}
	\begin{rem}
		If $A$ is the generalized Cartan matrix of $\tilde{\mathfrak{g}}$, then $\tilde{\mathfrak{g}} = L(A)'$, which can be defined as the 1-dimensional central extension of the loop algebra $\mathcal{L}(\mathfrak{g})=\mathbb{C}[t,t^{-1}]\otimes_\mathbb{C}\mathfrak{g}$, i.e. $\tilde{\mathfrak{g}}=\mathcal{L}(\mathfrak{g})\oplus\mathbb{C}c$. We refer to $L(A)$ by $\hat{\mathfrak{g}}$, which is obtained from $\tilde{g}$ by adjoining an element $d$ that acts as a derivation. Thus we have $\hat{\mathfrak{g}}=\tilde{\mathfrak{g}}\oplus\mathbb{C}d=\mathcal{L}(\mathfrak{g})\oplus\mathbb{C}c\oplus\mathbb{C}d$. The difference is that the simple roots in the Cartan subalgebra of $\hat{\mathfrak{g}}$ are linear dependent, whereas $d$ removes the degeneracy for $\hat{\mathfrak{g}}$.
		As a consequence, $U_q(\hat{\mathfrak{g}})$ may be obtained from $U_q(\tilde{\mathfrak{g}})$ by introducing additional generators $\mathcal{D}^{\pm1}$ and relations
		\begin{align*}
			\mathcal{D}\mathcal{D}^{-1}=\mathcal{D}^{-1}\mathcal{D}=&1\\
			\mathcal{D}\mathcal{H}_{i,r}\mathcal{D}^{-1}=q^r\mathcal{H}_{i,r},\,[\mathcal{D},\mathcal{K}_i]&=[\mathcal{D},\mathcal{C}]=0,\\
			\mathcal{D}\mathcal{X}_{i,r}^\pm\mathcal{D}^{-1}=q^r&\mathcal{X}_{i,r}^\pm.
		\end{align*}
		We emphasize this fact, because the notation in the literature can be confusing. Our definitions are in agreement with the definitions of Kac \cite{Kac}, Carter \cite{Carter}, the book of Chari and Pressley on quantum groups \cite{CPBook} and the paper of Nirov and Razumov \cite{NR}. $\odot$
	\end{rem}
	\begin{rem}[type 1]
		\label{rem:type_1}
		A representation $V$ of $U_q(\tilde{\mathfrak{g}})$ is said to be of type 1, if the $\mathcal{K}_i$ act semisimply with eigenvalues which are integer powers of $q$ and $\mathcal{C}$ acts as the identity. In fact, these are representations of the quotient of $U_q(\tilde{\mathfrak{g}})$ by the ideal generated by $\mathcal{C}^{1/2}-1$, which is the quantum loop algebra $U_q(\mathcal{L}(\mathfrak{g}))$. Conversely, any finite dimensional representation can be obtained by twisting with certain algebra automorphisms (cf. \cite{CPBook} Prop. 12.2.3).
		Thus, we can only consider finite dimensional (type 1) representations of $U_q(\mathcal{L}(\mathfrak{g}))$ as stated above. $\odot$
	\end{rem}
	Similar to the Yangian one has a one-parameter group of automorphisms $\tau_a$, $a\in\mathbb{C}^\times$, of $U_q(\tilde{\mathfrak{g}})$ given by 
	\begin{align}
		\tau_a(\mathcal{X}_{i,r}^\pm)=a^r\mathcal{X}_{i,r}^\pm,\quad\tau_a(\mathcal{H}_{i,s})=a^s\mathcal{H}_{i,s},\quad\tau_a(\mathcal{K}_i^{\pm1})=\mathcal{K}_i^{\pm1},\quad\tau_a(\mathcal{C}^{\pm1/2})=\mathcal{C}^{\pm1/2}.
		\label{eqn:tau_a_q-deformed}
	\end{align}
	If $\mathfrak{g}$ is of type $A$, there is also an evaluation homomorphism $\operatorname{ev}_a$, $a\in\mathbb{C}$, due to Jimbo (1986). Though, for $n\geq2$ it takes values in $U_q(\mathfrak{gl}_{n+1})$ instead of $U_q(\mathfrak{sl}_{n+1})$. Nevertheless, regarding a finite dimensional type 1 representation of $U_q(\mathfrak{sl}_{n+1})$ as a type 1 representation of $U_q(\mathfrak{gl}_{n+1})$, the pullback by $\operatorname{ev}_a$ is well defined and again called \textit{evaluation representation} (cf. \cite{CPBook} 12.2.C). Furthermore, it can be stated that every finite-dimensional irreducible type 1 representation of $U_q(\widetilde{\mathfrak{sl}}_{n+1})$ is isomorphic to a subquotient of a tensor product of evaluation representations (cf. \cite{CPBook} corr. 12.2.14).
	
	To close this part, we cite the last remark of chapter 12.2 in \cite{CPBook}, which explains the similarity between the representation theory of the Yangians and quantum loop algebras.
	\begin{rem}
		The close similarity of the representation theory of the Yangians and quantum loop algebras can be described by an observation due to Drinfeld \cite{D} (1987) at the end of section 6. Let $U_h(\mathcal{L}(\mathfrak{g}))$ be the algebra generated by the elements $\mathcal{H}_{i,r}$, $\mathcal{X}_{i,r}^\pm$ for $i=1,\dots,n$, $r\in\mathbb{Z}$ with defining relations as in definition \ref{def:second_drin_quantum_aff}, but with $q$ replaced by $e^h$, $\mathcal{K}_i$ by $e^{d_ih\mathcal{H}_{i,0}}$ and $\mathcal{C}^{1/2}$ by $1$. Let $\varphi$ be the map $U_h(\mathcal{L}(\mathfrak{g}))\xrightarrow{h=0} U(\mathcal{L}(\mathfrak{g}))\xrightarrow{u=1} U(\mathfrak{g})$ and $A$ be the $\mathbb{C}[[h]]$-subalgebra of $U_h(\mathcal{L}(\mathfrak{g}))\otimes_{\mathbb{C}[[h]]}\mathbb{C}((h))$ generated by $U_h(\mathcal{L}(\mathfrak{g}))$ and $h^{-1}\ker(\varphi)$. Then, $A/hA \cong Y(\mathfrak{g})$. $\odot$
	\end{rem}
	Moreover, there is a general theorem which proves the equivalence of categories of finite dimensional representations of Yangians and of quantum affine algebras given in \cite{GT}.
	\subsection{Graphical notation}
	\label{subsect:graphical_notation}
	Let $V$ be a $n+1$ dimensional complex vector space. We represent $V$ graphically by an oriented line. An operator $O \in \End(V)$ is associated with a symbol, for instance a point, on the line. Let $\ket{v}\in V$ and $A, B \in \End(V)$, then $A\cdot B \ket{v}$ is depicted in Figure \ref{fig:ABv}.
	\begin{figure}[H]
		\centering
		\includegraphics[scale = .6, trim = 6.5cm 11.5cm 12.5cm 8.5cm]{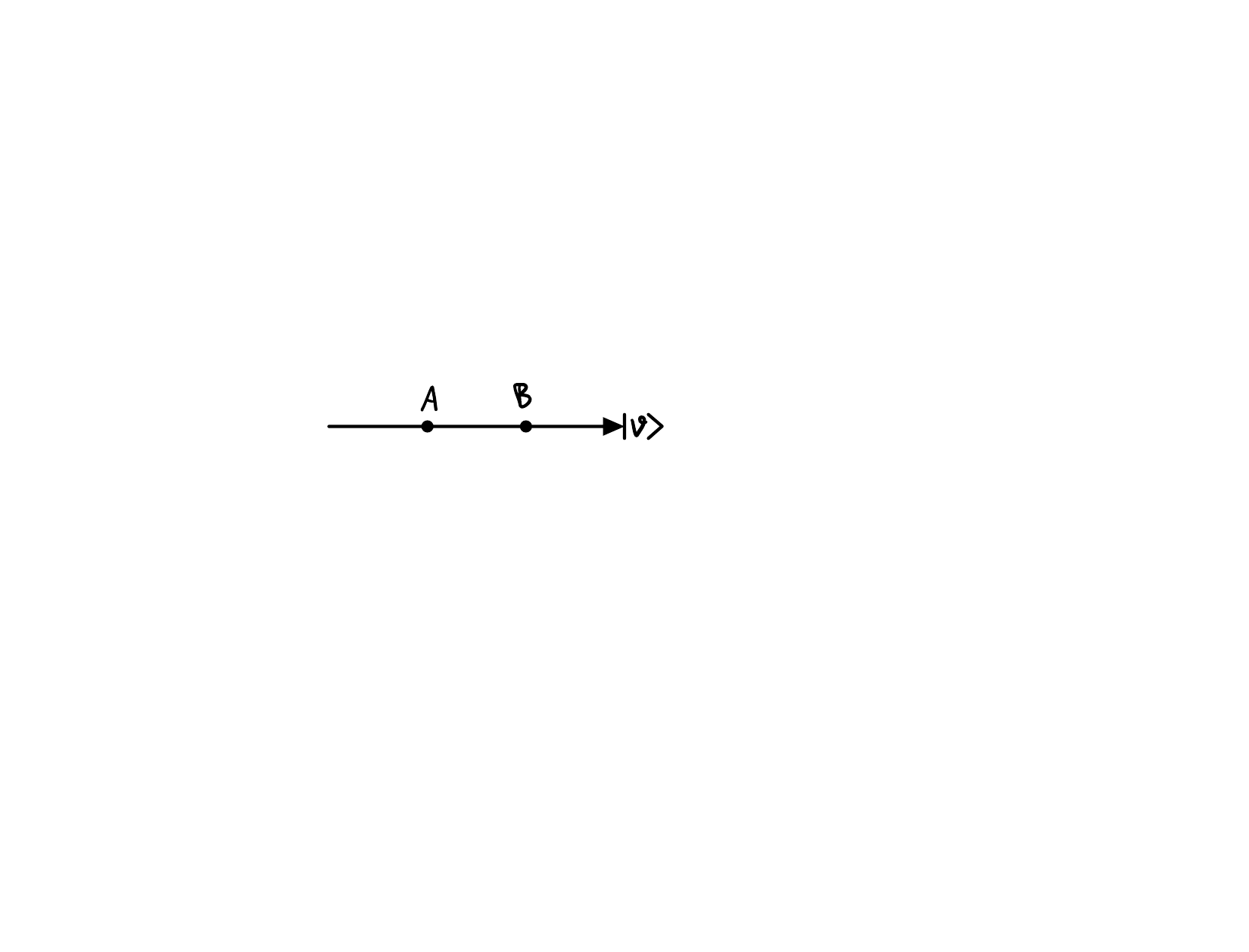}
		\caption{$A\cdot B \ket{v}$.}
		\label{fig:ABv}
	\end{figure}
	We can interpret $V$ as a fundamental representation of the Yangian $Y(\mathfrak{sl}_{n+1})$ by using the evaluation homomorphism $\operatorname{ev}_a:Y(\mathfrak{g})\to U(\mathfrak{g})$ and defining $V\equiv V(a):= \operatorname{ev}_a^*V$ (see remark \ref{rem:ev_a_for_yangian}) where $V$ is the fundamental representation of $\mathfrak{g}$ to the fundamental weight $\omega_1$.\footnote{Of course, any fundamental weight $\omega_i$, $i=1,\dots,n$, can be taken here.} Using the automorphism $\tau_\lambda$ (see proposition \ref{prop:tau_a_yangian}) we can define a family of (fundamental) representations with spectral parameter $\lambda\in\mathbb{C}$ with it, i.e. $V(\lambda):=\tau_\lambda^*(V)$ the pullback of $V$ by $\tau_\lambda$. This can be done for any $\mathfrak{g}$, but in general we don't have an evaluation homomorphism. If a representation $V$ of $\mathfrak{g}$ can be lifted, the ambiguity of defining an origin ("$\operatorname{ev}_0^*$") remains (see \cite{CPBook} Thm. 12.5.3). Thus, we may identify the rational $R$-matrix $(\pi_{V(\lambda)}\otimes\pi_{V(\mu)})(\mathcal{R}) =: R(\lambda-\mu)\in \End(V(\lambda)\otimes V(\mu))$ with any positively oriented vertex between such oriented lines with spectral parameters $\lambda$ and $\mu$, respectively, where $\mathcal{R}$ is the (pseudo-)universal $R$-matrix of the Yangian $Y(\mathfrak{g})$. Note that the orientation of the vertex is naturally induced by the orientation of the lines (c.f. Figure \ref{fig:rational_R-matrix}).
	\begin{figure}[H]
		\centering
		\includegraphics[scale = .6, trim = 10cm 8cm 10cm 8cm]{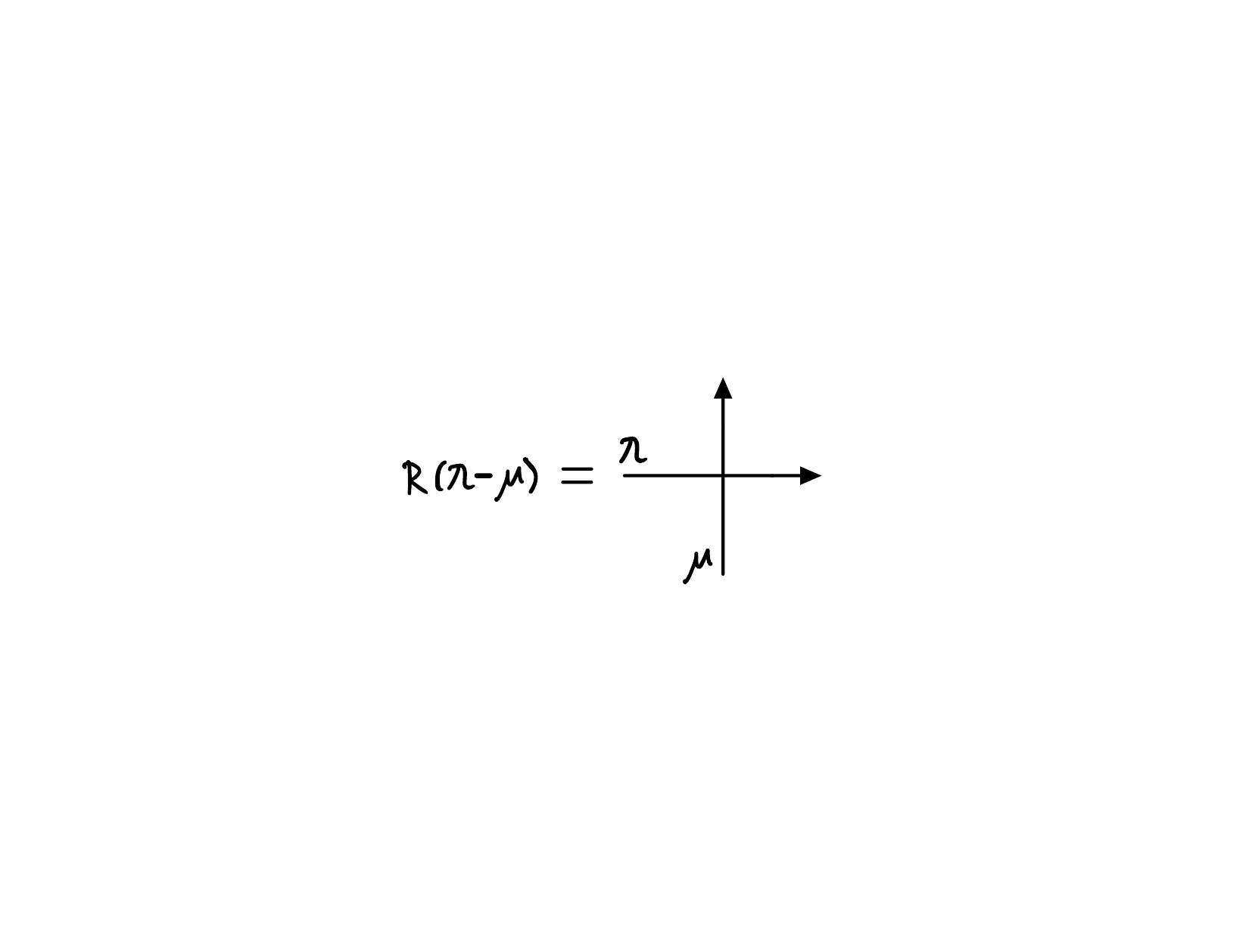}
		\caption{The rational $R$-matrix.}
		\label{fig:rational_R-matrix}
	\end{figure}
	More generally, if $V$ and $W$ are arbitrary $Y(\mathfrak{g})$-modules, we may represent them by different lines and identify $(\pi_V\otimes\pi_W)(\mathcal{R})=: R_{VW}$ with any positive oriented vertex between them. We thus obtain the Yang--Baxter equation (YBE) $R_{UV}R_{UW}R_{VW} = R_{VW}R_{UW}R_{UV}$ for arbitrary $Y(\mathfrak{g})$-modules $U$, $V$ and $W$. It is depicted in Figure \ref{fig:YBE_UVW}.
	\begin{figure}[H]
		\centering
		\includegraphics[scale = .6, trim = 5cm 6cm 6cm 6cm]{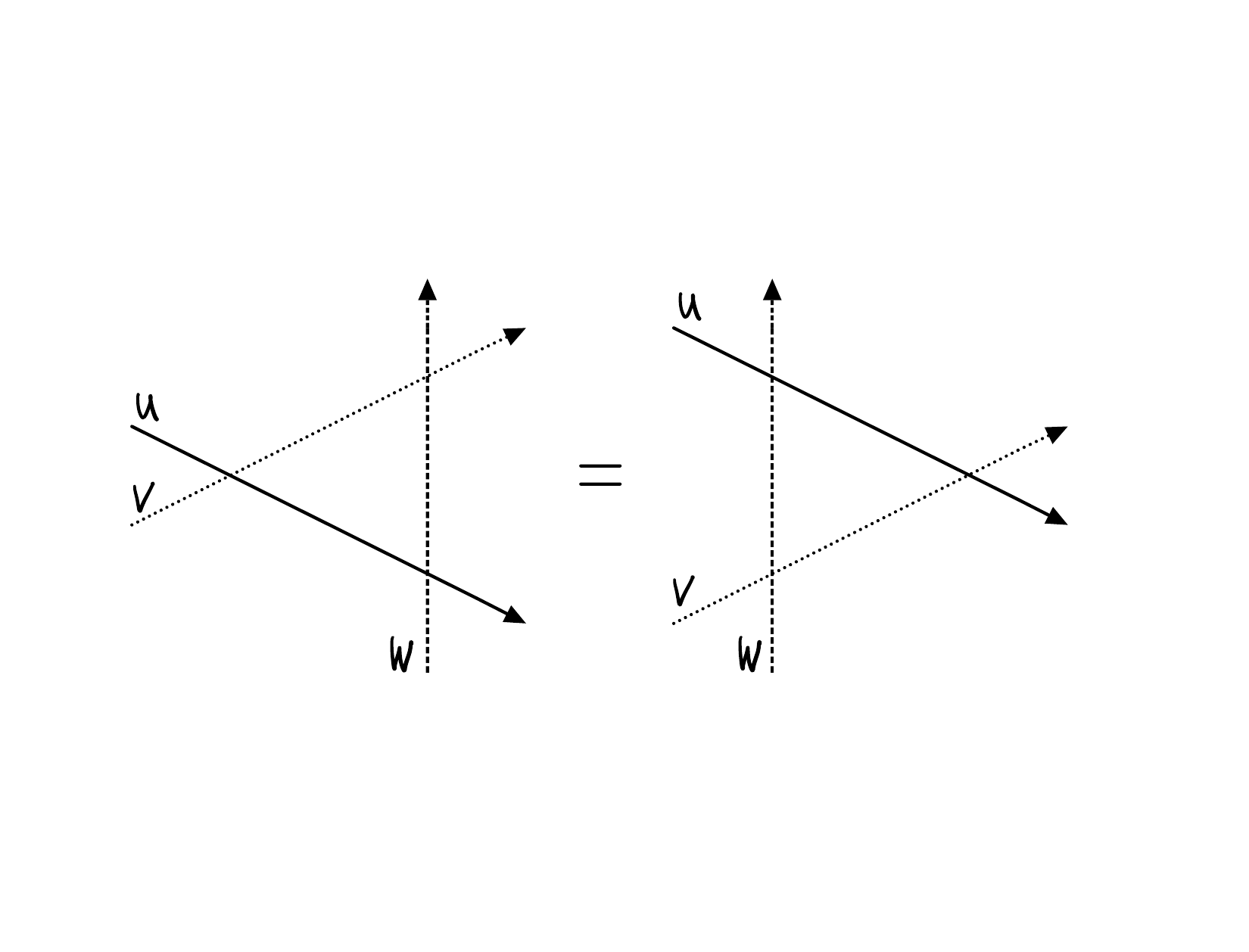}
		\caption{The Yang--Baxter equation (YBE).}
		\label{fig:YBE_UVW}
	\end{figure}
	Let us also introduce the singlet, which we can use to connect a fundamental representation $V_1$ (black) with it's dual (antifundamental) representation $\overline{V}_{\bar{1}}$ (blue).\footnote{ Of course, a singlet in the tensor product of any representation with its dual can be used in this way.} We denote it by a cross. It is depicted in figure \ref{fig:Singlet_and_P_minus} in the top left, whereas its dual vector is depicted in the bottom left. Considering the tensor products between them, we obtain $(n+1)P^-_{1,\bar{1}}$ and $(n+1)P^-_{\bar{1},1}$ (second and third picture from the left in figure \ref{fig:Singlet_and_P_minus}), where $P^-_{1,\bar{1}}$ and $P^-_{\bar{1},1}$ are the projectors onto the singlet in $V_1\otimes \overline{V}_{\bar{1}}$ and $\overline{V}_{\bar{1}}\otimes V_1$, respectively.\footnote{ This also fixes the normalization of the singlet.} The third and fourth picture in figure \ref{fig:Singlet_and_P_minus} show $(n+1)P^-_{1,\bar{1}}P_{\bar{1},1}$ and $(n+1)P^-_{\bar{1},1}P_{1,\bar{1}}$, where in this case $P_{a,b}$ is the permutation operator $V_a\otimes V_b\to V_b\otimes V_a:\;a\otimes b\mapsto b\otimes a$.
	\begin{figure}[H]
		\centering
		\includegraphics[scale = .6, trim = 5cm 8.5cm 6cm 8.5cm]{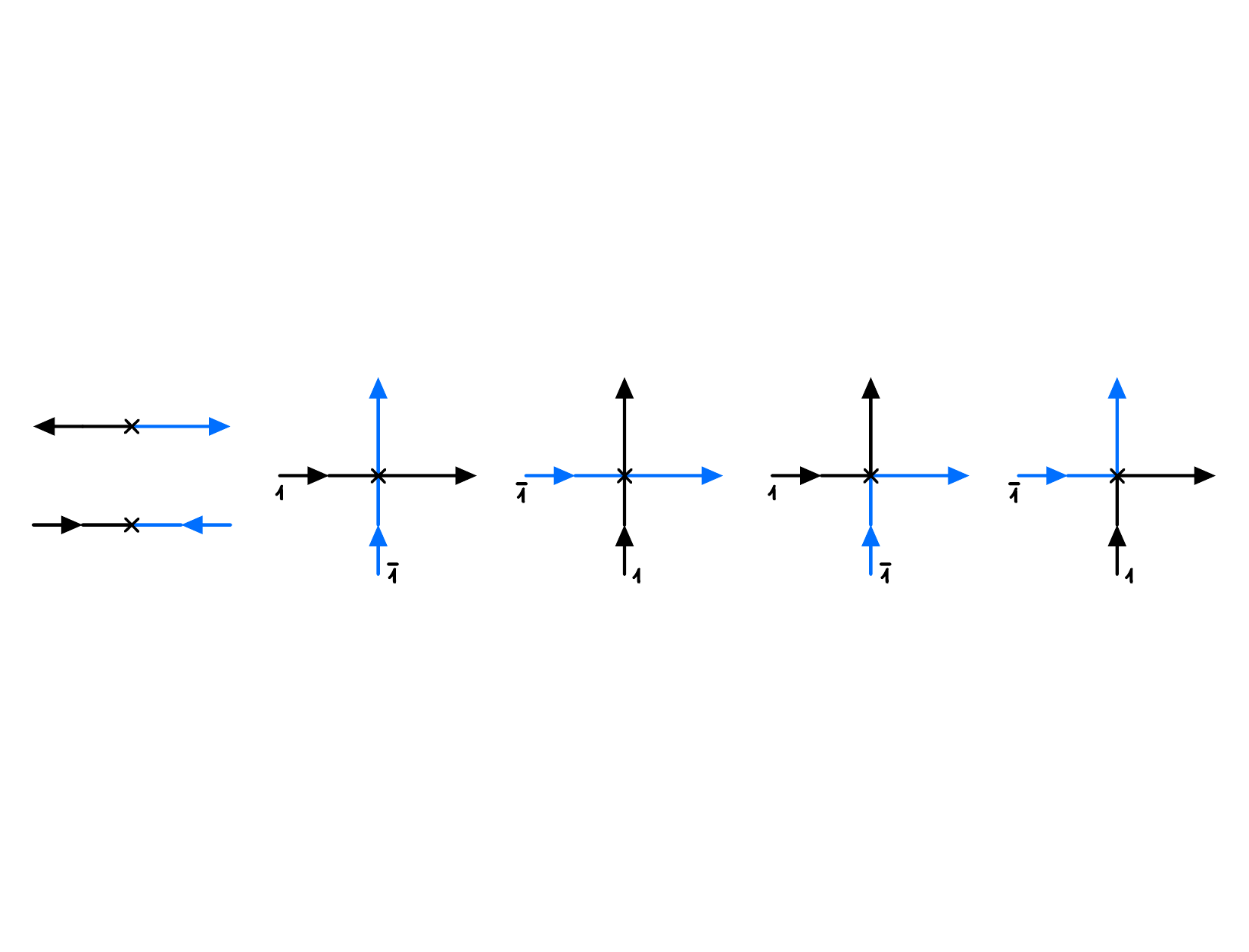}
		\caption{The Singlet and the projector $P^{-}$}
		\label{fig:Singlet_and_P_minus}
	\end{figure}
	To close the subsection let's remark that we use hooks to connect the opposite ends of a line (cf. section \ref{subsect:rat_six_vert_and_density_matrix}). It should be clear that a closed loop canonically corresponds to taking the trace in the corresponding space - "a (closed) line with no open ends corresponds to a constant".
	\subsection{The rational $R$-matrix of type $A_n$}
	\label{def:rational_R-matrix}
	Let $\mathfrak{g}$ be the finite-dimensional simple Lie algebra of type $A_n$ over $\mathbb{C}$ (i.e. $\mathfrak{g}= \mathfrak{sl}_{n+1}(\mathbb{C})$) and denote by $I = \{1,\dots,n\}$ the set of vertices in its Dynkin diagram. Let $\omega_i$ ($i\in I$) be the set of fundamental weights and $\mathcal{P}=\mathbb{Z}\{\omega_i: \, i\in I\}$ be the weight lattice. We use the Khoroshkin--Tolstoy formula \cite{BGKNR} for the (pseudo-)universal $R$-matrix of $U_q(\hat{\mathfrak{g}})$ and take the rational limit $q\to 1$ to define the rational $R$-matrix for two fundamental representations $V(\lambda)$ and $V(\mu)$ with fundamental weight $\omega_1$ as in \cite{BHN}.
	\begin{definition}[the rational $R$-matrix]
		We obtain the rational $R$-matrix $R(\lambda-\mu)\in \End(V(\lambda)\otimes V(\mu))$
		\begin{align}
			R(\lambda-\mu) = \frac{\rho(\lambda-\mu)}{\lambda-\mu+1}\left((\lambda-\mu)1+P\right),\quad\rho(\lambda) = -\frac{\Gamma(\frac{\lambda}{n+1})\Gamma(\frac{1}{n+1}-\frac{\lambda}{n+1})}{\Gamma(-\frac{\lambda}{n+1})\Gamma(\frac{1}{n+1}+\frac{\lambda}{n+1})},
			\label{eqn:Rff}
		\end{align}
		where $1$ is the identity and $P$ is the permutation operator ($P(a\otimes b)= b\otimes a$).\footnote{ By identifying $V(\lambda)$ and $V(\mu)$ as representations of $U(\mathfrak{g}) \lhook\joinrel\xrightarrow{\iota} Y(\mathfrak{g})$} $\odot$
	\end{definition}
	In this sense, we may therefore just write $V$ instead of $V(\lambda)$ and in particular $R\in \End(V\otimes V)$, whenever the spectral parameters of the corresponding lines are clear from the context.
	Of course, since $V(\lambda)\otimes V(\mu)$ is an irreducible $Y(\mathfrak{g})$-module for general $\lambda$ and $\mu$, this definition coincides with the direct definition above through the (pseudo-)universal $R$-matrix of the Yangian up to a scalar factor. To be more precise, let us close this part with the following remark on the scalar prefactor $\rho(\lambda)$.
	\begin{rem}
		The scalar prefactor $\rho(\lambda)$ satisfies the functional relations
		\begin{align}
			\rho(\lambda)\rho(-\lambda)=1,\qquad\rho(\lambda)\rho(n+1-\lambda)=\frac{\lambda(\lambda-(n+1))}{(\lambda-1)(\lambda-n)}.
		\end{align}
		As stated above and in \cite{BHN}, it can be regarded as a quasi-classical or rational limit $q\to1$ of the Khoroshkin--Tolstoy formula \cite{BGKNR}. Alternatively, it can be obtained using the algebraic structure of the Yangian double of $\mathfrak{sl}_{n+1}$ \cite{KT}. $\odot$
	\end{rem}
	\subsection{The dual modules and crossing symmetry}
	\label{subsect:dual modules and crossing}
	For our discussion in this paper, it is essential to provide a good understanding of the dual module $V^*$ (and $^*V$) of $V$, where $V$ is assumed to be either a representation of $Y(\mathfrak{g})$ or $U_q(\tilde{\mathfrak{g}})$. So let $\mathcal{A}$ be either $Y(\mathfrak{g})$ or $U_q(\tilde{\mathfrak{g}})$ and $V$ be a finite dimensional (type 1) representation of $\mathcal{A}$. Since $V$ is a left module, the dual space $V^\star$ is obviously a right module. It can be made into a left module in two ways via the antipode $S$ as follows. We define the dual module $V^*$ to be the left module with the module operation of $a\in\mathcal{A}$ given by $$\br{av|w}=:\br{v|S(a)w},\quad v\in V^\star,\quad w\in V,$$ where $\br{\cdot|\cdot}$ is the dual pairing. For the other dual module $^*V$ we simply replace $S$ by $S^{-1}$ (cf. \cite{NR} for the case of $U_q(\tilde{\mathfrak{g}})$). Since the antipode $S$ is an anti-automorphism of $\mathcal{A}$, so is $S^{-1}$. Thus $V^*$ and $^*V$ are left modules and we call $V^*$ and $^*V$ the dual modules. Since the $\tau_a$ given by (\ref{eqn:tau_a_yangian}) and (\ref{eqn:tau_a_q-deformed}) are Hopf algebra automorphisms, they commute with the action of the antipode $S$, i.e. $S\circ\tau_a=\tau_a\circ S$. We can therefore just write $V^*(\lambda):=\tau_\lambda^*(V^*)= (\tau_\lambda^*(V))^*=(V(\lambda))^*$ (and $^*V(\lambda)$). Now let $\mathcal{R}$ be the (pseudo-)universal $R$-matrix \footnote{ As the Yangian and the quantum affine algebras are not quasitriangular \cite{CPBook}} of $\mathcal{A}$. Using $(S\otimes\id)(\mathcal{R})=\mathcal{R}^{-1}$ and the transposition map $t:\End(V)\to\End(V^\star)$ defined for $M\in\End(V)$ by $$\br{M^tv|w}=\br{v|Mw},\quad v\in V^\star,\quad w\in V,$$ we come to the equation
	\begin{align}
		(\pi_{V^*(\lambda)}\otimes\pi_{V(\mu)})(\mathcal{R})=(\pi_{V(\lambda)}\otimes\pi_{V(\mu)})(\mathcal{R}^{-1})^{t_1},
		\label{eqn:crossing}
	\end{align}
	where $t_1=t\otimes\id$. Equations of the form (\ref{eqn:crossing}) that include $\mathcal{R}$, $\mathcal{R}^{-1}$ and the transpositions $t_1=t\otimes1$ or $t_2=1\otimes t$ are called \textit{crossing relations}. They are obtained by using either $(S\otimes\id)(\mathcal{R})=\mathcal{R}^{-1}$ or $(\id\otimes S^{-1})(\mathcal{R})=\mathcal{R}^{-1}$ and the definition of the dual modules (cf. \cite{NR}).
	
	However, in the case when $\mathcal{A}$ is the Yangian, $\mathfrak{g}$ is of type $A_n$ and $V$ a representation of $\mathfrak{g}$ \footnote{Of course, any representation of the Yangian is naturally a representation of $\mathfrak{g}\lhook\joinrel\xrightarrow{\iota} Y(\mathfrak{g})$}, there is another unique way to define a dual representation of $V(\lambda)=\operatorname{ev}_\lambda^*(V)$. Since the antipode of $\mathfrak{g}$ \footnote{$S(x)=-x$, for all $x\in\mathfrak{g}$.} is its own inverse, there is only one dual representation $V^*$ of $\mathfrak{g}$ and we can define $V^\circledast(\lambda):=\operatorname{ev}_\lambda^*(V^*)$. Note that this definition is different from the definition above since $\operatorname{ev}_a$ is not a Hopf algebra homomorphism. Using the relation $\operatorname{ev}_a\circ S = S\circ \operatorname{ev}_{a-\frac{c}{4}}$, where $c$ is the eigenvalue of the Casimir element of $\mathfrak{g}$ \footnote{It is defined through the standard invariant bilinear form on $\mathfrak{g}$.} in the adjoint representation, we find the relation $V^\circledast(\lambda) = V^*(\lambda-\frac{n+1}{2}) ={}^*V(\lambda+\frac{n+1}{2})$, where we have used that $c=2(n+1)$ for $\mathfrak{g}=L(A_n)=\mathfrak{sl}_{n+1}$.
	
	On one side, for $\mathfrak{g}=\mathfrak{sl}_2$ we have $V^*\cong V$ and therefore $V^\circledast(\lambda)\cong V(\lambda)$ compared to $V^*(\lambda)\cong V(\lambda+1)$ (and $^*V(\lambda)\cong V(\lambda-1)$). On the other side, the square of the antipode and $\tau_a$ are related through $S^2=\tau_{\frac{c}{2}}$ in general. Thus we have $V^{\circledast\circledast}(\lambda)\cong V(\lambda)$, $V^{**}(\lambda)\cong V(\lambda+\frac{c}{2})$ and $^{**}V(\lambda)\cong V(\lambda-\frac{c}{2})$ for the double duals. Similarly, in the case when $\mathcal{A}$ is the quantum affine algebra $U_q(\tilde{\mathfrak{g}})$, we have $S^2=\tau_{q^{c}}\circ \text{Ad}_{q^x}$ (cf. \cite{NR} equation (2.59)), where $x= 2\nu^{-1}(\rho)$, $\rho = \sum_{i=1}^{n}\omega_i$ and $\nu$ is the isomorphism from $\mathfrak{g}$ to $\mathfrak{g}^\star$ by the standard invariant bilinear form and therefore $V^{**}(\lambda)\cong V(\lambda q^{c})$ and $^{**}V(\lambda)\cong V(\lambda q^{-c})$. If $\mathfrak{g}=\mathfrak{sl}_2$ we have $V^*(\lambda)\cong V(\lambda q^{2})$ (and $^*V(\lambda)\cong V(\lambda q^{-2})$). Note that the spectral parameters of the Yangian are additive, whereas the spectral parameters of the quantum affine algebras are multiplicative in our definition. Hence, we can write down the following definition.
	\begin{definition}
		\label{def:dual_rep}
		Let $V$ be a finite dimensional (type 1) representation of $Y(\mathfrak{g})$ (respectively $U_q(\tilde{\mathfrak{g}})$). Then, we denote by $V^\circledast$ the unique dual representation of $\mathfrak{g}\lhook\joinrel\xrightarrow{\iota} Y(\mathfrak{g})$ (respectively $U_q(\mathfrak{g})\lhook\joinrel\xrightarrow{\iota} U_q(\tilde{\mathfrak{g}})$) such that $V^{\circledast\circledast}\cong V$ as a representation of $\mathcal{A}$. As we have seen above, it is $V^\circledast=\tau^*_{-\frac{c}{4}}V^* = \tau^*_{\frac{c}{4}}{}^*V$ (respectively $V^\circledast\cong\tau^*_{q^{-c/2}}V^* \cong \tau^*_{q^{c/2}}{}^*V$). $\odot$
	\end{definition}
	\begin{rem}
		In fact, from the point of view of the finite-dimensional representation theory $V$ and $V^\circledast$ are characterized by the loop weights with corresponding loop parameters and their Drinfeld polynomials have the same roots (see \cite{CPBook} Ch. 12). $\odot$
	\end{rem}
	Using this definition, we can define the crossing transforms of the $R$-matrix as follows.
	\begin{definition}[the crossing transforms]
		\label{def:crossing_transforms}
		Let $V$ be a finite dimensional (type 1) representation of $Y(\mathfrak{g})$ (respectively $U_q(\tilde{\mathfrak{g}})$). Let $\mathcal{R}$ be the (pseudo-)universal $R$-matrix of $Y(\mathfrak{g})$ (respectively $U_q(\tilde{\mathfrak{g}})$). Let $R(\lambda,\mu):=(\pi_{V(\lambda)}\otimes\pi_{V(\mu)})(\mathcal{R})$. Then we define the crossing transforms of $R$ as
		\begin{align}
			R^\circledast(\lambda|\mu):=(\pi_{V^\circledast(\lambda)}\otimes\pi_{V(\mu)})(\mathcal{R})\quad\text{and}\quad R^{\circledast\circledast}(\lambda|\mu):=(\pi_{V(\lambda)}\otimes\pi_{V^\circledast(\mu)})(\mathcal{R}).
		\end{align}
		In the case of the Yangian we have
		\begin{align}
			\notag
			&R^\circledast(\lambda|\mu)= R^\circledast(\lambda-\mu)=(\pi_{V^\circledast(\lambda)}\otimes\pi_{V(\mu)})(\mathcal{R}) = (\pi_{V(\lambda-\frac{c}{4})}\otimes\pi_{V(\mu)})(\mathcal{R}^{-1})^{t_1}\quad\text{and}\\
			\notag
			&R^{\circledast\circledast}(\lambda|\mu)=R^{\circledast\circledast}(\lambda-\mu)=(\pi_{V(\lambda)}\otimes\pi_{V^\circledast(\mu)})(\mathcal{R})=(\pi_{V(\lambda)}\otimes\pi_{V(\mu+\frac{c}{4})})(\mathcal{R}^{-1})^{t_2}.
		\end{align}
		In the quantum affine case we get
		\begin{align}
			\notag
			&R^\circledast(\lambda|\mu)= R^\circledast(\lambda/\mu)=(\pi_{V^\circledast(\lambda)}\otimes\pi_{V(\mu)})(\mathcal{R}) = (\pi_{V(\lambda q^{-c/2})}\otimes\pi_{V(\mu)})(\mathcal{R}^{-1})^{t_1}\quad\text{and}\\
			\notag
			&R^{\circledast\circledast}(\lambda|\mu)=R^{\circledast\circledast}(\lambda/\mu)=(\pi_{V(\lambda)}\otimes\pi_{V^\circledast(\mu)})(\mathcal{R})=(\pi_{V(\lambda)}\otimes\pi_{V(\mu q^{c/2})})(\mathcal{R}^{-1})^{t_2}.
		\end{align}
		$\odot$
	\end{definition}
	Note that there are only two crossing transforms in this case since $V^{\circledast\circledast}\cong V$.
	\begin{definition}[charge conjugation operator]
		Let $V$ be an irreducible finite dimensional (type 1) representation of $Y(\mathfrak{g})$ (respectively $U_q(\tilde{\mathfrak{g}})$). Let $e_i$, $i=0,\dots,m$, be an ordered basis of $V$ such that $e_i$ is a weight vector of weight $\lambda_i\geq\lambda_{i+1}$. Then we define the charge conjugation operator $C=C^{-1}\in\End(V)$ by mapping $e_i$ to $e_{m-i}$, i.e. reversing the order of the basis.
		For instance if $V$ is the fist fundamental representation and $\mathfrak{g}$ of type A, the highest weight vector $e_0$ of weight $\omega_1$ is mapped to the lowest weight vector $e_n$ of weight $-\omega_n$ and vice versa. On its dual space $V^\star$ the dual basis $e_i^\star\eqcolon e^i$ ($\br{e_i^\star|e_j}=\delta^i_j$ \footnote{$\delta$ is the dual map or dual pairing as above.}) of weight $-\lambda_i<-\lambda_{i+1}$ is mapped to $e^\star_{n-i}=e^{n-i}$, $i=0,\dots,n$. Then we have again an ordered basis $\bar{e}_i \coloneq Ce_i^\star=e^\star_{n-i}=e^{n-i}$ such that $\bar{e}_0$ is the highest and $\bar{e}_n$ is the lowest weight vector of weight $\omega_n$ and $-\omega_1$, respectively. However, by abuse of notation we will use the same symbol $C$ in any case. $\odot$
	\end{definition}
	Finally, we can define the (rational) $R$-matrix acting on the tensor product of fundamental and antifundamental representations of $\mathfrak{sl}_{n+1}\lhook\joinrel\xrightarrow{\iota}Y(\mathfrak{sl}_{n+1})$ using its crossing transforms.
	\begin{definition}[$R$-matrix for fundamental and antifundamental representations]
		\label{def:r_matrix_for_f_and_fbar}
		Let $\mathfrak{g}:=\mathfrak{sl}_{n+1}$ and $V$ be the fundamental representation of $\mathfrak{g}$ with highest weight $\omega_1$. Taking into account the fact that highest weight vectors are mapped to lowest weight vectors for the dual representations, we use the charge conjugation operator $C$ to define the $R$-matrix acting on the tensor products of fundamental and antifundamental representations via its crossing transforms (def. \ref{def:crossing_transforms}) similarly to \cite{BHN}, i.e. a change of basis.
		\begin{align}
			\label{eqn:R_bar_fbarf}
			&\bar{R}(\lambda) := (1\otimes C) R^{\circledast\circledast}(\lambda) (1\otimes C^{-1}) = (1\otimes C) \left(R(-\lambda-\frac{n+1}{2})\right)^{t_2} (1\otimes C^{-1})\quad\text{and}\\
			\label{eqn:R_barbar_barff}
			&\bar{\bar{R}}(\lambda) := (C\otimes 1) R^{\circledast}(\lambda) (C^{-1}\otimes 1) = (C\otimes 1) \left(R(-\lambda-\frac{n+1}{2})\right)^{t_1} (C^{-1}\otimes 1),
		\end{align}
		$\odot$
	\end{definition}
	Using this definition and the explicit form of the rational $R$-matrix in definition \ref{def:rational_R-matrix}, we get the expression
	\begin{align}
		\label{eqn:Rbar_fbarf_and_Rbarbar_barff}
		\bar{R}(\lambda)=\frac{\bar{\rho}(\lambda)}{\lambda+\frac{n-1}{2}}((\lambda+\frac{n+1}{2})1-\widetilde{C}\otimes \widetilde{C})=\bar{\bar{R}}(\lambda),
	\end{align}
	where $\bar{\rho}(\lambda):=\rho(\lambda+\frac{n+1}{2})^{-1}$, $\widetilde{C}$ is the morphism from either $V\otimes V^\star\to \mathbb{C}$ or $\mathbb{C}\to V^\star\otimes V$ obtained from $C$ by the natural isomorphism $\sim$. As for $C$, by abuse of notation, we use the same symbol $\widetilde{C}$ for the morphisms from $V^\star\otimes V\to \mathbb{C}$ and $\mathbb{C}\to V\otimes V^\star$ such that $\widetilde{C}\otimes \widetilde{C}\in \End(V\otimes V^\star)$ or $\End(V^\star\otimes V)$. Note that $\bar{R}$ and $\bar{\bar{R}}$ can't be equal as operators, since they act on different spaces, but the operators $1$ (identity) and $\widetilde{C}\otimes \widetilde{C}$ are defined on either $V\otimes V^\star$ or $V^\star\otimes V$ as explained.\footnote{ Roughly speaking, we use the fact that the maps $1$ and $C$ are functors on appropriate categories\\ such that $\bar{R}(\lambda)$ and $\bar{\bar{R}}(\lambda)$ are given in terms of the same (endo)functors, but evaluated on different objects.} Note also that the spectral parameters are slightly shifted compared to \cite{BHN}.
	\subsection{The rational six vertex model and its reduced density matrix}
	\label{subsect:rat_six_vert_and_density_matrix}
	Using the rational $R$-matrix (\ref{eqn:Rff}) and our graphical notation, we can define the partition function of the \textit{rational $\mathfrak{sl}_{n+1}$-invariant model} and its reduced density matrix $D_m$. Let the rational $\mathfrak{sl}_{n+1}$-invariant model be defined on a lattice of the size $L\times M$. As we are interested in the thermodynamic limit when $L,M\to\infty$, we may consider periodic boundary conditions and an even number of horizontal lines $M=2N$, i.e., our model is defined on the square lattice with vertices on the torus $\mathbb{T}_{L,2N}=\mathbb{Z}/L\times \mathbb{Z}/2N$. We can now define the partition function of the model by giving every horizontal and vertical line of the lattice an orientation and a spectral parameter, i.e., fixing the representation $V(\lambda)$ corresponding to a line by making use of our graphical notation defined above. Then the partition function is graphically represented by the lattice itself with oriented horizontal and vertical lines and corresponding spectral parameters. Let's now fix one orientation for the vertical lines, let's say "up" and introduce a staggering for the orientation of the horizontal lines. This is useful for the definition of temperature and the correspondence to the density operator $\rho = \exp(-H_L/T)$ of the $SU(n+1)$ spin chain with periodic boundary conditions (p.b.c.) in the limit $N\to\infty$. To be more precise, define the monodromy matrices
	$T_{a}(\lambda;(\mu_i)_1^L):= R_{a,L}(\lambda-\mu_L)R_{a,L-1}(\lambda-\mu_{L-1})\cdots R_{a,1}(\lambda-\mu_1)$ and $\overline{T}_{a}(\lambda;(\mu_i)_1^L):=T^{-1}_{a}(\lambda;(\mu_i)_1^L)= R_{1,a}(\mu_1-\lambda)R_{2,a}(\mu_{2}-\lambda)\cdots R_{L,a}(\mu_L-\lambda)$ as in Figure \ref{fig:monodromy_matrices}. Then, the transfer matrices $t$ and $\overline{t}$ are defined through the traces of the monodromy matrices $t(\lambda;(\mu_i)_1^L) := \tr_a(T_{a}(\lambda;(\mu_i)_1^L))$ and $\overline{t}(\lambda;(\mu_i)_1^L) := \tr_a(\overline{T}_{a}(\lambda;(\mu_i)_1^L))$ over the auxiliary space $a$. This is how the spaces corresponding to the horizontal lines are called, whereas the spaces corresponding to vertical lines are called the quantum space. Setting the parameters $\mu_i$, $i=1,\dots,L$, of the vertical lines to zero and the parameters for the $N$ horizontal lines pointing towards the right (resp. left) to $\frac{\beta}{2N}$ (resp. $-\frac{\beta}{2N}$), we have
	\begin{align*}
		\exp\left(-\beta H_L\right) = \lim_{N\to\infty}\left(t\left(-\frac{\beta}{2N}\right)\overline{t}\left(\frac{\beta}{2N}\right)\right)^N
	\end{align*}
	in operator norm on the space $V^{\otimes L}$, where $H_L$ is the Hamiltonian of the $SU(n+1)$ spin chain of length $L$ with p.b.c. and $\beta = 1/T$ the inverse temperature. In this case we omit writing the dependence on the $\mu_i=0$, $i=1,\dots,L$.
	\begin{figure}[H]
		\centering
		\includegraphics[scale = .425, trim = 5.5cm 5cm 5cm 5cm]{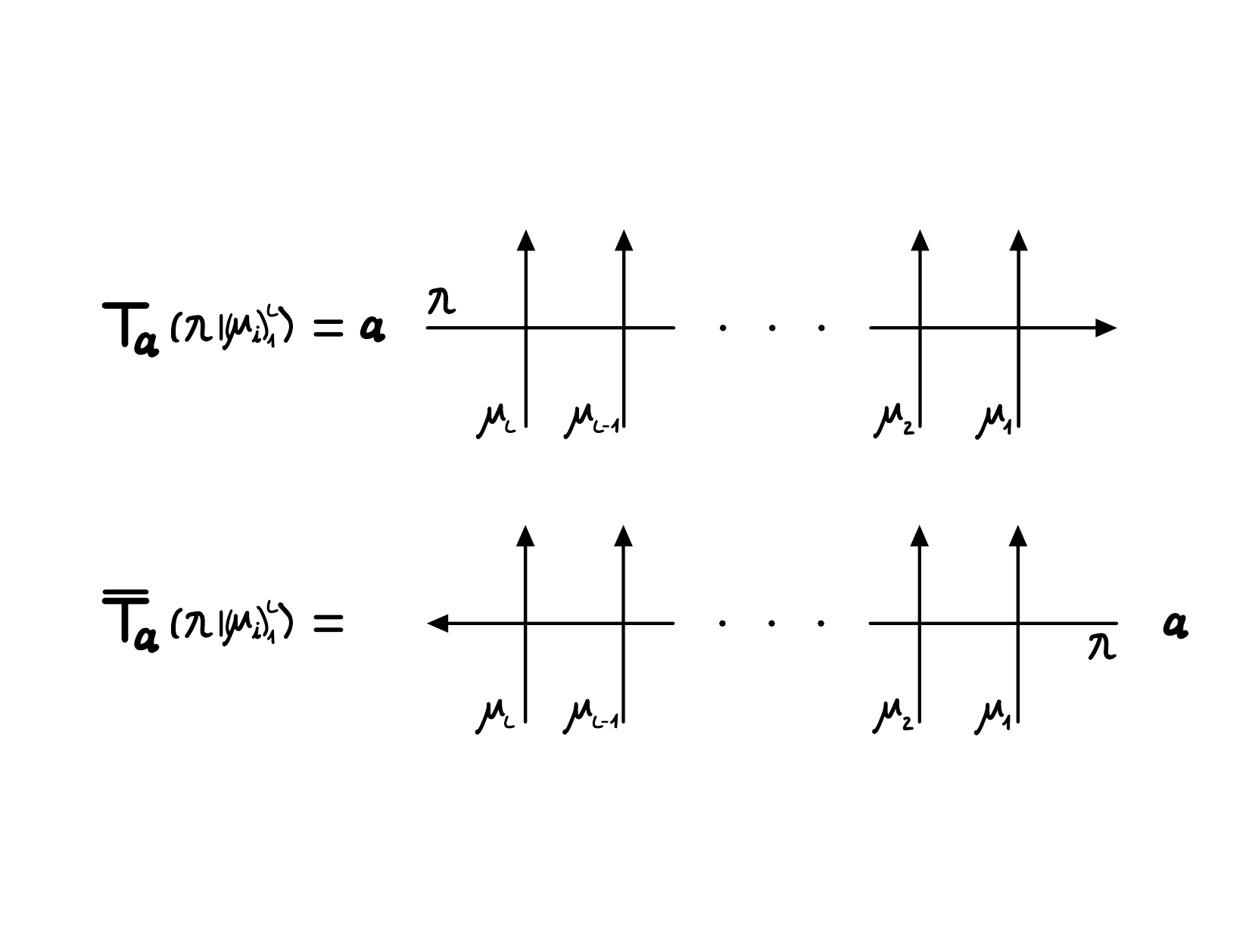}
		\caption{The monodromy matrices $T_{a}(\lambda;(\mu_i)_1^L)$ and $\overline{T}_{a}(\lambda;(\mu_i)_1^L)$.}
		\label{fig:monodromy_matrices}
	\end{figure}
	The partition function of this model is depicted in figure \ref{fig:partition_function}, where we introduce hooks to show that a line forms a closed loop ,i.e. we take the trace over the corresponding space.
	\begin{figure}[H]
		\centering
		\includegraphics[scale = .425, trim = 5cm 3cm 6cm 2cm]{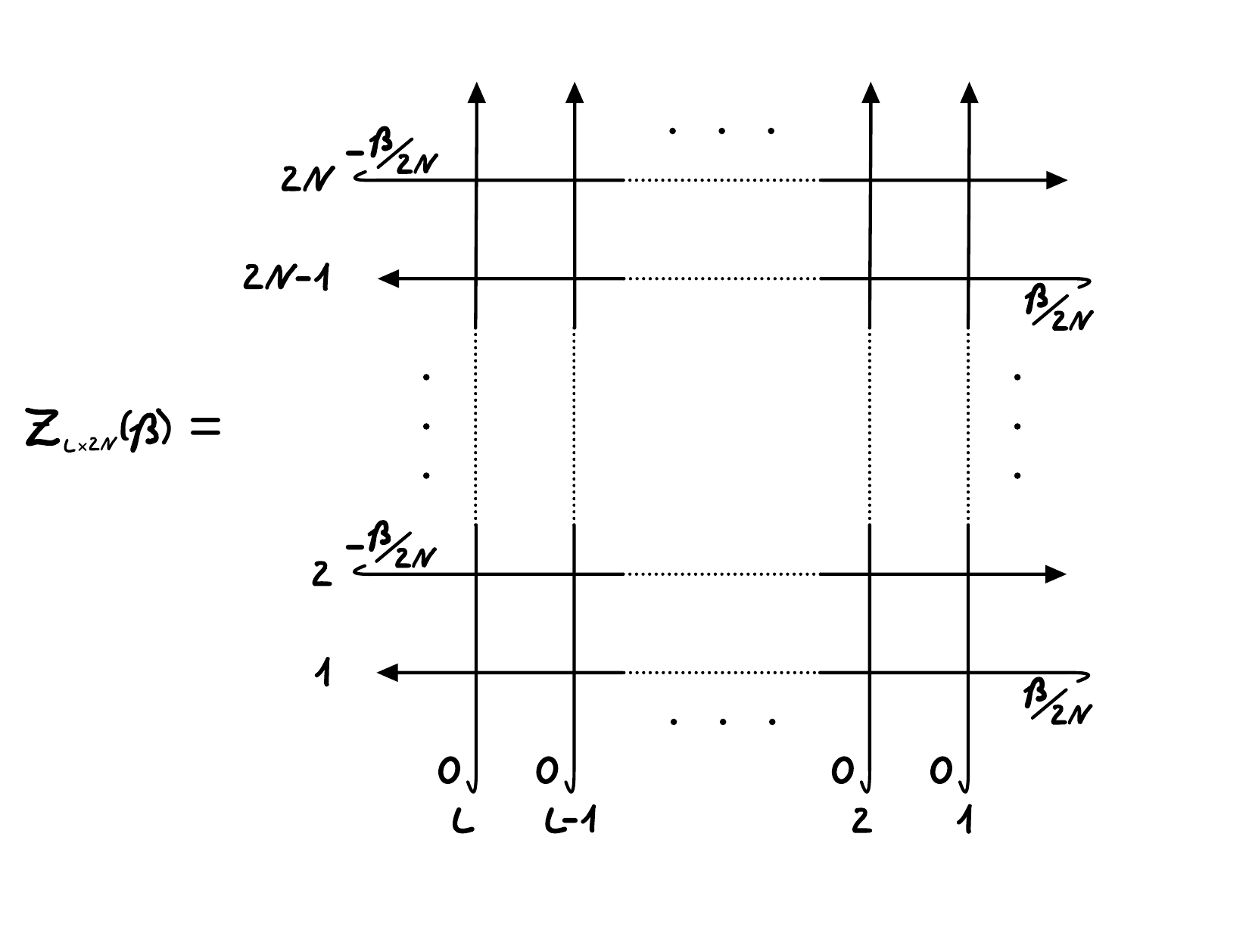}
		\caption{Partition function of the $L\times 2N$ staggered vertex model.}
		\label{fig:partition_function}
	\end{figure}
	It is the trace over the spaces $1,\dots,L$ of the product of the transfer matrices $$Z_{L\times 2N}(\beta) = \tr_{1,\dots,L}\left[\left(t\left(-\frac{\beta}{2N}\right)\overline{t}\left(\frac{\beta}{2N}\right)\right)^N\right].$$ Note that in the limit $N\to\infty$ we obtain the partition function of the $SU(n+1)$ spin chain $Z_L(\beta)= \tr_{1,\dots,L}[\exp(-\beta H_L)]$.
	\begin{definition}
		\label{def:gen_density_matrix}
		Let $\mathbb{N} \ni m\leq L$ and consider the observables $O \in \End(V^{\otimes m})$. We define the (generalized) reduced density matrix $D_m(\mu_1,\dots,\mu_m) \in \left(\End(V^{\otimes m})\right)^\star$ by
		\begin{align}
			\notag &D_m(\mu_1,\dots,\mu_m)(O) :=\frac{1}{Z_{L\times 2N}(\beta)}\times \\ &\tr_{1,\dots,L}\left[O_{1,\dots,m}\left(t\left(-\frac{\beta}{2N};\mu_m,\dots,\mu_1,0,\dots,0\right)\overline{t}\left(\frac{\beta}{2N};\mu_m,\dots,\mu_1,0,\dots,0\right)\right)^N\right],
		\end{align}
		where we identify $O\in\End(V^{\otimes m})$ with $O_{1,\dots,m}\in\End(V^{\otimes L})$ by 
		\begin{align*}
			\End(V^{\otimes m})\lhook\joinrel\xrightarrow{\iota_{mL}}\End(V^{\otimes L}): 	O\mapsto O_{1,\dots,m} := O\otimes 1^{\otimes (L-m)},
		\end{align*}
		where $1\in \End(V)$ is the identity operator. $\odot$
	\end{definition}
	\begin{rem}
		Using the periodicity it should be clear that every observable $O\in \End(V^{\otimes L})$ that acts non trivially on a segment $V^{\otimes m}$ of length $m\leq L$ can be identified with an element $O\in \End(V^{\otimes m})$. In this sense, we have a morphism $D$ which represents the density matrix for every $m\in\mathbb{N}$. It is the unique morphism defined on the direct limit $\varinjlim\End(V^{\otimes m})$ \footnote{To be precise, we should replace $\End(V^{\otimes m})$ with $\End(V^{\otimes L})$ for all $m\geq L$ in the case when $L$ is kept finite.}. We write $D=:\varinjlim D_m$. $\odot$
	\end{rem}
	\begin{rem}
		Note that we choose the numbering of the parameters $\mu_i$, $i=1,\dots,m$, from right to left as in \cite{BHN}. This is of course a matter of taste but it may be easier to have a corresponding notation. We also choose the numbering of the spaces according to the parameters from right to left. However, setting the parameters $\mu_i$, $i=1,\dots,m$, to zero we obtain the reduced density matrix $D_m$ of the $L\times 2N$ staggered vertex model with temperature $T$ (cf. figure \ref{fig:reduced_density_finite}). Furthermore, by abuse of notation we may also write $D_m(\mu_1,\dots,\mu_m)$ in any of the limits $N\to\infty,\,L \to \infty$ and $T\to 0$. $\odot$
	\end{rem}
	Let us now try to provide a good understanding of the reduced qKZ equations on the infinite lattice ($L,N\to\infty$). We put the upper index $(0)$ if we refer to the usual density matrix $D_m=:D_m^{(0)}$ and introduce an additional density matrix $D_m^{(1)}$ where the vertical line with parameter $\mu_1$ is replaced by an antifundamental line with the same parameter. It is depicted in figure \ref{fig:reduced_density_antifundamental}. In the finite lattice case, we obtain the difference equations derived in the appendix of \cite{BHN}. They are given by
	\begin{align}
		&\notag D^{(0)}_m(\beta_j,\mu_2,\dots,\mu_m|\beta_1,\dots,\beta_N;\bar{\beta}_1,\dots,\bar{\beta}_N)\left(A^{(1)}_{1,\bar{1}|2,\dots,m}(\beta_j|\mu_2,\dots,\mu_m)\left(X_{\bar{1},2,\dots,m}\right)\right)=\\ &D^{(1)}_m(\beta_j-\frac{n+1}{2},\mu_2,\dots,\mu_m|\beta_1,\dots,\beta_N;\bar{\beta}_1,\dots,\bar{\beta}_N)\left(X_{\bar{1},2,\dots,m}\right)\quad\text{and}\\
		&\notag D^{(1)}_m(\bar{\beta}_j,\mu_2,\dots,\mu_m|\beta_1,\dots,\beta_N;\bar{\beta}_1,\dots,\bar{\beta}_N)\left(A^{(2)}_{\bar{1},1|2,\dots,m}(\bar{\beta}_j|\mu_2,\dots,\mu_m)\left(X_{1,\dots,m}\right)\right)=\\ &D^{(0)}_m(\bar{\beta}_j-\frac{n+1}{2},\mu_2,\dots,\mu_m|\beta_1,\dots,\beta_N;\bar{\beta}_1,\dots,\bar{\beta}_N)\left(X_{1,\dots,m}\right)
	\end{align}
	in general, where the $N$ horizontal lines pointing towards the left have the spectral parameters $\beta_j$, $j=1,\dots,N$, and the $N$ horizontal lines pointing towards the right are replaced by antifundamental lines with parameters $\bar{\beta_j}=\frac{n+1}{2}-\beta_j$ pointing towards the left by means of the crossing symmetry (cf. subsection \ref{subsect:dual modules and crossing}). Setting $\beta_j$ to $\frac{\beta}{2N}$ we come back to the density matrices $D^{(0)}$ and $D^{(1)}$ for the temperature $T=\frac{1}{\beta}$ with homogeneous horizontal parameters as defined above. Taking the limits $N$ and $L$ to infinity carefully, we obtain real difference equations usually referred to as the reduced-qKZ-equation (cf. \cite{KNR} and \cite{BHN}). A precise definition of the operators $A^{(1)}$ and $A^{(2)}$ is given in the appendix \ref{App:A}. For the discussion of the limits we refer to the paper \cite{BHN}.\newpage
	\begin{figure}[H]
		\centering
		\includegraphics[scale = .5, trim = 5.75cm 1.75cm 5cm 2.5cm]{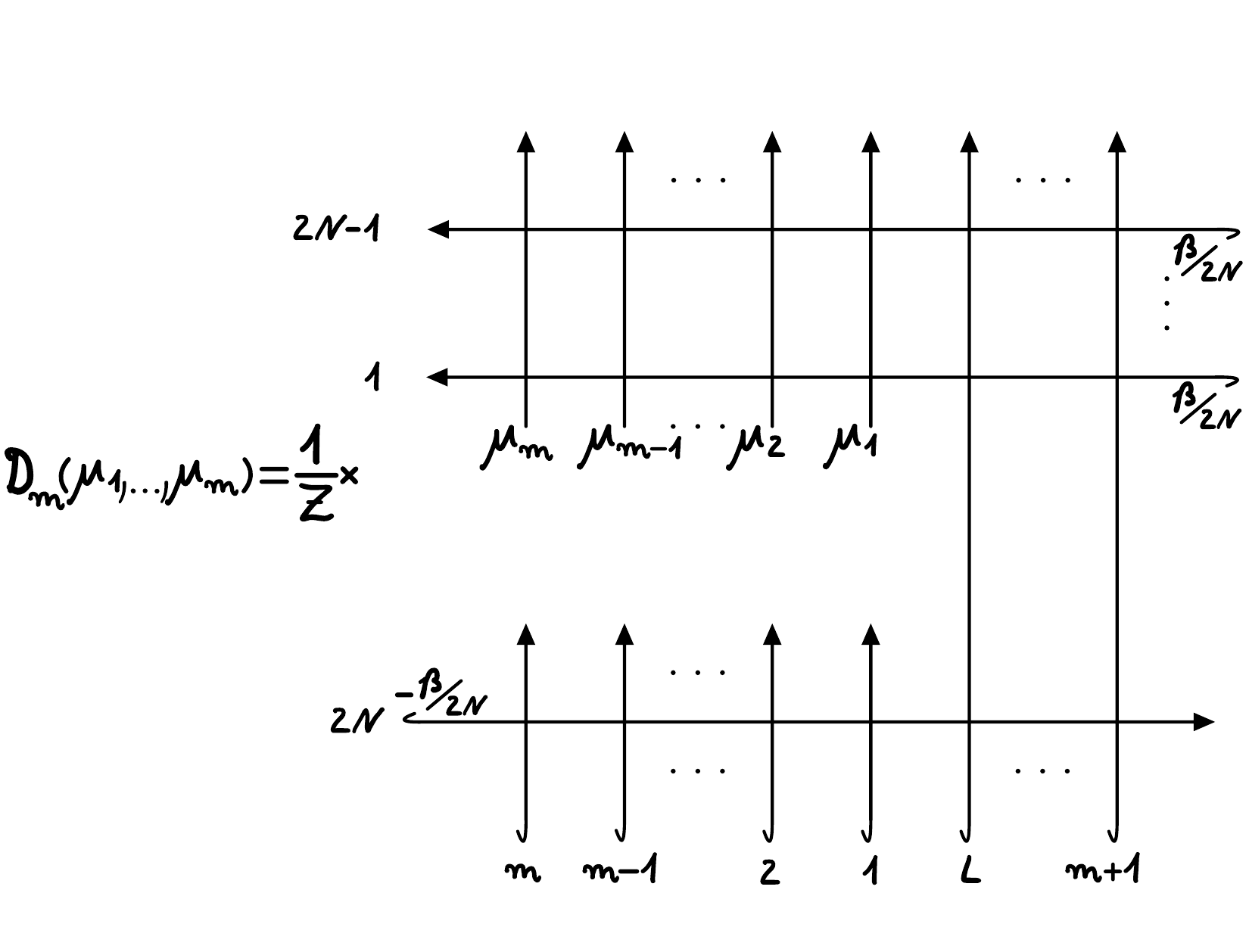}
		\caption{The reduced density matrix $D_m\equiv D_m^{(0)}$ of the $L\times 2N$ staggered six vertex model.}
		\label{fig:reduced_density_finite}
	\end{figure}
	\begin{figure}[H]
		\centering
		\includegraphics[scale = .5, trim = 5.75cm 1.75cm 5cm 3cm]{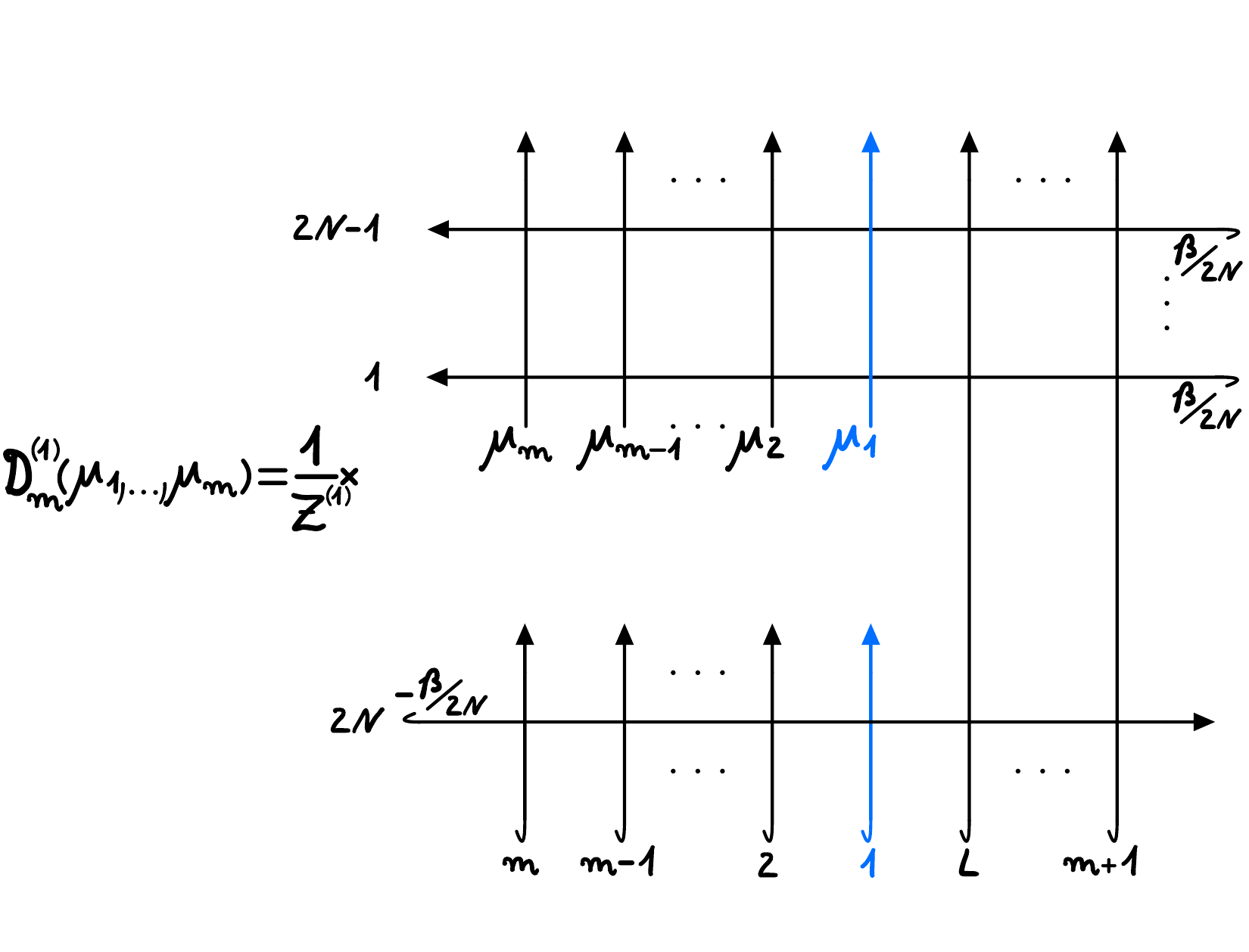}
		\caption{The reduced density matrix $D_m^{(1)}$ of the $L\times 2N$ staggered six vertex model.}
		\label{fig:reduced_density_antifundamental}
	\end{figure}
	\newpage
	\section{The construction for $\mathfrak{sl}_2$}
	\label{sect:sl2snailconstr}
	Now after we have introduced our graphical notation, we are in position to review the construction in the paper \cite{BJMST}. It is based on the fact that the density matrix $D$ satisfies a certain difference equation in the infinite lattice limit, the reduced quantum Knizhnik--Zamolodchikov (rqKZ) equation. Furthermore, $D_m$ satisfies a list of properties \cite{JM}. We write the indices $1,\dots,m$ whenever we need to clarify the corresponding spaces for $D$, i.e. $D_{1,\dots,m}\equiv D_m$.
	\begin{prop}
		\label{prop:properties_of_D_sl2}
		The functional $D$ \footnote{Using the transposition isomorphism $\End(V^{\otimes m})^\star\cong \End(V^{\otimes m})$ we identify $D_m:\End(V^{\otimes m})\to\mathbb{C}$ for any $m\in\mathbb{N}$ (cf. definition \ref{def:gen_density_matrix}) with an element in $\End(V^{\otimes m})$.} possesses the following properties:
		\begin{enumerate}
			\item $D_m$ is invariant under the action of $\mathfrak{sl}_2$.\label{sl2_inv}
			\item $D_m$ satisfies the $R$-matrix relations \label{R_matrix_rel_sl2}
			\begin{align*}
				&D_{1,\dots,i+1,i,\dots,m}(\lambda_1,\dots,\lambda_{i+1},\lambda_i,\dots,\lambda_m) =\\ &R_{i+1,i}(\lambda_{i+1,i})D_{1,\dots,m}(\lambda_1,\dots,\lambda_m)R_{i,i+1}(\lambda_{i,i+1}).
			\end{align*}
			\item $D$ has the left-right reduction property \label{left-right_sl2}
			\begin{align*}
				\operatorname{tr}_1(D_{1,\dots,m}(\lambda_1,\dots,\lambda_m)) &= 	D_{2,\dots,m}(\lambda_2,\dots,\lambda_m)\\
				\operatorname{tr}_m(D_{1,\dots,m}(\lambda_1,\dots,\lambda_m)) &= D_{1,\dots,m-1}(\lambda_1,\dots,\lambda_{m-1})
			\end{align*}
			for all $m\in\mathbb{N}$, where $D_{1,\dots,m-1}(\lambda_1,\dots,\lambda_{m-1}):=1$ for $m=1$.
			\item The rqKZ-equation \label{rqKZ_sl2}
			\begin{align*}
				D_{1,\dots,m}(\lambda_1-1,\lambda_2,\dots,\lambda_m) = A_{\bar{1},1|2,\dots,m}(\lambda_1|\lambda_2,\dots,\lambda_m)(D_{\bar{1},2,\dots,m}(\lambda_1,\lambda_2,\dots,\lambda_m)).
			\end{align*}
			\item $D_{1,..,m}(\lambda_1,\dots,\lambda_m)$ is meromorphic in $\lambda_1,\dots,\lambda_m$ with at most simple poles at $\lambda_i-\lambda_j \in \mathbb{Z}\backslash\{0,\pm 1\}$.\label{analyt_sl2}
			\item For all $0<\delta<\pi$:
			\begin{align*}
				\lim\limits_{\substack{\lambda_1\to \infty \\ \lambda_1\in S_\delta}} D_{1,\dots,m}(\lambda_1,\dots,\lambda_m) = \frac{1}{2}\boldsymbol{1}_1D_{2,\dots,m}(\lambda_2,\dots,\lambda_m),
			\end{align*}
			where $S_\delta :=\{\lambda\in\mathbb{C}|\delta<|\arg(\lambda)|<\pi-\delta\}$. \label{asympt_sl2} $\odot$
			\seti	
		\end{enumerate}
	\end{prop}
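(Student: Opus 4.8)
The plan is to derive all six properties from the definition of $D_m$ as the normalised trace over the quantum space of a product of rational $R$-matrices (Definition \ref{def:gen_density_matrix}), passing to the thermodynamic limit $N,L\to\infty$ and $T\to 0$, while using throughout only the structural features of the $R$-matrix (\ref{eqn:Rff}) in the rank $n=1$ case: its $\mathfrak{sl}_2$-invariance, the Yang--Baxter equation, crossing symmetry, and the unitarity relation $\rho(\lambda)\rho(-\lambda)=1$. Since $D_m$ is assembled graphically from these building blocks, each property will be read off from a local move on the lattice performed at finite $L,N$ and then stabilised in the limit; the delicate common thread is that these finite-volume identities must survive the limit, for which I would invoke the vertex-operator / multiple-integral representation and the factorisation results of \cite{JM}, \cite{BK}, \cite{BKS}.

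For property (\ref{sl2_inv}) I would use that, with $n=1$, the $R$-matrix (\ref{eqn:Rff}) is a scalar multiple of $(\lambda-\mu)\boldsymbol{1}+P$, and both $\boldsymbol{1}$ and $P$ commute with the diagonal action of $\mathfrak{sl}_2$; a global group element therefore slides freely through every vertex, and by cyclicity of the trace over the quantum space, conjugation of $D_m$ by $g\in SU(2)$ acting on all sites is trivial. Property (\ref{R_matrix_rel_sl2}) is the train argument: the Yang--Baxter equation lets one drag the vertex $R_{i,i+1}$ across the whole monodromy matrix, which interchanges the two adjacent vertical lines together with their spectral parameters $\lambda_i,\lambda_{i+1}$, which is exactly the stated exchange relation. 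Property (\ref{left-right_sl2}) follows by taking $\tr_1$ of the graphical expression for $D_m$: the normalisation $R(0)=P$ together with the crossing-unitarity $\rho(\lambda)\rho(-\lambda)=1$ lets the leftmost line be contracted and removed, leaving $D_{2,\dots,m}$ (and symmetrically for $\tr_m$).

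Property (\ref{rqKZ_sl2}) is the core. I would start from the finite-lattice difference equations recorded in the appendix of \cite{BHN} and specialise them to rank $n=1$. The crossing relations of subsection \ref{subsect:dual modules and crossing} replace the vertical fundamental line carrying $\lambda_1$ by an antifundamental one, producing the auxiliary density matrix $D^{(1)}$; the $\mathfrak{sl}_2$ self-duality $V^\circledast\cong V$, equivalently $V^*(\lambda)\cong V(\lambda+1)$, then identifies $D^{(1)}$ with $D^{(0)}$ up to the spectral shift, and the shift $\lambda_1\mapsto\lambda_1-1$ is precisely the $n=1$ value of the crossing shift $\tfrac{n+1}{2}$. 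Passing to the infinite lattice with $\beta_j=\tfrac{\beta}{2N}$ and taking $N,L\to\infty$ then yields the rqKZ-equation as stated, with the operator $A_{\bar 1,1|2,\dots,m}$ defined in appendix \ref{App:A}.

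Properties (\ref{analyt_sl2}) and (\ref{asympt_sl2}) I would read off from the analytic structure of the correlators. For (\ref{asympt_sl2}), as $\lambda_1\to\infty$ inside the sector $S_\delta$ the prefactor $\tfrac{\lambda-\mu}{\lambda-\mu+1}\to 1$ and $R(\lambda_1-\cdot)\to\boldsymbol{1}$, so the first vertical line decouples and, after the trace, contributes the maximally mixed state $\tfrac12\boldsymbol{1}_1$ (the factor $\tfrac12=\tfrac1{n+1}$), giving the claimed factorisation. For (\ref{analyt_sl2}) the dependence on the differences $\lambda_i-\lambda_j$ and the location of the at-most-simple poles comes from the kernel of the multiple-integral representation together with the poles of the scalar factor $\rho$, while regularity at $\lambda_i-\lambda_j\in\{0,\pm1\}$ follows from the normalisation of $R$ and the cancellation forced by (\ref{R_matrix_rel_sl2}) and unitarity. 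The main obstacle, and the step requiring the most care, is the uniform control of the thermodynamic and zero-temperature limits: one must show that the finite-volume exchange, reduction and difference relations persist as $N,L\to\infty$, that the precise pole locations in (\ref{analyt_sl2}) are neither created nor destroyed by the limit, and that the asymptotics in (\ref{asympt_sl2}) hold uniformly on $S_\delta$ --- this is where the vertex-operator construction and the integral formulas underlying \cite{JM} do the essential work.
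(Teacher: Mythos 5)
You should first note that the paper does not actually prove Proposition \ref{prop:properties_of_D_sl2}: it is quoted as a known list of properties of the infinite-volume density matrix, with the list attributed to \cite{JM} and the analytic properties (\ref{analyt_sl2}), (\ref{asympt_sl2}) explicitly attributed to the multiple-integral formulas of \cite{JM} and \cite{KIEU} as derived in Appendix B of \cite{BJMST}. Measured against that, your overall division of labour --- graphical/algebraic arguments for (\ref{sl2_inv})--(\ref{rqKZ_sl2}) built on the Yang--Baxter equation, crossing and unitarity, and the integral representation for (\ref{analyt_sl2})--(\ref{asympt_sl2}) --- is exactly the route the paper points to, and your treatments of (\ref{sl2_inv}) and (\ref{R_matrix_rel_sl2}) are the standard and correct ones.

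Three steps are, however, too quick as stated. First, the left-right reduction (\ref{left-right_sl2}) is not a local contraction of the leftmost line at finite $L,N$: tracing out site $1$ leaves a full column of $2N$ $R$-matrices with inhomogeneity $\mu_1$ closed into a loop, and $R(0)=P$ together with $\rho(\lambda)\rho(-\lambda)=1$ does not remove it. What removes it is that in the limit $L\to\infty$ the boundary of the column strip becomes the dominant eigenvector of the column-to-column (quantum) transfer matrix, so the free column acts by its eigenvalue, which cancels against the normalisation; alternatively one reads the reduction off the integral representation. Second, the finite-lattice difference equations of \cite{BHN} hold only at the discrete points $\lambda_1=\beta_j$ (and $\bar{\beta}_j$); to promote them to the rqKZ functional equation (\ref{rqKZ_sl2}) valid for continuous $\lambda_1$ one needs the analyticity statement (\ref{analyt_sl2}) (or the vertex-operator construction) as input, so ``passing to the limit'' is not merely a uniformity issue but the place where the analytic properties enter the proof of the algebraic one. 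Third, for (\ref{asympt_sl2}) the observation that each individual $R(\lambda_1-\cdot)\to\boldsymbol{1}$ does not control a product of infinitely many such factors in the limit $N\to\infty$; this is precisely why the paper defers (\ref{asympt_sl2}), like (\ref{analyt_sl2}), to the integral formula rather than to a graphical decoupling argument.
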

	To keep the notation short we omit the indices $2,\dots,m$ and write $A_{\bar{1},1}(\lambda_1|\lambda_2,\dots,\lambda_m)$ instead of $A_{\bar{1},1|2,\dots,m}(\lambda_1|\lambda_2,\dots,\lambda_m)$ from now on. $A_{\bar{1},1}(\lambda_1|\lambda_2,\dots,\lambda_m)$ is given by either $A^{(1)}_{1,\bar{1}}(\lambda_1|\lambda_2,\dots,\lambda_m)$ or $A^{(2)}_{\bar{1},1}(\lambda_1|\lambda_2,\dots,\lambda_m)$ in the Appendix \ref{App:A}. It simplifies because the fundamental and antifundamental representations are isomorphic in the case of $\mathfrak{sl}_2$ ($V\cong\overline{V}$). $A_{\bar{1},1}(\lambda_1|\lambda_2,\dots,\lambda_m)$ is depicted in figure \ref{fig:A_1bar1}, where the cross on the bottom right stands for the operator $-2PP^-$ with $P^-$ being the projector onto the singlet in $V_1(\lambda_1)\otimes V_{\bar{1}}(\lambda_1-1)$. Identifying $V(\lambda_1)$ and $V(\lambda_1-1)$ as representations of $U(\mathfrak{g}) \lhook\joinrel\xrightarrow{\iota} Y(\mathfrak{g})$ as above, we have $-2PP^-=2P^-$, which is how we define any other vertex with a cross that doesn't interchange the spectral parameters of the horizontal and vertical line.\footnote{In other words it doesn't change the orientation of the outgoing lines.}
	\begin{figure}[H]
		\centering
		\includegraphics[scale = .55, trim = 1cm 9.5cm 3cm 6cm]{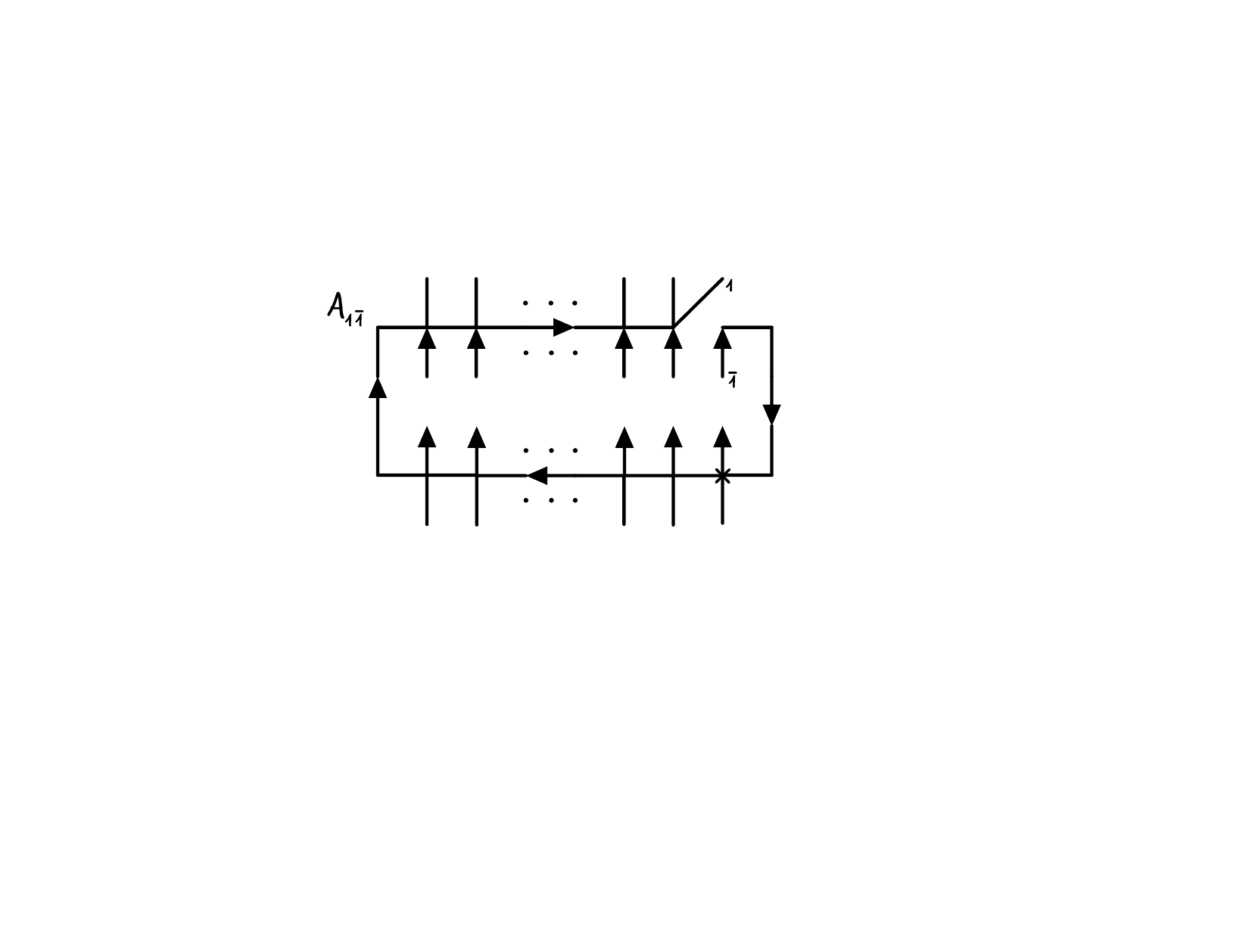}
		\caption{The operator $A_{1,\bar{1}}$}
		\label{fig:A_1bar1}
	\end{figure}
	The analytic properties (\ref{analyt_sl2}) and (\ref{asympt_sl2}) in proposition \ref{prop:properties_of_D_sl2} are obtained from an integral formula constructed in \cite{JM} and \cite{KIEU}. They are derived in the Appendix B of \cite{BJMST}.
	\begin{rem}\hspace{1em}
		\label{rem:rel7sl2}
		\begin{itemize}
			\item Due to $\rho(\lambda)\rho(-\lambda) = 1$ and $\rho(\lambda-1)\rho(\lambda) = -\frac{\lambda}{\lambda-1}$ the coefficients in (\ref{R_matrix_rel_sl2}) and (\ref{rqKZ_sl2}) in proposition \ref{prop:properties_of_D_sl2} are rational.
			\item $D_{1,\dots,m}(\lambda_1,\dots,\lambda_m)$ is translational invariant
			\begin{align*}
				D_{1,\dots,m}(\lambda_1+u,\dots,\lambda_m+u) = D_{1,\dots,m}(\lambda_1,\dots,\lambda_m)
			\end{align*}
			\item $D_{1,\dots,m}(\lambda_1,\dots,\lambda_m)$ fulfils the spin conservation rule
			\begin{align*}
				\left[D_{1,\dots,m}(\lambda_1,\dots,\lambda_m)\right]_{\epsilon_1\dots\epsilon_m}^{\bar{\epsilon}_1\dots\bar{\epsilon}_m} = 0\quad \text{if}\quad m_1(\epsilon)\neq m_1(\bar{\epsilon}),
			\end{align*}
			where the components of $D$ are given by
			\begin{align*}
				\left[D_{1,\dots,m}(\lambda_1,\dots,\lambda_m)\right]_{\epsilon_1\dots\epsilon_m}^{\bar{\epsilon}_1\dots\bar{\epsilon}_m}:=D_{1,\dots,m}(\lambda_1,\dots,\lambda_m)\left((\tensor{E}{_{\epsilon_1}^{\bar{\epsilon}_1}})_1\cdots (\tensor{E}{_{\epsilon_m}^{\bar{\epsilon}_m}})_m\right),
			\end{align*}
			$\tensor{E}{_{\epsilon_i}^{\bar{\epsilon}_i}} = e_{\epsilon_i}\otimes e^{\bar{\epsilon}_i}$ and $m_1(\epsilon)$ is the number of $\epsilon_i$, $i=1,\dots,m$, with $\epsilon_i=1$.
			\item (\ref{sl2_inv}) - (\ref{asympt_sl2}) in proposition \ref{prop:properties_of_D_sl2} determine $D_m$ completely (see\cite{BJMST}).
		\end{itemize}
		\begin{enumerate}
			\conti
			\item From (\ref{R_matrix_rel_sl2}), (\ref{left-right_sl2}), (\ref{rqKZ_sl2}) and the analyticity of $D_m$ at $\lambda_1=\lambda_2$ we obtain
			\begin{align*}
				P^-_{12} D_{1,2,\dots,m}(\lambda-1,\lambda,\dots,\lambda_n) = P^-_{12}D_{3,\dots,m}(\lambda_3,\dots,\lambda_n),
			\end{align*}
			where $(P^{-}_{1\bar{1}})^2=P^{-}_{1\bar{1}}$ is the projector onto the singlet. $\odot$
			\label{proj_red_rel_sl2}
		\end{enumerate}
	\end{rem}
	As $D_m$ is meromorphic in $\lambda_1$ with at most simple poles, it is completely determined by its residues and asymptotic behaviour. Using the rqKZ equation repeatedly, a relation of the form
	\begin{align}
		\notag 
		&\underset{\lambda_{1,j} = \pm(k+1)}{\text{res}} D_{1,\dots,m}(\lambda_1,\dots,\lambda_m) =\\
		&\underset{\lambda_{1,j} = \pm(k+1)}{\text{res}} \left\{\mp\frac{\omega(\lambda_{1,j})}{1-\lambda_{1,j}^2}\tilde{X}^{[1,j]}(\lambda_1,\dots,\lambda_m)\right\}(D_{m-2}(\lambda_2,\dots,\hat{\lambda_j},\dots,\lambda_m))
		\label{eqn:red_rel_res_sl2}
	\end{align}
	for the residues of $D_{1,\dots,m}(\lambda_1,\dots,\lambda_m)$ is proven in \cite{BJMST}, where $\frac{\omega(\lambda_{1,j})}{1-\lambda_{1,j}^2}\tilde{X}^{[1,j]}(\lambda_1,\dots,\lambda_m)$ is a single meromorphic function.\footnote{ We use the short hand notation $\lambda_{ij}:=\lambda_i-\lambda_j$.} Furthermore, the asymptotics of $D_m$ were calculated such that a reduction relation for the reduced density matrix $D_m$ could be obtained using Liouville's theorem. Let us derive equation \ref{eqn:red_rel_res_sl2} in our notation. Using the $R$-matrix relations (\ref{R_matrix_rel_sl2}) in proposition \ref{prop:properties_of_D_sl2} we can suppose $j=2$. Then, by applying the rqKZ equation (\ref{rqKZ_sl2}) in proposition \ref{prop:properties_of_D_sl2} repeatedly, $D_{1,\dots,m}(\lambda_1-k-1,\lambda_2,\dots,\lambda_m)$ can be expressed in terms of several $A$'s with shifted arguments acting on $D_{1,\dots,m}(\lambda_1-1,\lambda_2,\dots,\lambda_m)$
	\begin{align}
		\notag
		&\underset{\lambda_{1} =\lambda_2 -(k+1)}{\text{res}} D_{1,\dots,m}(\lambda_1,\dots,\lambda_m) = \underset{\lambda_{1} = \lambda_2}{\text{res}} D_{1,2,\dots,m}(\lambda_1-(k+1),\lambda_2,\dots,\lambda_m) =\\
		&\underset{\lambda_{1} = \lambda_2}{\text{res}} \{A_{\bar{b},b}(\lambda_1-k|\lambda_2,\dots,\lambda_m)\cdots A_{a,\bar{a}}(\lambda_1-1|\lambda_2,\dots,\lambda_m)D_{a,2,\dots,m}(\lambda_1-1,\lambda_2,\dots,\lambda_m)\},
		\label{eqn:rqKZ_k_shift_sl2}
	\end{align}
	where the indices are $a=b=1$ if $k$ is even and $a=\bar{1}=\bar{b}$ when $k$ is odd setting $\bar{\bar{1}}:=1$.
	Finally, one uses the relation (\ref{proj_red_rel_sl2}) in remark \ref{rem:rel7sl2} to see that $D_{3,\dots,m}(\lambda_3,\dots,\lambda_m)$ can be pulled out of the residue as it doesn't depend on $\lambda_{12}=:\lambda_1-\lambda_2$. This is only possible, because the residue of $A_{a,\bar{a}}(\lambda_1-1|\lambda_2,\dots,\lambda_m)$ at $\lambda_1=\lambda_2$ contains the projector $P^-_{a,2}$ just at the right position, i.e. we apply the relation (\ref{proj_red_rel_sl2}) in remark \ref{rem:rel7sl2}. We obtain
	\begin{align*}
		&\underset{\lambda_{1} =\lambda_2 -(k+1)}{\text{res}} D_{1,\dots,m}(\lambda_1,\dots,\lambda_m) =\\
		&\underset{\lambda_{1} = \lambda_2}{\text{res}} \{A_{\bar{b},b}(\lambda_1-k|\lambda_2,\dots,\lambda_m)\cdots A_{a,\bar{a}}(\lambda_1-1|\lambda_2,\dots,\lambda_n)\}D_{3,\dots,m}(\lambda_3,\dots,\lambda_m)
	\end{align*}
	and see that the product $\underset{\lambda_{1} = \lambda_2}{\text{res}} \{A_{\bar{b},b}(\lambda_1-k|\lambda_2,\dots,\lambda_m)\cdots A_{a,\bar{a}}(\lambda_1-1|\lambda_2,\dots,\lambda_m)\}$ is one (quite unhandy) expression for the Snail Operator $\tilde{X}_k$. Suppose for simplicity $k=2$, then we apply the rqKZ equation two times (figure \ref{fig:rqKZ_sl2_2_loops}). \footnote{ In the figures we use $\mu$ instead of $\lambda$ for the spectral parameters. It is of course just a matter of taste.}
	\begin{figure}[H]
		\centering
		\begin{minipage}[b]{.4\linewidth} % [b] => Ausrichtung an \caption
			\includegraphics[scale = .45, trim = 5cm 6.25cm 3cm 2.75cm]{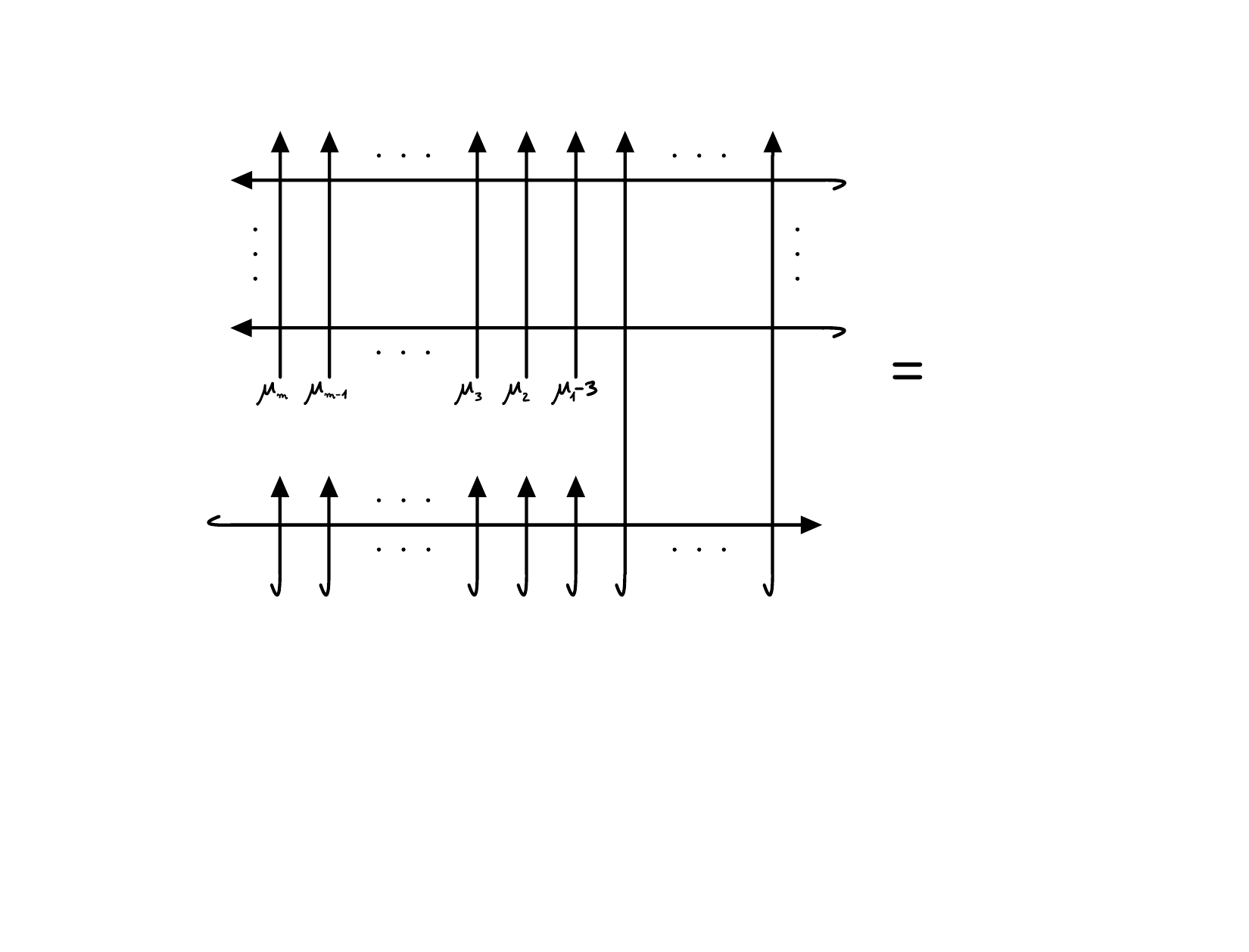}
		\end{minipage}
		\hspace{.1\linewidth}% Abstand zwischen Bilder
		\begin{minipage}[b]{.4\linewidth} % [b] => Ausrichtung an \caption
			\includegraphics[scale = .45, trim = 7cm 3.75cm 0cm 2.75cm]{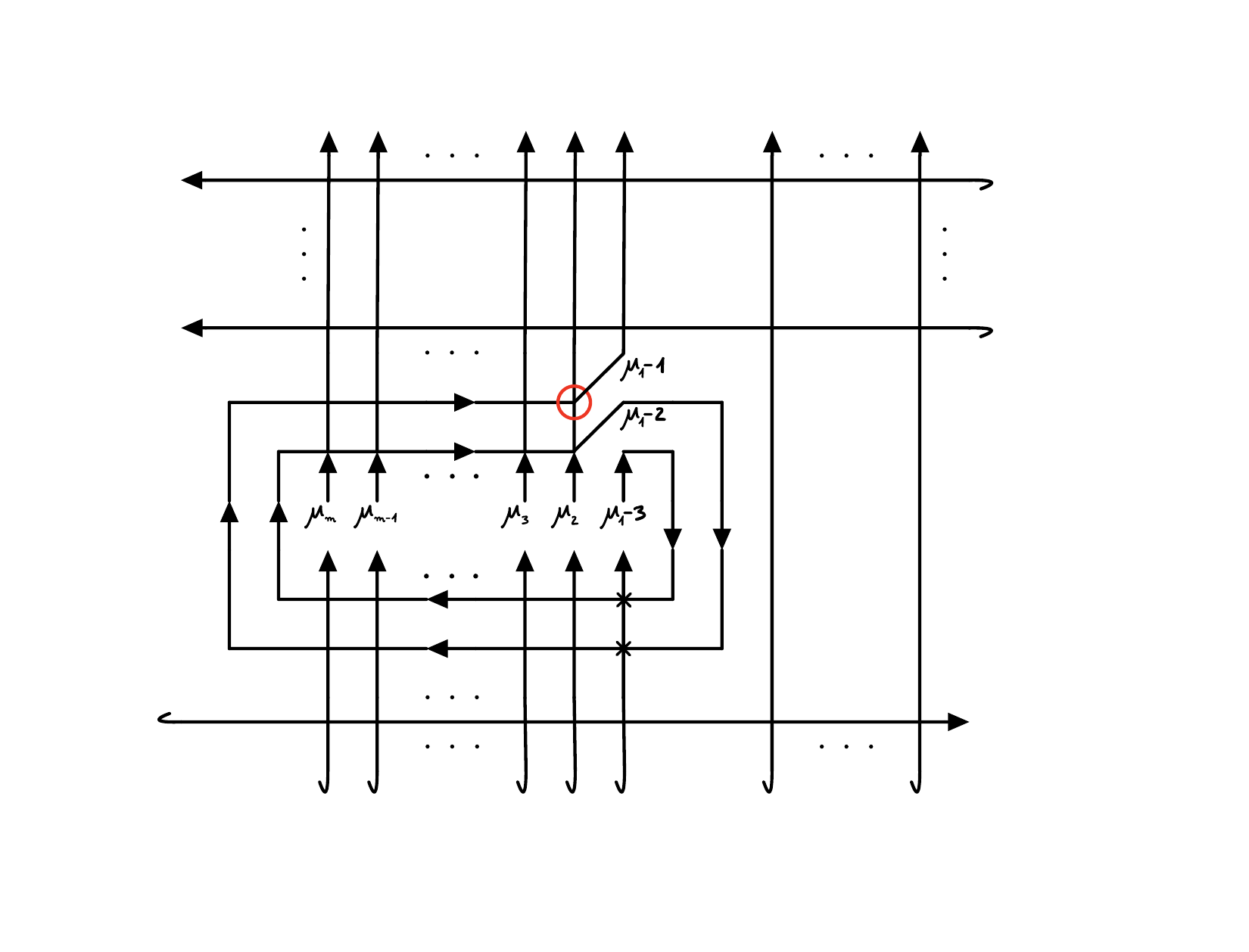}
		\end{minipage}
		\caption{}
		\label{fig:rqKZ_sl2_2_loops}
	\end{figure}
	Taking the residue at $\mu_1 = \mu_2$, $R_{12}(\mu_1-\mu_2-1)$ in the red circle (figure \ref{fig:rqKZ_sl2_2_loops}) reduces to $2P^-_{12}$ up to a scalar prefactor. As a consequence, we can apply the relation (\ref{proj_red_rel_sl2}) in remark \ref{rem:rel7sl2} to obtain the result in figure \ref{fig:Snail_with_two_loops_sl2} \footnote{To be precise, figure \ref{fig:Snail_with_two_loops_sl2} has to be understood as the limit $\mu_1\to\mu_2-1$ of $(\mu_1-\mu_2+1)$ times figure \ref{fig:rqKZ_sl2_2_loops}.}, where we have split the operator $2P^-_{12}$ into the tensor product of a singlet in $V_1\otimes V_2$ (a cross with two ingoing lines) and its dual in $V_1^\star\otimes V_2^\star$ (a cross with two outgoing lines) respectively \footnote{ The $\mathfrak{sl}_2$ case of the definition in section \ref{subsect:graphical_notation}}. The operator in the box with the dashed red line (multiplied by the scalar prefactor) is the Snail Operator for $\mathfrak{sl}_2$ and $k=2$ loops.\newpage
	\begin{figure}[H]
		\centering
		\includegraphics[scale = .5, trim = 5.75cm 3.5cm 5cm 3cm]{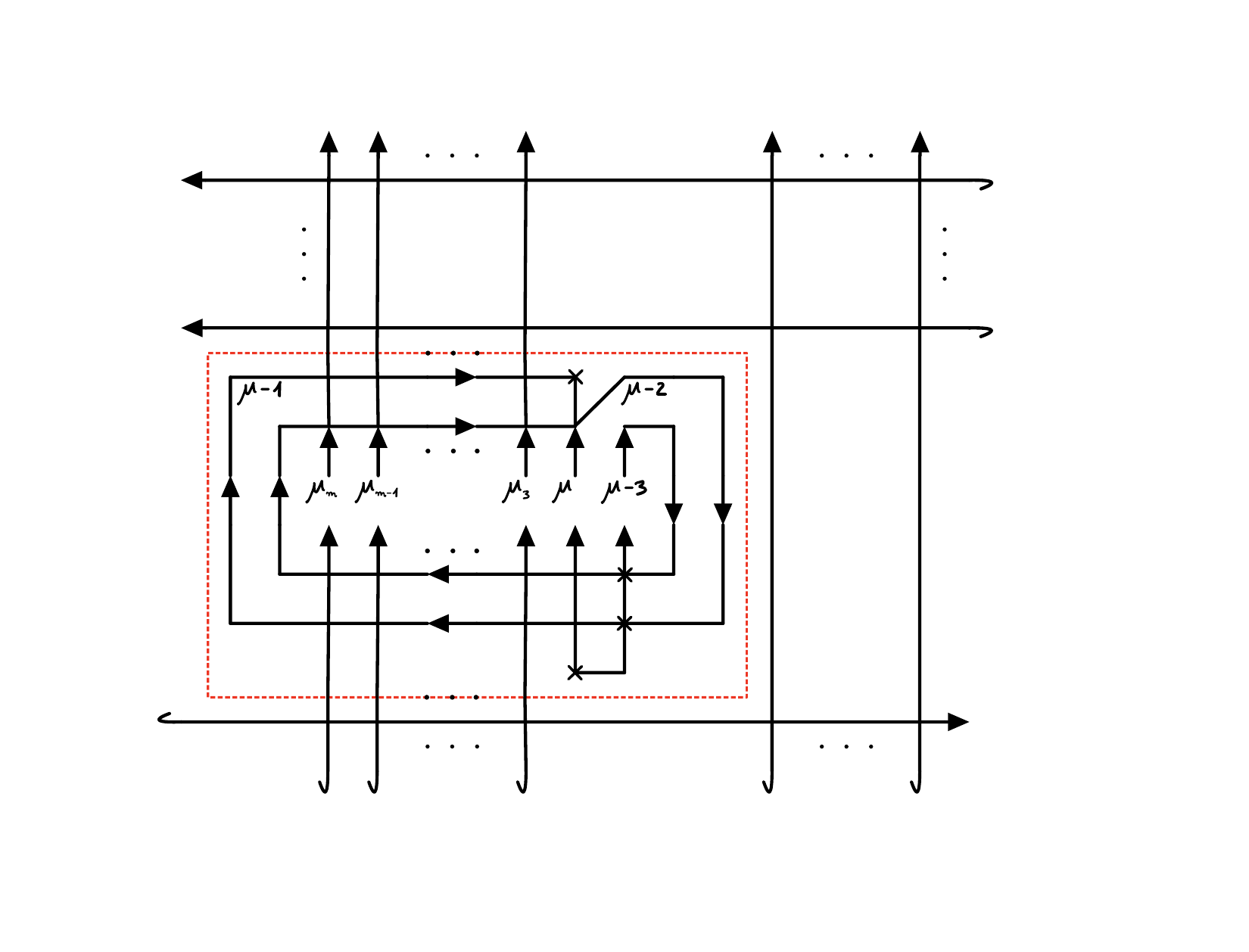}
		\caption{The Snail Operator with two loops ($k=2$).}
		\label{fig:Snail_with_two_loops_sl2}
	\end{figure}
	Due to the fact that we have projectors in the last two loops and the spectral parameters of successive lines differ by exactly one, the Snail Operator can be further simplified by applying identities similar to figure \ref{fig:fusion_rel_sl2}. Note that we omit the prefactor of the $R$-matrix in this figure, i.e. taking the numerical $R$-matrix $r(\lambda) := \lambda+P$ instead of $R(\lambda)$. The arguments of the $r$'s are written next to the vertices.
	\begin{figure}[H]
		\centering
		\includegraphics[scale = .6, trim = 5.75cm 9cm 5cm 4.5cm]{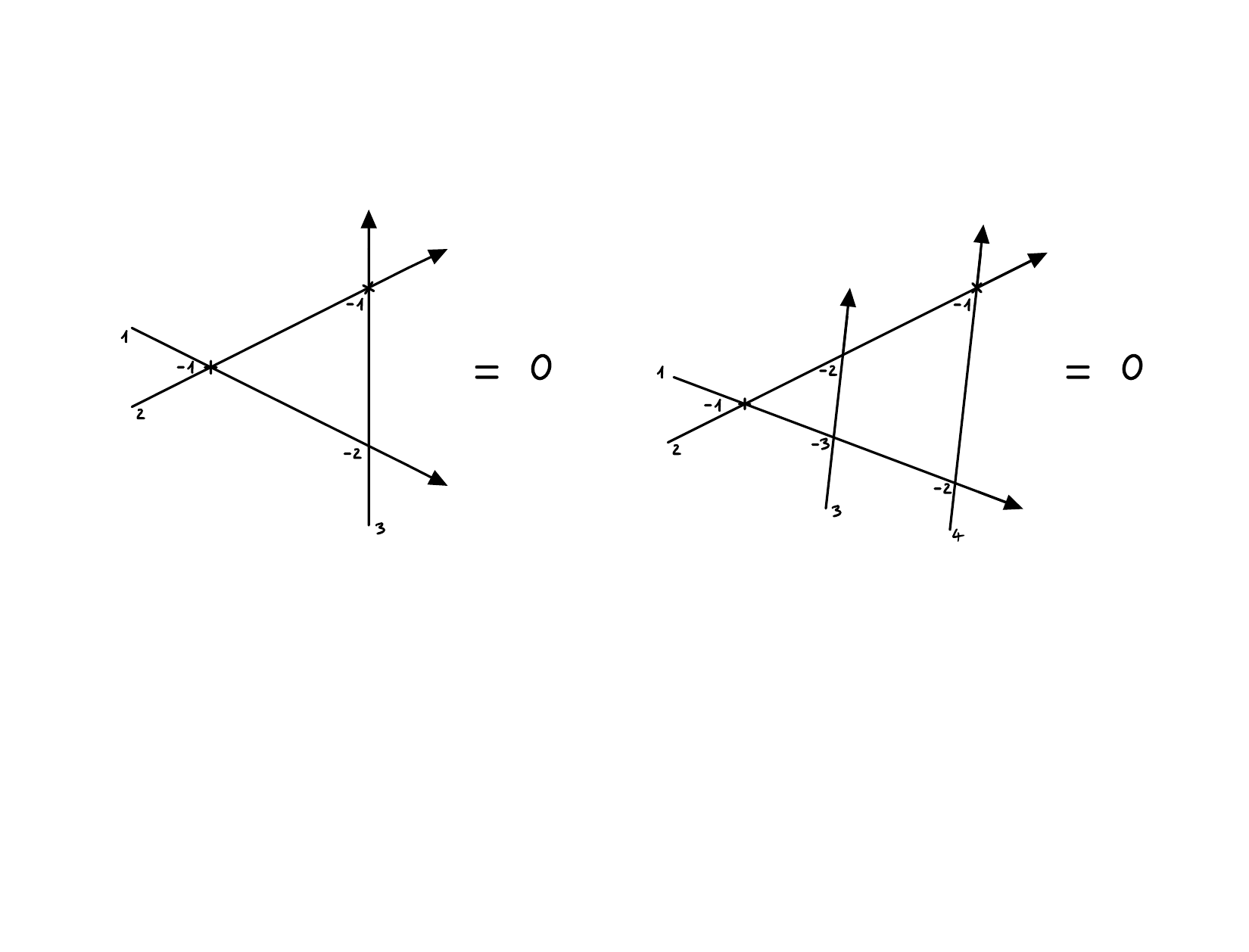}
		\caption{}
		\label{fig:fusion_rel_sl2}
	\end{figure}
	In general, in \cite{BJMST} it was shown that the $k$ loops of the Snail Operator fuse to a single irreducible representation of the Yangian, the Kirillov--Reshetikhin module $W_k$. As a representation of $\mathfrak{sl}_2$ it is just the irreducible spin-$k/2$ representation. This was proven in a combinatoric way. As we are going to investigate the generalization for higher rank, we shall explain the algebraic structure behind this in the next subsection. However, the relation (\ref{proj_red_rel_sl2}) in remark \ref{rem:rel7sl2} is the key identity which was used to reduce $D_m$ to $D_{m-2}$ such that it can be pulled out of the residue. Therefore, let us try to understand it graphically. After applying the rqKZ equation once, we use the fact that $R(0) = P$ (figure \ref{fig:proj_red_rel_sl2_01} inside the red circles) and use the YBE to pull the second line out to the left (long green arrows in figure \ref{fig:proj_red_rel_sl2_01}). After that, we split up the two operators $2P^-$ (the two crosses in figure \ref{fig:proj_red_rel_sl2_01}) into the tensor product of a singlet and its dual. The result is shown in figure \ref{fig:proj_red_rel_sl2_02} on the left. Finally, the singlet and its dual on the straight lines cancel out, as they are now considered as mappings from $V^\star$ to $V$ and $V$ to $V^\star$, respectively. \footnote{ The singlet is considered as a mapping from $^*V$ to $V$ in this case. In general it relates either $^*V$ and $V$ or $V^*$ and $V$. Similarly for its dual.} Of course, they correspond to the different ways of interpreting the charge conjugation operator $C$, having the isomorphism $V\cong V^*$ for $\mathfrak{sl}_2$ in mind. Using the left-right reduction property (\ref{left-right_sl2}) in proposition \ref{prop:properties_of_D_sl2}, we obtain the result on the right in figure \ref{fig:proj_red_rel_sl2_02}.
	\begin{figure}[H]
		\centering
		\begin{minipage}[b]{.44\linewidth} % [b] => Ausrichtung an \caption
			\includegraphics[scale = .475, trim = 5cm 6.5cm 4cm 3.5cm]{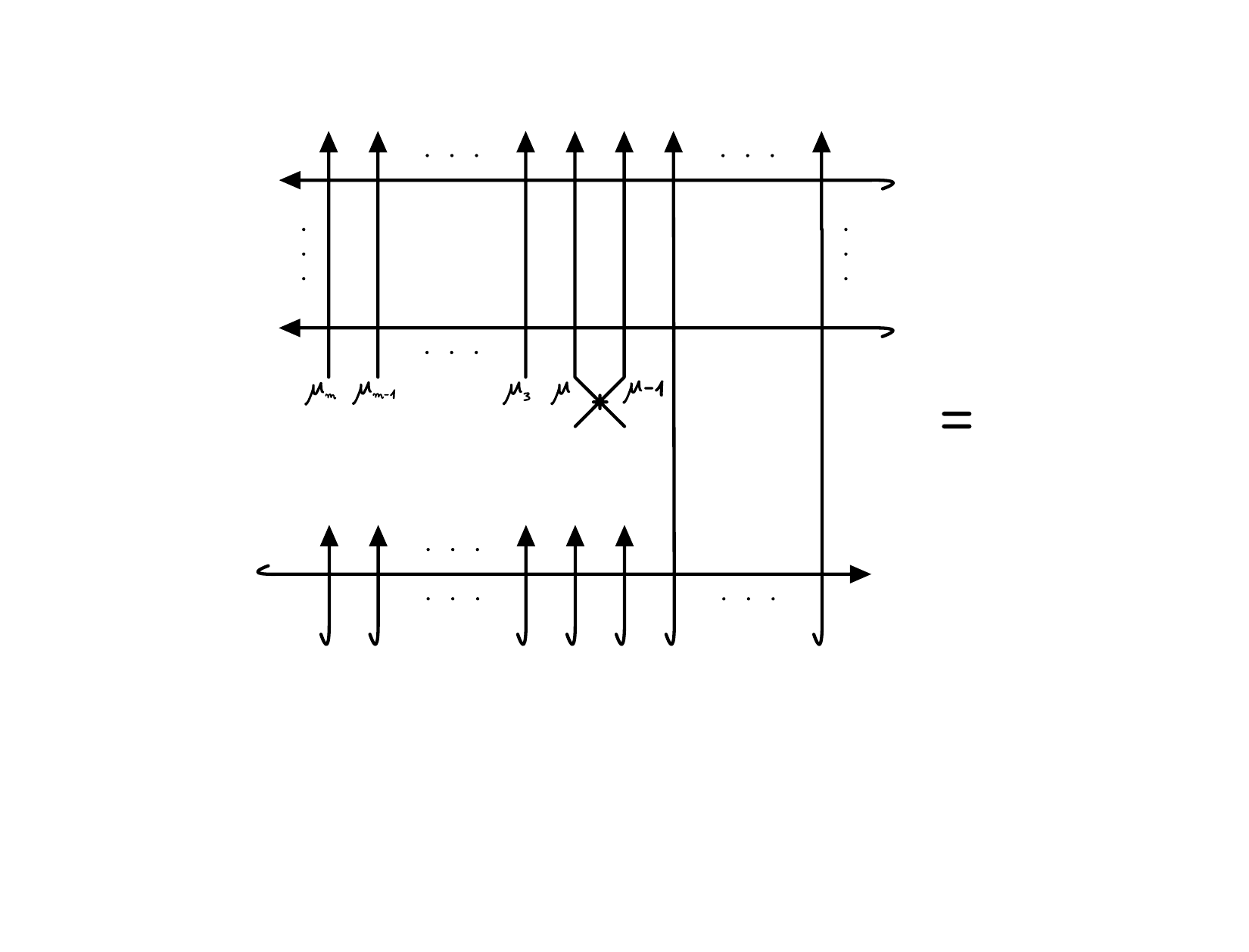}
		\end{minipage}
		\hspace{.1\linewidth}% Abstand zwischen Bilder
		\begin{minipage}[b]{.44\linewidth} % [b] => Ausrichtung an \caption
			\includegraphics[scale = .475, trim = 4cm 7cm 4cm 1.75cm]{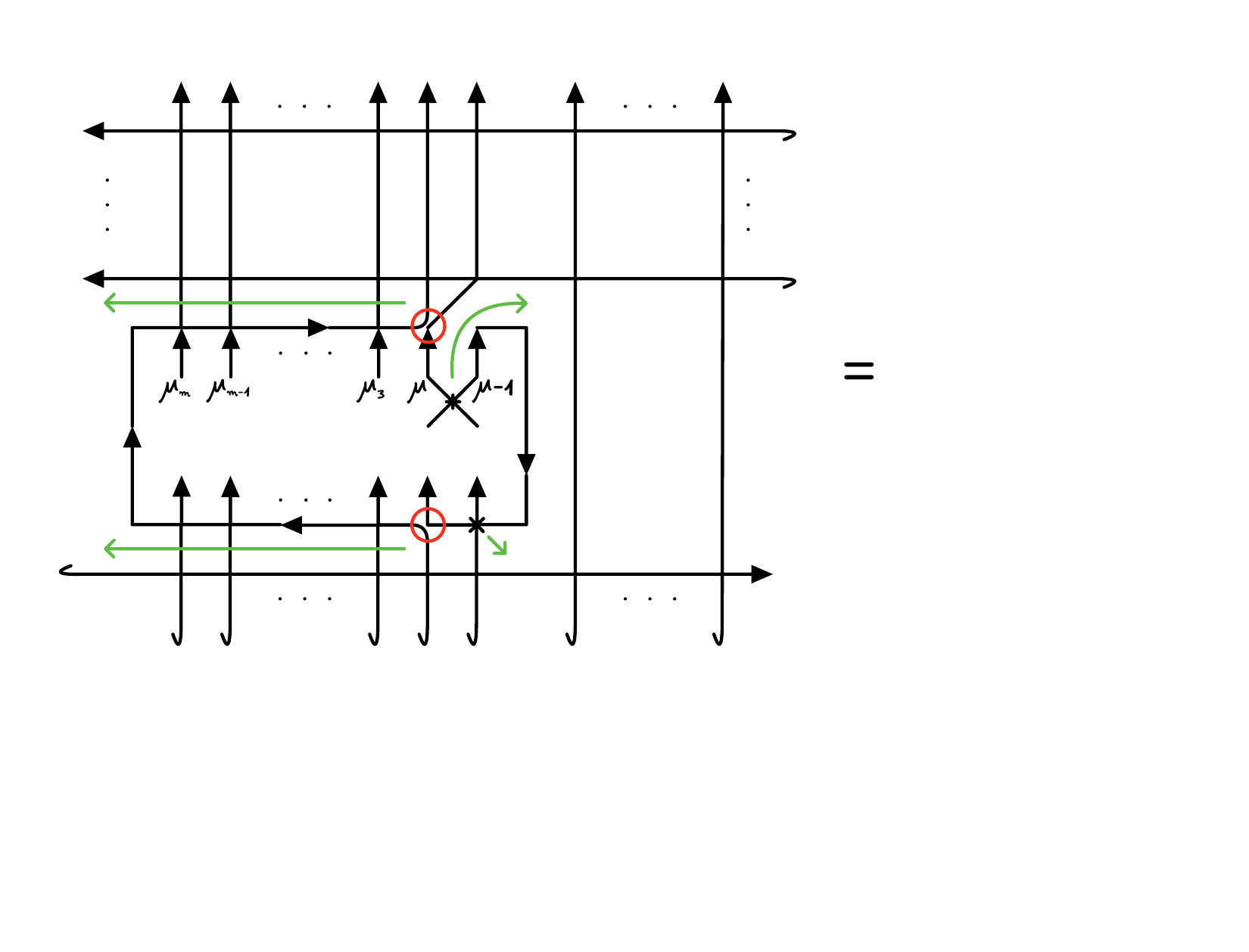}
		\end{minipage}
		\caption{}
		\label{fig:proj_red_rel_sl2_01}
	\end{figure}
	\begin{figure}[H]
		\centering
		\begin{minipage}[b]{.44\linewidth} % [b] => Ausrichtung an \caption
			\includegraphics[scale = .475, trim = 1cm 7cm 3cm 3cm]{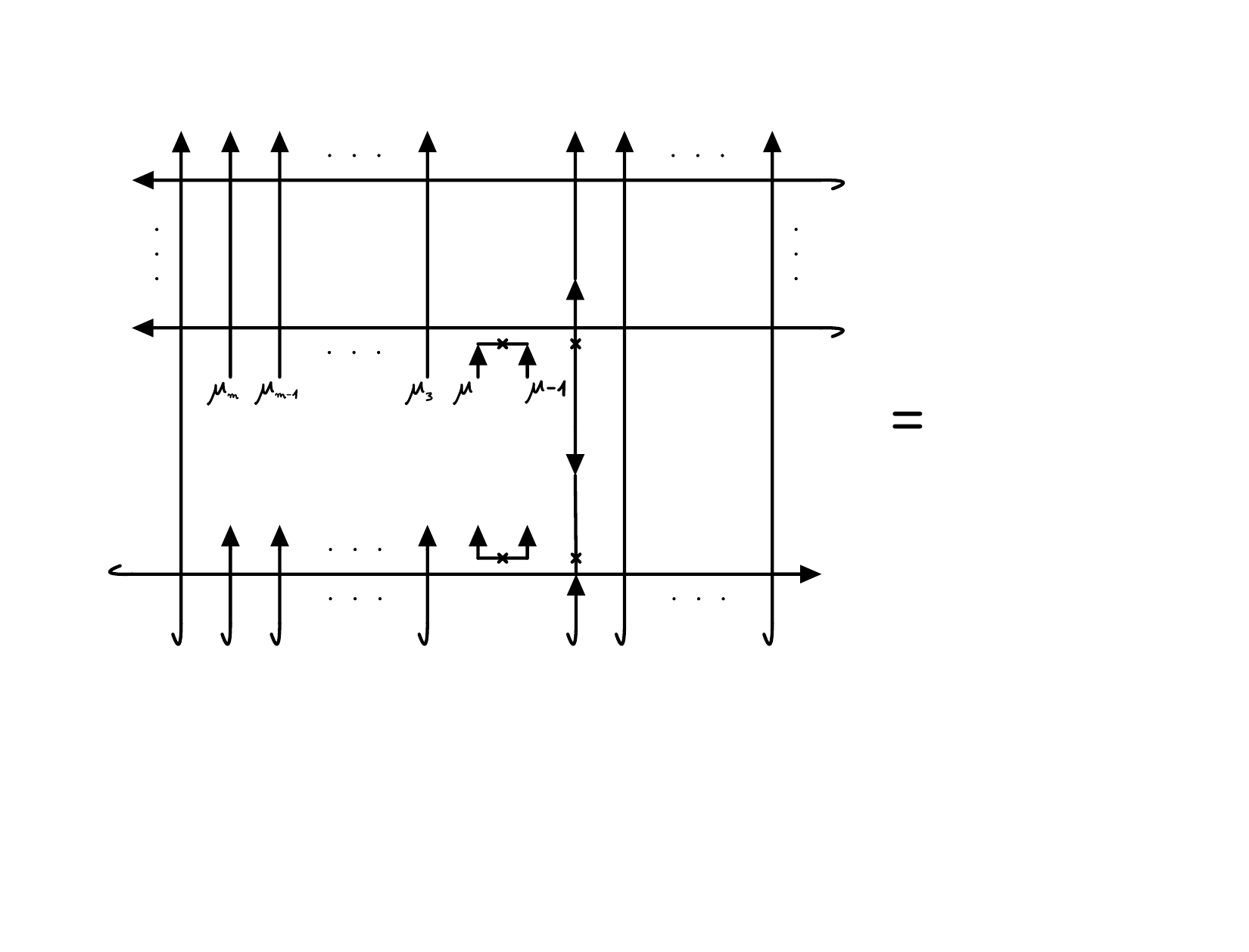}
		\end{minipage}
		\hspace{.1\linewidth}% Abstand zwischen Bilder
		\begin{minipage}[b]{.44\linewidth} % [b] => Ausrichtung an \caption
			\includegraphics[scale = .475, trim = 4cm 6cm 4cm 5cm]{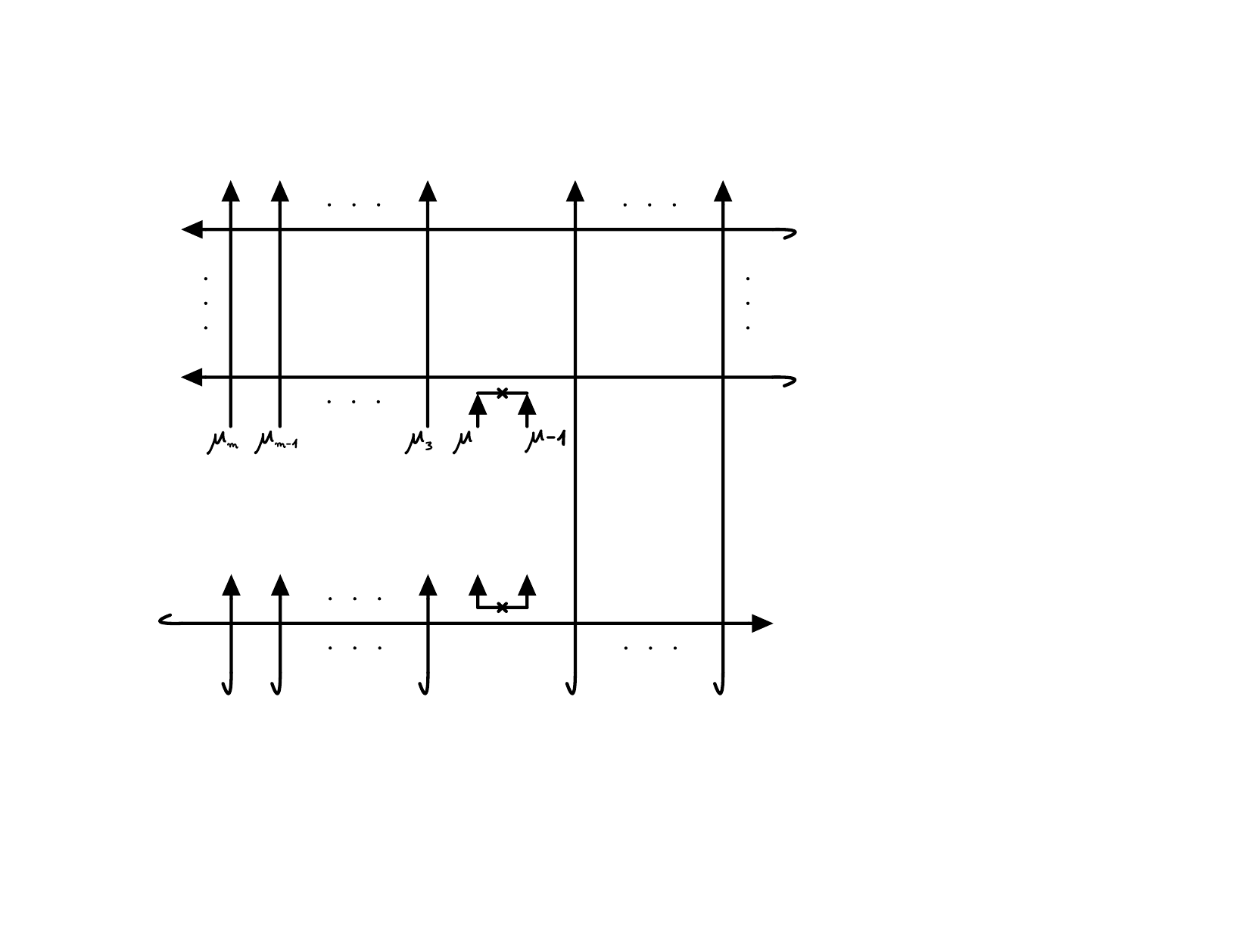}
		\end{minipage}
		\caption{}
		\label{fig:proj_red_rel_sl2_02}
	\end{figure}
	\subsection{T-systems and the Snail Operator $\tilde{X}_k$}
	\label{subsect:T-systems}
	Let us now turn back to the discussion of the Snail Operator in the special case of $\mathfrak{sl}_2$. As the general case will be described in section \ref{subsect:extT-systems} in complete detail, we intend to provide a first explanation of the quite surprising simplification that was proven in \cite{BJMST}. It can be stated in the following way: "In the tensor product of the $k$ fundamental representations that appear in the Snail Operator with $k$ loops, all but one irreducible representation cancel out. The Kirillov--Reshetikhin module $W_k$, which is the spin-$k/2$ irreducible representation of $\mathfrak{sl}_{2}\lhook\joinrel\xrightarrow{\iota}Y(\mathfrak{sl}_{2})$". As was discussed in subsection \ref{subsect:graphical_notation} where we introduced our graphical notation, every line can be regarded a fundamental representation of the Yangian $Y(\mathfrak{sl}_2)$. Drawing the Snail Operator in a slightly less compact way by not splitting up the projector $P^-$, we can see that it has $k$ closed loops (figure \ref{fig:Snail_Operator_sl2}). Again, since the $R$-matrix $R(\lambda)$ has a simple pole at $\lambda=1$, figure \ref{fig:Snail_Operator_sl2} is only understood in terms of the residue at $\mu_1=\mu_2-k-1$. As above, we multiply it by the scalar prefactor obtained from $R(\lambda)$ in the limit $\lambda\to-1$. Anyway, we explain how figure \ref{fig:Snail_Operator_sl2} can be made into a precise definition in a moment.
	\begin{figure}[H]
		\centering
		\begin{minipage}[b]{.44\linewidth} % [b] => Ausrichtung an \caption
			\includegraphics[scale = .5, trim = 5cm 5cm 3cm 4cm]{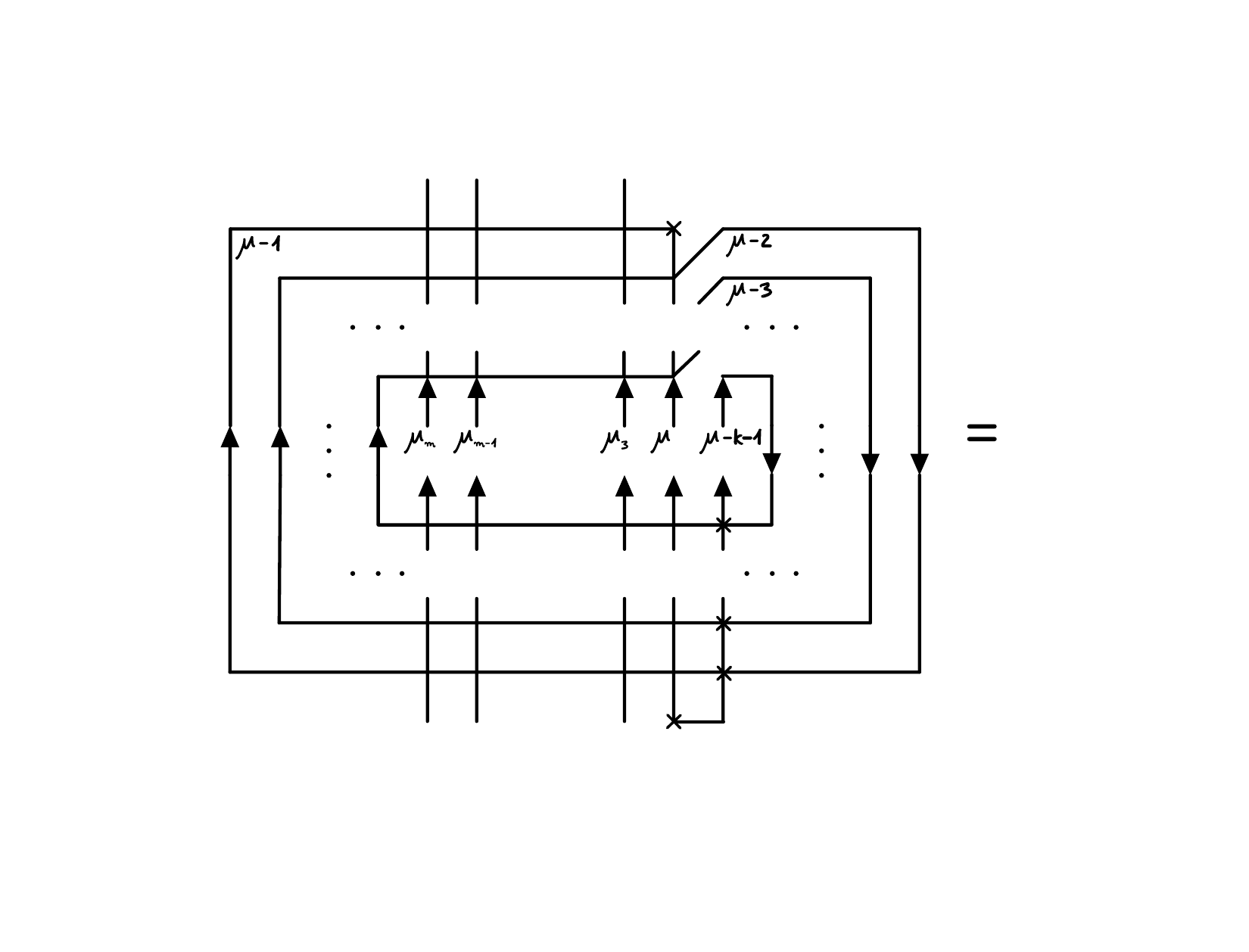}
		\end{minipage}
		\hspace{.1\linewidth}% Abstand zwischen Bilder
		\begin{minipage}[b]{.44\linewidth} % [b] => Ausrichtung an \caption
			\includegraphics[scale = .5, trim = 6.5cm 5cm 3cm 6cm]{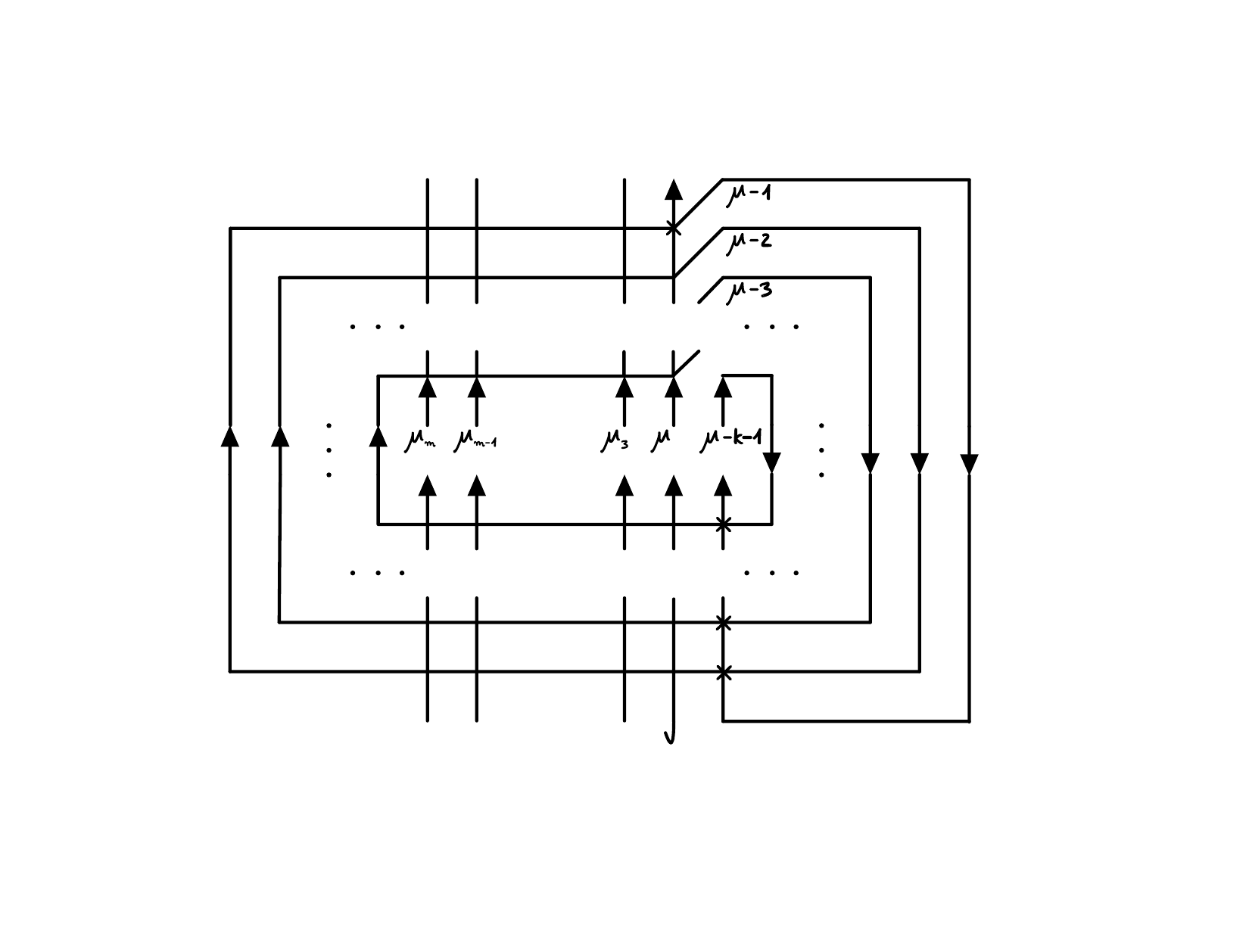}
		\end{minipage}
		\caption{The Snail Operator with $k$ (closed) loops.}
		\label{fig:Snail_Operator_sl2}
	\end{figure}
	The fundamental representations of these lines are obtained by pulling back the spin-$1/2$ fundamental representation $V=V^{(1)}$ of $\mathfrak{sl}_2$ with $\operatorname{ev}_{\mu-l}$, $l=1,\dots,k$. It is called the spin-$1/2$ evaluation representation $V^{(1)}(\mu-l)$ to the loop parameter $a=\mu-l\in \mathbb{C}$. Generally, we define the spin-$k/2$ evaluation representation $V^{(k)}(a)$ of $Y(\mathfrak{sl}_2)$ to the loop parameter $a\in\mathbb{C}$ as the pullback of $V^{(k)}$ by $\operatorname{ev}_a$ ($V^{(k)}(a):=\operatorname{ev}_a^*V^{(k)}$), where $V^{(k)}$ is the spin-$k/2$ irreducible representation of $\mathfrak{sl}_2$. It is closely related to the definition of the Kirillov--Reshetikhin module $W^{(k)}(a)$. We have $W^{(k)}(a):=V^{(k)}(a+\frac{1}{2}(k-1))$. Therefore $V=V^{(1)}(\mu-l)=W^{(1)}(\mu-l)$, $l=1,\dots,k$, are the fundamental representations of the successive lines. Looking at the right side of figure \ref{fig:Snail_Operator_sl2}, we use the identity $A_1 = \tr_{V_\alpha}(A_\alpha P_{\alpha,1}), \, A\in \End(V_1)$, to write the Snail Operator as the residue at $\mu = \mu_2$ of the product of the monodromy matrices
	\begin{align*}
		&\mathcal{T}_{\alpha_l;2,\dots,2m-1}(\mu-l;\mu_2,\dots,\mu_m,\mu_m,\dots,\mu_2):=\\
		&\tr_a\{\overline{T}_{a;2,\dots,m}(\mu-l;\mu_2,\dots,\mu_m)T_{a;m+1,\dots,2m-1}(\mu-l;\mu_m,\dots,\mu_2)P_{a,\alpha_l}\} \stackrel{P_{a,\alpha_l}=R_{a,\alpha_l}(0)}{=}\\
		& \tr_a\{R_{2,a}(\mu_2-\mu+l)R_{3,a}(\mu_{3}-\mu+l)\cdots R_{m,a}(\mu_m-\mu+l)\\
		&R_{a,m+1}(\mu-l-\mu_m)R_{a,m+2}(\mu-l-\mu_{m-1})\cdots R_{a,2m-1}(\mu-l-\mu_2)R_{a,\alpha_l}(0)\}\quad l=1,\dots,k,
	\end{align*}
	multiplied by the operator
	\begin{align*}
		\mathfrak{P}_{2,\alpha_1,\dots,\alpha_k,1}:= 2^k P^-_{\alpha_1,2}P^-_{\alpha_{2},\alpha_1}\cdots P^-_{\alpha_{k},\alpha_{k-1}}P^-_{1,\alpha_k}
	\end{align*}
	and contracted over the spaces $2,\alpha_1,\dots,\alpha_k$. Note that we used the projector identity $(P^-_{\alpha_1,2})^2= P^-_{\alpha_1,2}$ to be able to introduce the operator $\mathfrak{P}$. In fact, the operator $\mathfrak{P}$ turns out to be the projector onto the Kirillov--Reshetikhin module $W^{(k)}(\mu)$ in the tensor product of the spaces $\alpha_1,\dots,\alpha_k$ times the singlet in the tensor product of two spin $1$ representations built from the spaces $V_1\otimes\overline{V}_1$ and $V_2\otimes\overline{V}_2$. Where we identify $V_1\otimes \overline{V}_1$ with $V_1\otimes V_1^\star \cong \End(V_1)$ using the dual of the singlet in $V_1^\star\otimes (\overline{V}_1)^\star$ and similarly for $V_2$.\\
	
	However, let us explain the representation theory behind this. In the category of finite dimensional representations of the Yangian $Y(\mathfrak{sl}_2)$ we consider tensor products of the fundamental (evaluation) representations. Now, the main observation is due to Chari and Pressley (1991) \cite{CP1991} proposition 4.9 for $U_q(\widetilde{\mathfrak{sl}}_2)$, which can be translated for $Y(\mathfrak{sl}_2)$ using the correspondence between finite dimensional representation given in the paper \cite{GT}. The statement is as follows.
	\begin{prop}[special position]
		\label{prop:special_position}
		The tensor product $V:=W^{(k)}(a)\otimes W^{(l)}(b)$ of Kirillov--Reshetikhin modules has a unique proper subrepresentation $W$ iff there is a $0<p\leq\operatorname{min}\{k,l\}$ such that $b-a=\frac{k-l}{2}\pm\left(\frac{k+l}{2}-p+1\right)$. In this case the modules $W^{(k)}(a)$ and $W^{(l)}(b)$ are said to be in \textbf{special position}. We have the short exact sequence
		\begin{align*}
			W\lhook\joinrel\rightarrow W^{(k)}(a)\otimes W^{(l)}(b)\rightarrowdbl V/W,
		\end{align*}
		where the composition factors $W$ and $V/W$ are irreducible. They are given as follows.
		\begin{enumerate}
			\item If $b-a=k-p+1$, we have
			\begin{align}
				W&\cong W^{(k-p)}(a)\otimes W^{(l-p)}(b+p),\notag\\
				V/W&\cong W^{(p-1)}(a+k-p+1)\otimes W^{(k+l-p+1)}(b-k+p-1).
			\end{align}
			As a representation of $\mathfrak{sl}_2$,
			\begin{align*}
				W\cong V^{(k+l-2p)}\oplus V^{(k+l-2p-2)}\oplus\dots\oplus V^{(|m-n|)}.
			\end{align*}
			\item If $b-a=-l+p-1$, we have
			\begin{align}
				W&\cong W^{(p-1)}(a-k)\otimes W^{(k+l-p+1)}(b),\notag\\
				V/W&\cong W^{(k-p)}(a+p)\otimes W^{(l-p)}(b).
			\end{align}
			As a representation of $\mathfrak{sl}_2$,
			\begin{align*}
				W\cong V^{(k+l)}\oplus V^{(k+l-2)}\oplus\dots\oplus V^{(m+n-2p+2)}.
			\end{align*}
		\end{enumerate}
		$\odot$
	\end{prop}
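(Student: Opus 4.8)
The plan is to obtain this statement from its known counterpart for $U_q(\widehat{\mathfrak{sl}}_2)$, namely \cite{CP1991} proposition 4.9, transported to the Yangian $Y(\mathfrak{sl}_2)$ via the equivalence of finite-dimensional module categories of \cite{GT}. It then suffices to argue with Drinfeld polynomials. In our normalization $W^{(k)}(a)=V^{(k)}(a+\tfrac12(k-1))$, so $W^{(k)}(a)$ is the irreducible attached to the ``$a$-string'' $S_a=\{a,a+1,\dots,a+k-1\}$ of spectral parameters, and its Drinfeld polynomial $P_a$ has exactly these (shifted) integers as roots. The whole proposition then becomes a combinatorial statement about two such strings: when their configuration forces reducibility, and how $S_a$ and $S_b$ recombine into the strings labelling the two composition factors.

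First I would record the reducibility criterion. For rank one, $W^{(k)}(a)\otimes W^{(l)}(b)$ is irreducible unless $S_a$ and $S_b$ lie in \emph{special position}, i.e. the two strings overlap in exactly $p-1$ consecutive sites, with either $S_b$ adjoining $S_a$ on the right ($b-a=k-p+1$) or on the left ($b-a=-l+p-1$), for some $1\le p\le\min\{k,l\}$. These two cases are exactly the two signs in the single condition $b-a=\tfrac{k-l}{2}\pm(\tfrac{k+l}{2}-p+1)$. This is the step where \cite{CP1991} proposition 4.9 enters.

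Next I would produce the short exact sequence. The structural fact particular to rank one is that $W^{(k)}(a)\otimes W^{(l)}(b)$ always has length at most two. Concretely, the normalized intertwiner $\check R\colon W^{(k)}(a)\otimes W^{(l)}(b)\to W^{(l)}(b)\otimes W^{(k)}(a)$ is generically an isomorphism and degenerates precisely at special position, and for the ordering making the tensor product highest-weight cyclic its kernel is the unique proper submodule $W$, the irreducible head $V/W$ being the module $L(P_aP_b)$ attached to the product Drinfeld polynomial. Factoring $P_aP_b$ into strings in general position, the head splits off the maximal (union) string together with the leftover overlap string, giving $V/W\cong W^{(p-1)}(a+k-p+1)\otimes W^{(k+l-p+1)}(b-k+p-1)$ in case (1); the submodule $W$ is then pinned down either by the Grothendieck-ring identity $[W^{(k)}(a)][W^{(l)}(b)]=[V/W]+[W]$ together with a dimension count, or by inspecting the complementary strings directly, yielding $W\cong W^{(k-p)}(a)\otimes W^{(l-p)}(b+p)$. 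Both factorizations are irreducible because in each the two constituent strings are disjoint and well separated, and case (2) is handled by the mirror-image argument.

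Finally, the $\mathfrak{sl}_2$-decompositions follow by restriction and Clebsch--Gordan applied to the factored forms: e.g. in case (1), $W$ restricts to $V^{(k-p)}\otimes V^{(l-p)}=V^{(k+l-2p)}\oplus V^{(k+l-2p-2)}\oplus\dots\oplus V^{(|k-l|)}$ and $V/W$ to $V^{(p-1)}\otimes V^{(k+l-p+1)}$, and symmetrically in case (2), which reproduces the displayed sums. The hard part is not conceptual but bookkeeping: fixing the half-integer shift so that the string arithmetic yields the precise boundary values $b-a=k-p+1$ and $b-a=-l+p-1$ and the correct lower cutoffs $|k-l|$ and $k+l-2p+2$. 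All representation-theoretic input is classical, and no genuinely new difficulty arises beyond this translation.
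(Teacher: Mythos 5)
Your route is exactly the one the paper takes: the proposition is not proved there but is quoted as Chari--Pressley \cite{CP1991}, Proposition 4.9, for $U_q(\widehat{\mathfrak{sl}}_2)$ and transported to $Y(\mathfrak{sl}_2)$ via the equivalence of categories of \cite{GT}, which is precisely your starting point. The additional string/Drinfeld-polynomial bookkeeping you supply (reducibility criterion, recombination of strings into the two composition factors, Clebsch--Gordan for the $\mathfrak{sl}_2$-decomposition) is consistent with the stated formulas, so the proposal is correct and essentially identical in approach.
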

	Using this proposition, we see that neighbouring lines in the Snail Operator are in special position with respect to each other. As we can forget about the spectral parameter of any trivial representation $\mathbb{C}\cong W^{(0)}(a)=:W^{(0)}$, we can write the short exact sequence between two successive lines as
	\begin{align*}
		W^{(0)}\lhook\joinrel\rightarrow W^{(1)}(\mu-1)\otimes W^{(1)}(\mu) \rightarrowdbl W^{(2)}(\mu-1).
	\end{align*}
	Considering a partition of unity with the respective projectors onto the composition factors in the Snail Operator, the projector onto $W^{(0)}$ cancels out. It can be easily checked by using the identities in figure \ref{fig:fusion_rel_sl2}. Now, writing only the irreducible composition factors of the possible short exact sequences in proposition \ref{prop:special_position}, we get equations in the Grothendieck ring, the \textit{T-systems}. Usually it is referred to the case when $k=l$, where we have \textit{the T-system} \cite{KNS}
	\begin{align}
		[W^{(k)}(\mu-1)][W^{(k)}(\mu)] = [W^{(k+1)}(\mu-1)][W^{(k-1)}(\mu)]+1,
		\label{eqn:tsys_1}
	\end{align}
	which is oftentimes written in terms of transfer matrices with the respective representations in the auxiliary space \cite{HL}. Using this equation, one can derive the T-system \footnote{ Equivalently, we can choose $l=1$ in proposition \ref{prop:special_position}.}
	\begin{align}
		[W^{(1)}(\mu-k)][W^{(k)}(\mu-k+1)] = [W^{(k+1)}(\mu-k)]+[W^{(k-1)}(\mu-k+2)],
		\label{eqn:tsys_2}
	\end{align}
	which appears in the Snail Operator successively. In fact, it was proven that the second component cancels out in every step, similar to the case $k=1$ above \cite{BJMST}. Therefore, only the Kirillov--Reshetikhin module $W^{(k)}$ remains in the Snail Operator with $k$ loops. Furthermore, it is possible to analytically continue the definition for any $k\in\mathbb{C}$. This is done in the paper \cite{BJMST} by defining a trace function
	\begin{align*}
		\operatorname{Tr}_x : U(\mathfrak{sl}_2)\otimes\mathbb{C}[x]\to \mathbb{C}[x]
	\end{align*}
	such that for any non negative integer $k$ we have
	\begin{align*}
		\operatorname{Tr}_{k+1}(A)=\operatorname{tr}_{V^{(k)}}\pi^{(k)}(A)\quad(A\in U(\mathfrak{sl}_2)).
	\end{align*}
	The analytical continuation is then defined roughly by replacing the $R$-matrices in the definition above by $L$-operators and applying the trace function. The exact details are described in the paper \cite{BJMST}. Since it is defined through a separate algebraic construction, we stop our review here and comment on it later when we discuss the higher rank case. However, the generalization that we present is not of Kirillov--Reshetikhin type for higher rank. In fact, we will see that it is a certain minimal snake module \cite{MY}.
	\section{The construction for $\mathfrak{sl}_3$}
	\label{sect:sl3snailconstr}
	Now after we explained the construction in the $\mathfrak{sl}_2$ (rank 1) case, we are in position to discuss our ansatz for the generalization to higher rank. For the basic construction we focus especially on the $\mathfrak{sl}_3$ (rank 2) case, as we can use the results of the paper \cite{BHN} to explain how some residues can already be calculated. However, the representation theoretical explanation in the second part of this section (subsection \ref{subsect:extT-systems}) works for any rank as we shall see. Let us start with the generalization of the properties of $D$.
	\begin{ppt}
		Analogous to the $\mathfrak{sl}_2$ case $D_{1,...,m}$ fulfills
		\begin{enumerate}
			\item $D_m$ is invariant under the action of $\mathfrak{sl}_{n+1}$. \label{gln+1_inv}
			\item The R-matrix relations \label{R_matrix_rel_sln+1}
			\begin{align*}
				&D_{1,...,i+1,i,...,m}(\lambda_1,...,\lambda_{i+1},\lambda_i,...,\lambda_m) =\\ &R_{i+1,i}(\lambda_{i+1,i})D_{1,...,m}(\lambda_1,...,\lambda_n)R_{i,i+1}(\lambda_{i,i+1}).
			\end{align*}
			\item Left-right reduction relations \label{left-right_sln+1}
			\begin{align*}
				\operatorname{tr}_1(D_{1,...,m}(\lambda_1,...,\lambda_m)) &= D_{2,...,m}(\lambda_2,...,\lambda_m)\\
				\operatorname{tr}_m(D_{1,...,m}(\lambda_1,...,\lambda_m)) &= D_{1,...,m-1}(\lambda_1,...,\lambda_{m-1}).
			\end{align*}
			\item In contrast to the $\mathfrak{sl}_2$ case the fundamental and the antifundamental representation are not isomorphic. This has the consequence that the rqKZ equation splits into two parts\label{rqKZ_sln+1}
			\begin{align}
				&D^{(1)}_{\bar{1}2...m}(\lambda_1-\frac{n+1}{2},\lambda_2,...,\lambda_m) =\notag\\ &A^{(1)}_{1,\bar{1}|2,...,m}(\lambda_1|\lambda_2,...,\lambda_m)\left(D_{1...m}(\lambda_1,\lambda_2,...,\lambda_m)\right):=\notag\\
				&\operatorname{tr}_1(R_{1m}(\lambda_1-\lambda_m)\cdots R_{12}(\lambda_1-\lambda_2)D_{1...m}(\lambda_1,\lambda_2,...,\lambda_m)\notag\\
				&\qquad (n+1)P^{-}_{1,\bar{1}}R_{21}(\lambda_2-\lambda_1)\cdots R_{m1}(\lambda_m-\lambda_1)),\\
				&D_{1...m}(\lambda_1-\frac{n+1}{2},\lambda_2,...,\lambda_m) =\notag\\ &A^{(2)}_{\bar{1},1|2,...,m}(\lambda_1|\lambda_2,...,\lambda_m)(D^{(1)}_{\bar{1},2...m}(\lambda_1,\lambda_2,...,\lambda_m)):=\notag\\
				&\operatorname{tr}_{\bar{1}}(\bar{\bar{R}}_{\bar{1}m}(\lambda_1-\lambda_m)\cdots \bar{\bar{R}}_{\bar{1}2}(\lambda_1-\lambda_2)D^{(1)}_{\bar{1},2...m}(\lambda_1,\lambda_2,...,\lambda_m)\notag\\
				&\qquad (n+1)P^{-}_{1,\bar{1}}\bar{R}_{2\bar{1}}(\lambda_2-\lambda_1)\cdots \bar{R}_{m\bar{1}}(\lambda_m-\lambda_1)).
			\end{align}
			\seti
		\end{enumerate}
		$\odot$
	\end{ppt}
	To keep the notation short we omit the indices $2,\dots,m$ and write $A^{(1)}_{1,\bar{1}}(\lambda_1|\lambda_2,\dots,\lambda_m)$ respectively $A^{(2)}_{\bar{1},1}(\lambda_1|\lambda_2,\dots,\lambda_m)$ from now on.
	$A^{(1)}_{1,\bar{1}}(\lambda_1|\lambda_2,\dots,\lambda_m)$ and $A^{(2)}_{\bar{1},1}(\lambda_1|\lambda_2,\dots,\lambda_m)$ are depicted in figure \ref{fig:A1andA2}, where the cross on the bottom right stands for the operator $(n+1)PP^-$ as explained in section \ref{subsect:graphical_notation}.
	\begin{figure}[H]
		\centering
		\includegraphics[scale = .6, trim = 5.75cm 8.75cm 5cm 7.25cm]{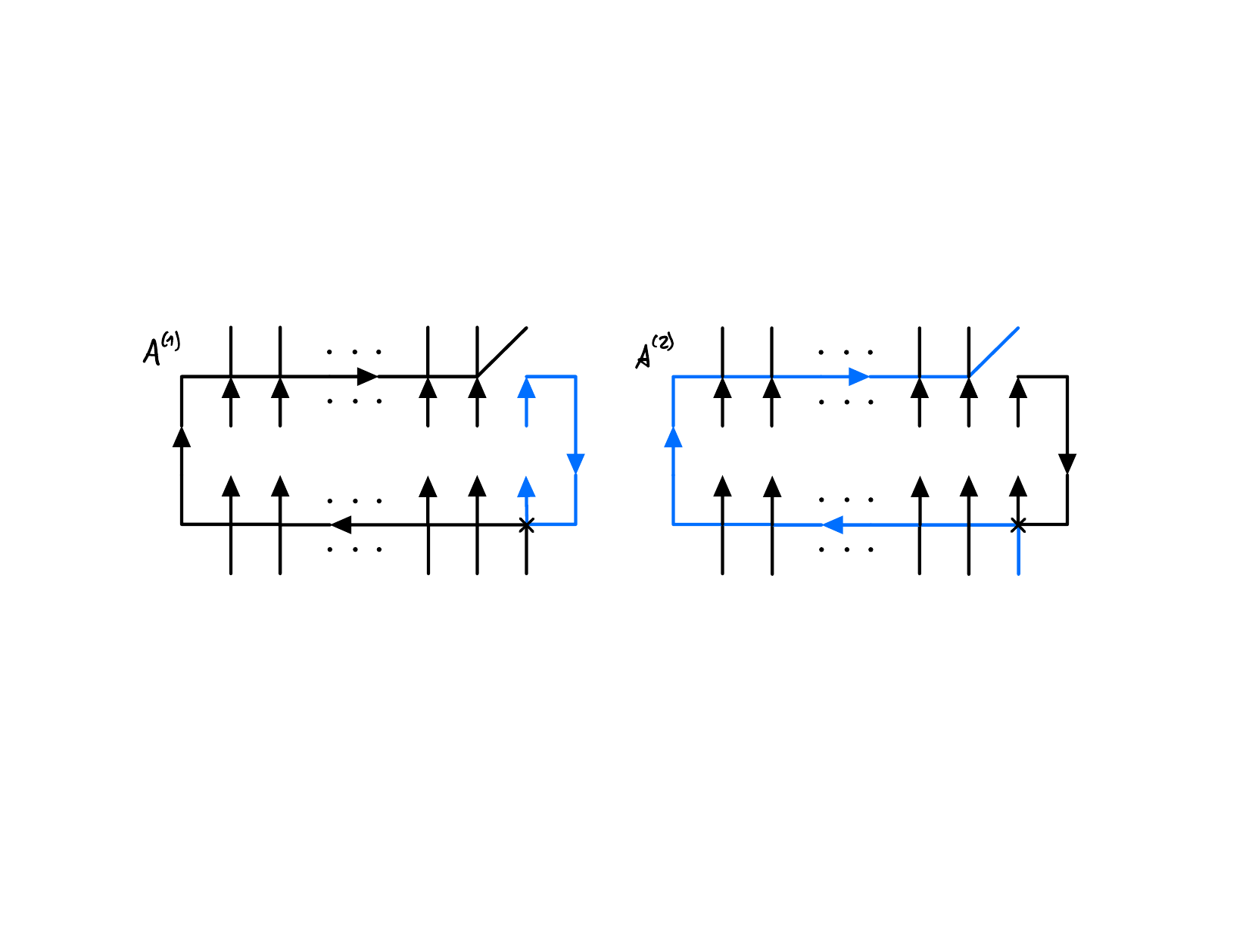}
		\caption{$A^{(1)}_{1,\bar{1}}$ and $A^{(2)}_{\bar{1},1}$.}
		\label{fig:A1andA2}
	\end{figure}
	A complete derivation of the rqKZ equation in the general case is given in the paper \cite{KNR}.
	\begin{rem}\hspace{1em}
		\begin{itemize}
			\item Due to $\rho(\lambda)\rho(-\lambda) = 1$ and $\rho(\lambda)\rho(n+1-\lambda) = -\frac{(n+1-\lambda)\lambda}{(n-\lambda)(1-\lambda)}$ the coefficients in property (\ref{R_matrix_rel_sln+1}) and (\ref{rqKZ_sln+1}) are rational.
			\item $D_{1,\dots,m}(\lambda_1,\dots,\lambda_m)$ is translationally invariant
			\begin{align*}
				D_{1,\dots,m}(\lambda_1+u,\dots,\lambda_m+u) = D_{1,\dots,m}(\lambda_1,\dots,\lambda_m)
			\end{align*}
			\item $D_{1,\dots,m}(\lambda_1,\dots,\lambda_m)$ fulfils the colour conservation rule
			\begin{align*}
				\left[D_{1,\dots,m}(\lambda_1,\dots,\lambda_m)\right]_{\epsilon_1\dots\epsilon_m}^{\bar{\epsilon}_1\dots\bar{\epsilon}_m} = 0\quad \text{if}\quad \exists k\in\{1,\dots,n\} \quad m_k(\epsilon)\neq m_k(\bar{\epsilon})
			\end{align*}
			where the components of $D$ are given by
			\begin{align*}
				\left[D_{1,\dots,m}(\lambda_1,\dots,\lambda_m)\right]_{\epsilon_1\dots\epsilon_m}^{\bar{\epsilon}_1\dots\bar{\epsilon}_m}:=D_{1,\dots,m}(\lambda_1,\dots,\lambda_m)\left((\tensor{E}{_{\epsilon_1}^{\bar{\epsilon}_1}})_1\cdots (\tensor{E}{_{\epsilon_n}^{\bar{\epsilon}_n}})_m\right),
			\end{align*}
			$\tensor{E}{_{\epsilon_i}^{\bar{\epsilon}_i}} = e_{\epsilon_i}\otimes e^{\bar{\epsilon}_i}$ and $m_k(\epsilon)$ is the number of $\epsilon_i$, $i=1,\dots,m$, with $\epsilon_i=k$. $\odot$
		\end{itemize}
	\end{rem}
	Similarly, we have a conjecture for the analytic properties of $D$. They should be proven in the same way as for the $\mathfrak{sl}_2$ case using integral formulas obtained from the vertex operator approach in the massive regime and taking the limit $q\to1$. Looking at the results in the paper \cite{BHN}, we can check that they are satisfied for $\sltr$ in the case of the one, two and three point density matrix. Generally, one could use the results in the papers \cite{KQ} \cite{FK}. In this paper we will assume them to be correct and leave a complete proof open to future work.
	\begin{conj}The analytic properties of $D$ are as follows.
		\label{conj:analyt_prop_sln+1}
		\begin{enumerate}
			\conti
			\item 	$D_{1,...,m}$ is meromorphic in $\lambda_1,...,\lambda_m$ with at most simple poles at $\lambda_i-\lambda_j \in \mathbb{Z}\backslash\{0,\pm 1,\dots,\pm n\}$.
			%which are at most simple in $(n+1)\mathbb{Z}$
			\label{analyt_sln+1} 
			%$D_{1,...,m}$ (respectively $D^{(1)}_{1,...,m}$) is meromorphic in $\lambda_1,...,\lambda_m$ with at most simple poles at $\lambda_i-\lambda_j \in (\frac{n+1}{2}\mathbb{Z})\backslash\{0,\pm \frac{n+1}{2}\}$
			\item $\forall\, 0<\delta<\pi:$
			\begin{align*}
				\lim\limits_{\substack{\lambda_1\to \infty \\ \lambda_1\in S_\delta}} D_{1,..,m}(\lambda_1,...,\lambda_m) = \frac{1}{n+1}\boldsymbol{1}_1D_{2,...,m}(\lambda_2,...,\lambda_m),
			\end{align*}
			where $S_\delta :=\{\lambda\in\mathbb{C}|\delta<|\arg(\lambda)|<\pi-\delta\}$. \label{asympt_sln+1} $\odot$
			\seti
		\end{enumerate}
	\end{conj}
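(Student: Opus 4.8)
The plan is to follow the same route as in the $\mathfrak{sl}_2$ case (Appendix~B of \cite{BJMST}, built on the integral formulas of \cite{JM},\cite{KIEU}): start from the multiple-integral representation of $D_m$ produced by the vertex operator approach in the massive regime of $U_q(\widehat{\mathfrak{sl}}_{n+1})$, read off its analytic structure in the inhomogeneities, and then take the rational limit $q\to1$. Concretely, each matrix element $[D_m]^{\bar\epsilon}_{\epsilon}$ is written as a multiple contour integral whose integrand is a product of elementary factors --- matrix elements of the vertex operators together with the scalar normalizations $\rho(\lambda_i-\lambda_j)$ --- so that meromorphy in $\lambda_1,\dots,\lambda_m$ is manifest once convergence is controlled. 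The two statements to establish are then (i) the location and order of the poles in $\lambda_i-\lambda_j$ and (ii) the $\lambda_1\to\infty$ asymptotics in the sector $S_\delta$.

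For the pole structure I would argue as follows. By the $R$-matrix relations (property~\ref{R_matrix_rel_sln+1}) it suffices to analyse the dependence on $\lambda_{12}=\lambda_1-\lambda_2$, since permutations only dress $D_m$ by $R$-matrices whose prefactors are rational, using the functional relations $\rho(\lambda)\rho(-\lambda)=1$ and $\rho(\lambda)\rho(n+1-\lambda)=-\tfrac{(n+1-\lambda)\lambda}{(n-\lambda)(1-\lambda)}$ recorded above. Singularities in $\lambda_{12}$ can then arise only from two sources: explicit poles of the integrand, and pinching of an integration contour between two colliding poles as $\lambda_{12}$ varies. In the massive model the relevant poles of the integrand sit on a multiplicative lattice dictated by the quasi-periodicity of the vertex-operator matrix elements; upon passing to the additive spectral parameters and letting $q\to1$ this lattice collapses onto $\Z$, giving $\lambda_i-\lambda_j\in\Z$. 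The exclusion of $\{0,\pm1,\dots,\pm n\}$ and the simplicity of the surviving poles are the delicate points: at each of these $2n+1$ values I expect the would-be residue to be killed by the zeros of $\rho$ (through its functional relations) together with the singlet/projector structure carried by the crossing transforms $\bar R$ and $\bar{\bar R}$, reflecting that below spectral separation $n+1$ the fused modules stay irreducible (no \emph{special position}, in the sense generalising Proposition~\ref{prop:special_position}), so that the reduction mechanism generalising (\ref{eqn:red_rel_res_sl2}) first produces a pole only at $|\lambda_{ij}|=n+1$; simplicity follows from showing that at most a single pair of contour poles pinches at each lattice point.

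For the asymptotics, as $\lambda_1\to\infty$ within $S_\delta$ every $R(\lambda_1-\lambda_j)=\tfrac{\rho(\lambda_1-\lambda_j)}{\lambda_1-\lambda_j+1}\big((\lambda_1-\lambda_j)\,1+P\big)$ tends to a scalar multiple of the identity, so the first line decouples from the remaining $m-1$ lines. Combining this decoupling with the left-right reduction $\operatorname{tr}_1 D_m=D_{2,\dots,m}$ (property~\ref{left-right_sln+1}) forces the decoupled site into the maximally mixed state $\tfrac{1}{n+1}\boldsymbol 1_1$, reproducing the claimed limit $\tfrac{1}{n+1}\boldsymbol 1_1 D_{2,\dots,m}$. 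The sector condition $\delta<|\arg\lambda_1|<\pi-\delta$ removes neighbourhoods of the positive and negative real axes, which is exactly what is needed so that the subleading, non-scalar contributions decay uniformly: it is on the real axis that the $\Gamma$-factors entering $\rho$ produce poles and obstruct the decay, whereas along the imaginary direction the estimates are clean.

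The hard part will be part~(i) at the $2n+1$ exceptional points. For $\mathfrak{sl}_2$ only the three values $0,\pm1$ must be excluded and the cancellation is carried out explicitly in Appendix~B of \cite{BJMST}; for higher rank the enlarged exceptional set, together with the fact that the rqKZ equation now splits into the two pieces $A^{(1)}$ and $A^{(2)}$ (property~\ref{rqKZ_sln+1}), makes the required cancellations considerably more intricate, since one must track how the reducibility at each special point feeds through \emph{both} crossing transforms. A secondary but genuine difficulty is justifying the interchange of the limit $q\to1$ with the contour integration, together with the attendant contour deformations; I would control this, and cross-check the resulting pole list, against the explicit $\mathfrak{sl}_3$ results of \cite{BHN},\cite{KR} for operator length $m\le 3$.
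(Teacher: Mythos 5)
The first thing to say is that the paper does not prove this statement at all: it is labelled a conjecture, and the text immediately preceding it states that the properties ``should be proven in the same way as for the $\mathfrak{sl}_2$ case using integral formulas obtained from the vertex operator approach in the massive regime and taking the limit $q\to1$'', that they can be checked against the explicit $\mathfrak{sl}_3$ results of \cite{BHN},\cite{KR} for $m\le 3$ (possibly using \cite{KQ},\cite{FK} in general), and that a complete proof is left to future work. Your proposal is therefore not being measured against an actual proof; it is an elaboration of exactly the strategy the authors indicate, so in terms of approach you are aligned with the paper.

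As a proof, however, what you have written is a plan rather than an argument, and the gaps you flag yourself are precisely the ones that matter. For part (\ref{analyt_sln+1}), the exclusion of poles at $\lambda_i-\lambda_j\in\{0,\pm1,\dots,\pm n\}$ and the simplicity of the surviving poles are asserted to follow from ``zeros of $\rho$ together with the singlet/projector structure'' and from ``at most a single pair of contour poles pinching'', but neither mechanism is exhibited, and for higher rank the pole lattice of the $U_q(\widehat{\mathfrak{sl}}_{n+1})$ vertex-operator matrix elements is not a routine transcription of Appendix B of \cite{BJMST}. For part (\ref{asympt_sln+1}) the argument as written is circular: $D_m$ in the thermodynamic limit is not a finite product of $R$-matrices, so ``every $R(\lambda_1-\lambda_j)$ tends to a scalar'' does not by itself decouple site $1$, and combining with $\operatorname{tr}_1 D_m=D_{2,\dots,m}$ only identifies the normalization of a factorized limit whose existence is the thing to be proved; in the $\mathfrak{sl}_2$ case this limit is likewise extracted from the integral formula. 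One concrete point worth incorporating: Appendix \ref{App:A} of the paper contains a short prefactor computation showing that, once regularity at the $2n+1$ exceptional points is granted, the rqKZ equation confines all further poles in $\lambda_1-\lambda_2$ to the values $-k(n+1)$ and $-k(n+1)-1$ and forces them to be at most simple. That computation is a consequence of the conjectured regularity, not a proof of it, but it disposes of the location-and-simplicity half of part (\ref{analyt_sln+1}) far more cheaply than a contour-pinching analysis, and it would reduce what your integral representation actually has to deliver to the regularity statement at $\lambda_i-\lambda_j\in\{0,\pm1,\dots,\pm n\}$ together with meromorphy.
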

	Finally, we need an analogue of the relation (\ref{proj_red_rel_sl2}) in Remark \ref{rem:rel7sl2} which was the key identity to be able to decouple the Snail Operator from the density matrix. Fortunately, it naturally generalizes to higher rank as follows. As we only have a singlet in the tensor product of the fundamental and antifundamental representation, we have to start with $D^{(1)}$ in the general case.
	\begin{cor}\hspace{1em}
		\label{cor:rel7_sl3}
		\begin{enumerate}
			\conti
			\item Using again the Properties (\ref{R_matrix_rel_sln+1}), (\ref{left-right_sln+1}), (\ref{rqKZ_sln+1}) and the analyticity of $D_m$ at $\lambda_1=\lambda_2$ we obtain
			\begin{align*}
				P^-_{\bar{1}2} D^{(1)}_{\bar{1},2,\dots,m}(\lambda-\frac{n+1}{2},\lambda,\dots,\lambda_n) = P^-_{\bar{1}2}D_{3,\dots,m}(\lambda_3,\dots,\lambda_n),
			\end{align*}
			where $(P^{-}_{\bar{1}1})^2=P^{-}_{\bar{1}1}$ is the projector onto the singlet. $\odot$
			\label{proj_red_rel_sln+1}
		\end{enumerate}
	\end{cor}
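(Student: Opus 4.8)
\emph{Proof strategy.} The plan is to reproduce, step for step, the graphical derivation of the $\mathfrak{sl}_2$ identity (\ref{proj_red_rel_sl2}) in Remark \ref{rem:rel7sl2}, using the first component $A^{(1)}$ of the split rqKZ equation (\ref{rqKZ_sln+1}) in place of the single $\mathfrak{sl}_2$ equation. First I would start from
\begin{align*}
	D^{(1)}_{\bar{1}2\dots m}(\lambda_1-\tfrac{n+1}{2},\lambda_2,\dots,\lambda_m) = A^{(1)}_{1,\bar{1}}(\lambda_1|\lambda_2,\dots,\lambda_m)\bigl(D_{1\dots m}(\lambda_1,\lambda_2,\dots,\lambda_m)\bigr)
\end{align*}
and specialise $\lambda_1=\lambda_2=\lambda$. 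By the analyticity part (\ref{analyt_sln+1}) of Conjecture \ref{conj:analyt_prop_sln+1}, $D_{1\dots m}$ is regular at $\lambda_1=\lambda_2$ (since $\lambda_1-\lambda_2=0$ is excluded from the pole set $\mathbb{Z}\setminus\{0,\pm1,\dots,\pm n\}$), so the specialisation is legitimate and its left-hand side is exactly $D^{(1)}_{\bar{1},2,\dots,m}(\lambda-\tfrac{n+1}{2},\lambda,\dots,\lambda_m)$.

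At the coincident point the two innermost $R$-matrices $R_{12}(\lambda_1-\lambda_2)$ and $R_{21}(\lambda_2-\lambda_1)$ in $A^{(1)}$ degenerate to permutation operators, because $R(0)=P$ by (\ref{eqn:Rff}) (one checks $\rho(0)=1$). I would then multiply on the left by $P^-_{\bar{1}2}$, which commutes with $\operatorname{tr}_1$ and with every factor carrying neither the index $2$ nor $\bar{1}$, and use the Yang--Baxter equation to drag the quantum line with parameter $\lambda$ (space $2$) straight through the strings $R_{1m}\cdots R_{13}$ and $R_{31}\cdots R_{m1}$, precisely as the arrows in Figure \ref{fig:proj_red_rel_sl2_01} do for $\mathfrak{sl}_2$.

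The decisive step is to split the singlet projectors into a singlet composed with its dual. Using the normalisation $(n+1)P^-=(\text{singlet})\circ(\text{dual singlet})$ fixed in Subsection \ref{subsect:graphical_notation} and the idempotency $(P^-)^2=P^-$, the factor $(n+1)P^-_{1\bar{1}}$ coming from $A^{(1)}$ and the prefactor $P^-_{\bar{1}2}$ provide a singlet on $\overline{V}_{\bar{1}}\otimes V_2$ joined to a dual singlet on $V_1\otimes\overline{V}_{\bar{1}}$ along the antifundamental line $\bar{1}$. Invoking the crossing relations of Subsection \ref{subsect:dual modules and crossing}, the zig-zag (snake) identity makes these two morphisms cancel along the straight lines, collapsing the contraction over spaces $1$ and $\bar{1}$. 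The remaining $\operatorname{tr}_1$ together with the surviving permutation then implement the left-right reduction property (\ref{left-right_sln+1}) on $D_{1\dots m}$, removing spaces $1$ and $2$; since the result $D_{3,\dots,m}(\lambda_3,\dots,\lambda_m)$ is $\lambda$-independent it can be pulled out, leaving $P^-_{\bar{1}2}D_{3,\dots,m}(\lambda_3,\dots,\lambda_m)$, as claimed.

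The hard part will be the bookkeeping forced by $V\not\cong\overline{V}$ in higher rank, where the cancellation can no longer be read off from the self-duality used for $\mathfrak{sl}_2$. I would have to track the charge-conjugation operator $C$ and the crossing transforms $\bar{R}$, $\bar{\bar{R}}$ of Definition \ref{def:r_matrix_for_f_and_fbar}, the spectral shift $-\tfrac{n+1}{2}$ encoded in $V^\circledast(\lambda)=V^*(\lambda-\tfrac{n+1}{2})$, and the exact powers of $(n+1)$, in order to confirm that the singlet on $\overline{V}_{\bar{1}}\otimes V_2$ really pairs with the one from $P^-_{1\bar{1}}$ to the identity rather than to a nonzero multiple of it. Checking that these shifts and normalisations conspire so that the two singlets are genuinely mutually dual --- automatic for $\mathfrak{sl}_2$, but not a priori for $n\geq2$ --- is the crux of the argument.
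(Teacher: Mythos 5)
Your proposal follows exactly the paper's own derivation: apply the first rqKZ equation $A^{(1)}$ at $\lambda_1=\lambda_2$, use $R(0)=P$, drag the second line out with the Yang--Baxter equation, split the two singlet projectors into a singlet composed with its dual so that the zig-zag cancellation collapses the contraction, and finish with the left-right reduction property. The normalisation and duality bookkeeping you flag as the "crux" is precisely what the paper disposes of via the charge-conjugation operator $C$ and the crossing conventions of Definition \ref{def:r_matrix_for_f_and_fbar}, so your approach is essentially identical to theirs.
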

	The derivation is done in the same way as for $\mathfrak{sl}_2$. Let us explain it again anyway. After applying the rqKZ equation once, we use the fact that $R(0) = P$ (figure \ref{fig:proj_red_rel_sln+1_01} inside the red circles) and use the YBE to pull the second line out to the left (long green arrows in figure \ref{fig:proj_red_rel_sln+1_01}). After that, we split up the two operators $2P^-$ (the two crosses in figure \ref{fig:proj_red_rel_sln+1_01}) into the tensor product of a singlet and its dual. The result is shown in figure \ref{fig:proj_red_rel_sln+1_02} on the left. Finally, the singlet and its dual on the straight lines cancel out as they are now considered as mappings from $V^\star$ to $V$ and $V$ to $V^\star$, respectively. \footnote{ To be precise, the singlet is considered a mapping from $^*V$ to $V$ in this case. In general it relates either $^*V$ and $V$ or $V^*$ and $V$. Similarly for its dual.} Of course, they correspond to the different ways of interpreting the charge conjugation operator $C$. Using the left-right reduction relation (property (\ref{left-right_sln+1})), we obtain the result on the right in figure \ref{fig:proj_red_rel_sln+1_02}.
	\begin{figure}[H]
		\centering
		\begin{minipage}[b]{.44\linewidth} % [b] => Ausrichtung an \caption
			\includegraphics[scale = .475, trim = 5cm 6.5cm 4cm 3.5cm]{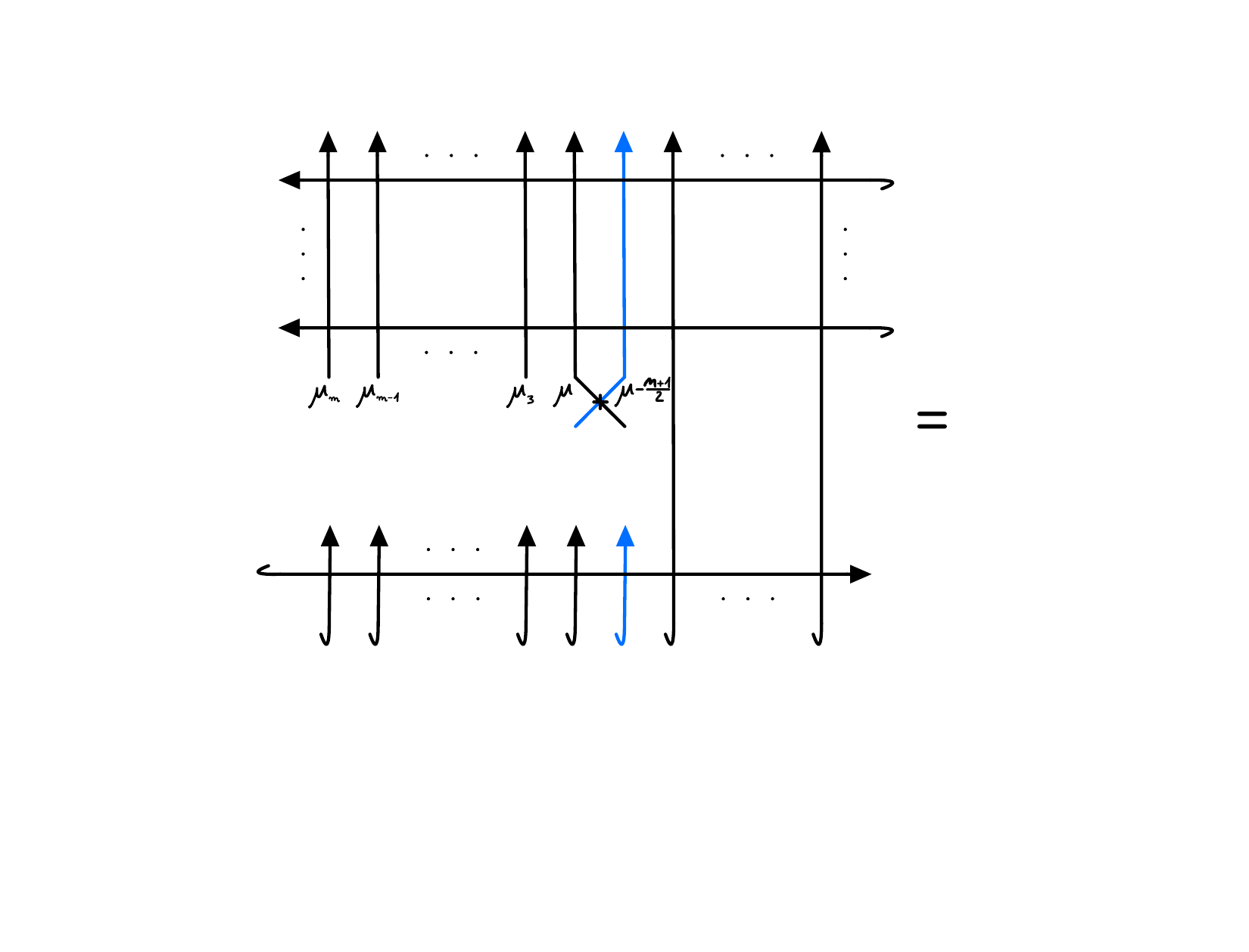}
		\end{minipage}
		\hspace{.1\linewidth}% Abstand zwischen Bilder
		\begin{minipage}[b]{.44\linewidth} % [b] => Ausrichtung an \caption
			\includegraphics[scale = .475, trim = 4cm 7cm 4cm 1.75cm]{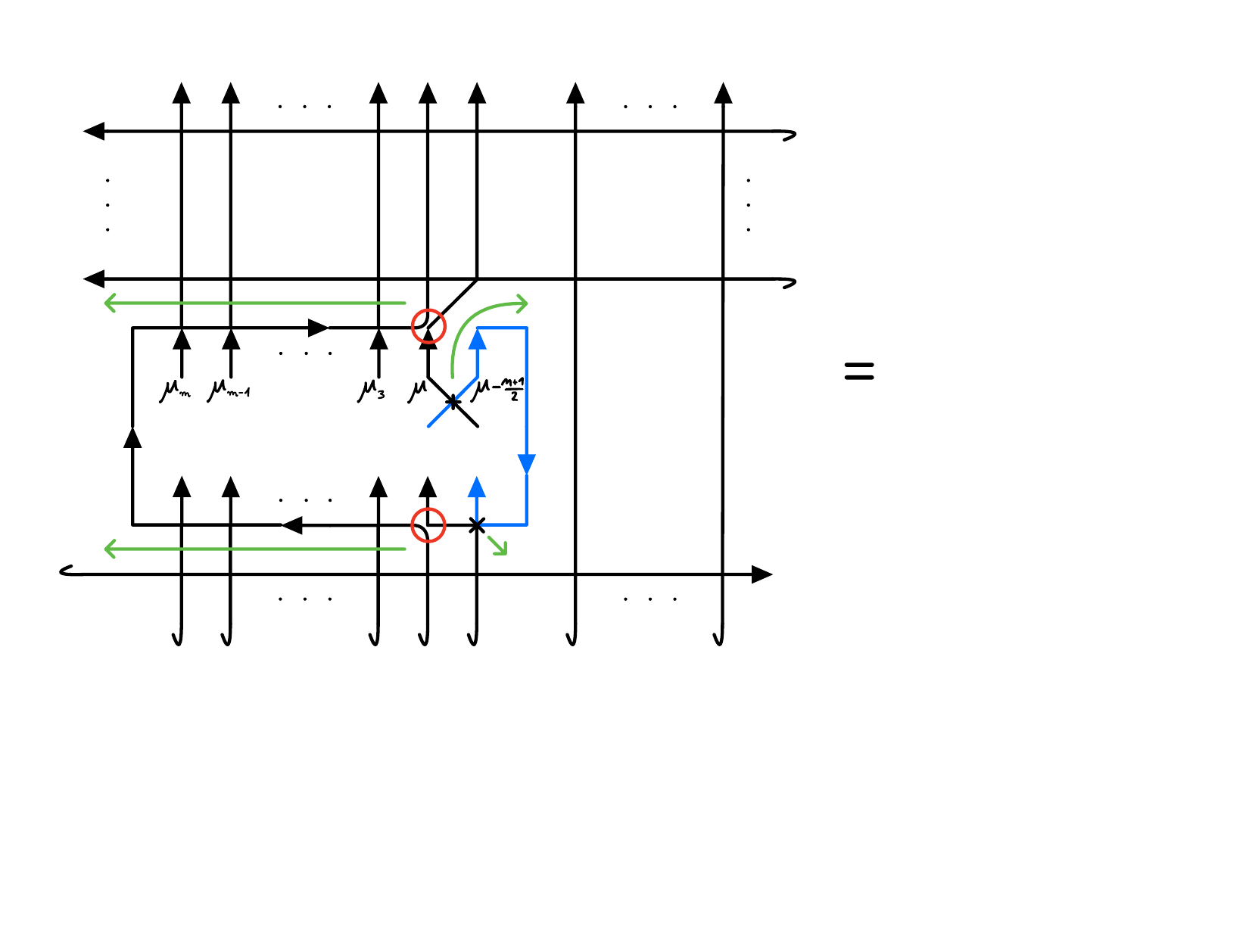}
		\end{minipage}
		\caption{}
		\label{fig:proj_red_rel_sln+1_01}
	\end{figure}
	\begin{figure}[H]
		\centering
		\begin{minipage}[b]{.44\linewidth} % [b] => Ausrichtung an \caption
			\includegraphics[scale = .475, trim = 1cm 7cm 3cm 2.5cm]{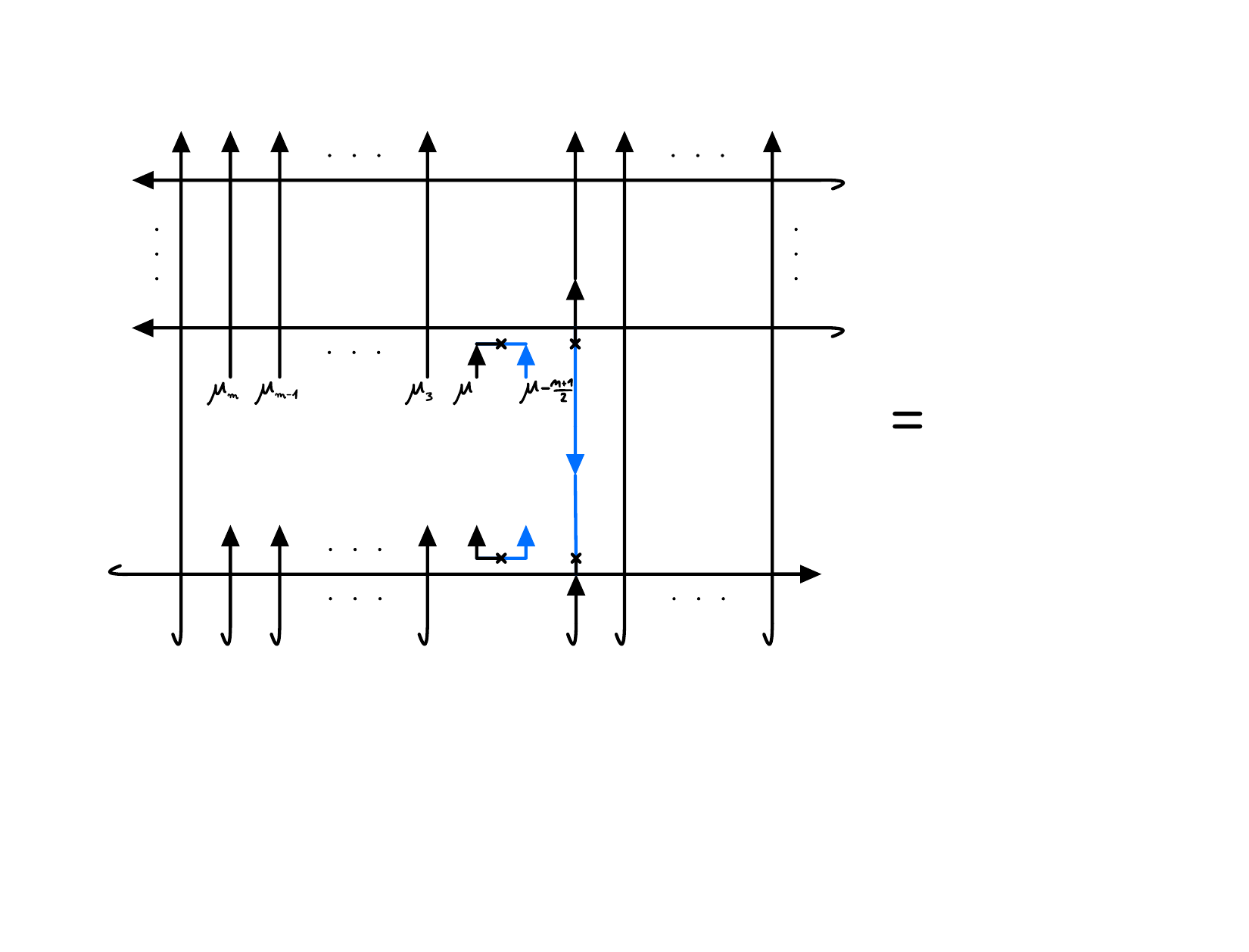}
		\end{minipage}
		\hspace{.1\linewidth}% Abstand zwischen Bilder
		\begin{minipage}[b]{.44\linewidth} % [b] => Ausrichtung an \caption
			\includegraphics[scale = .475, trim = 4cm 6cm 4cm 4.5cm]{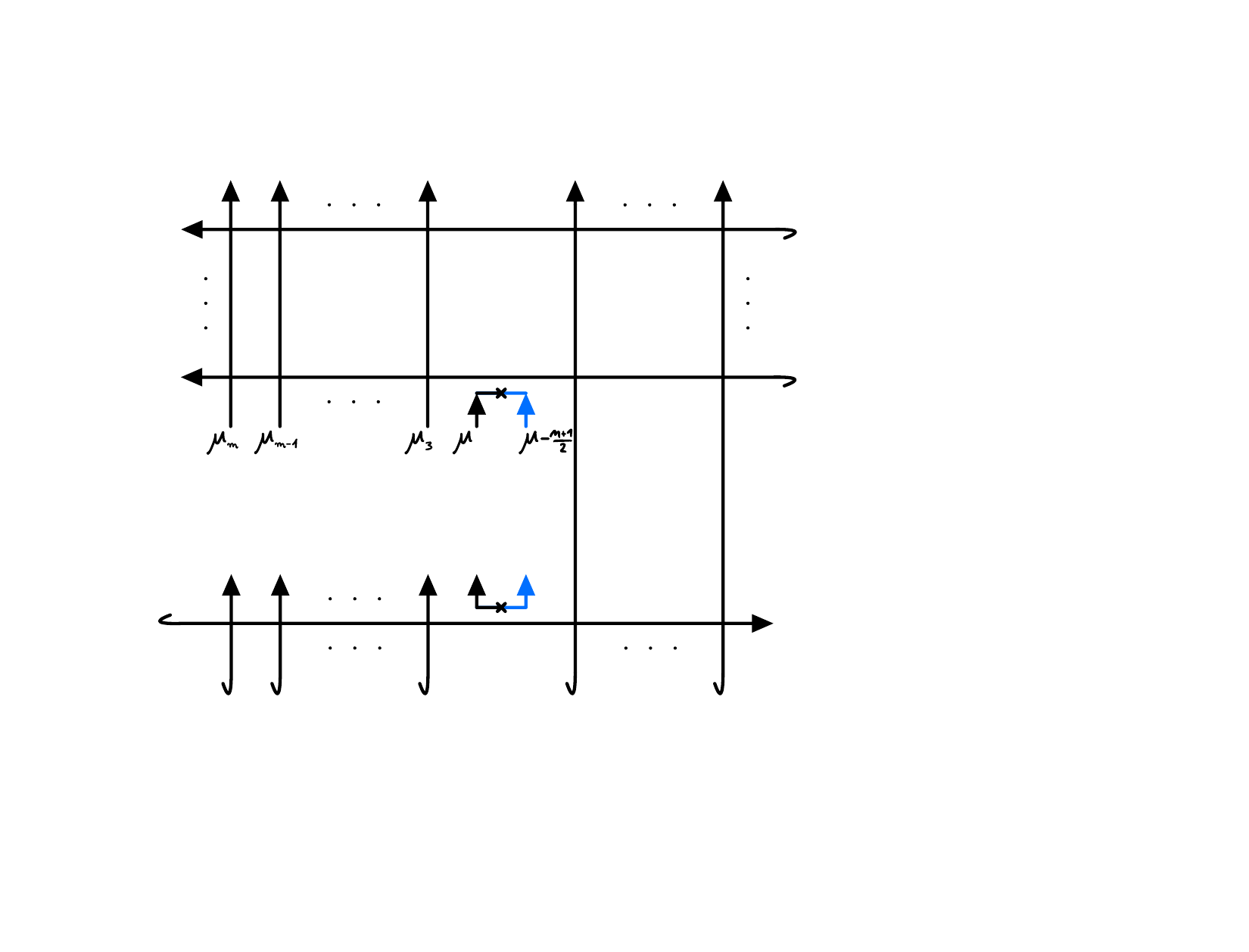}
		\end{minipage}
		\caption{}
		\label{fig:proj_red_rel_sln+1_02}
	\end{figure}
	However, in general we only obtain this reduction property for $D^{(1)}$. Looking at $D$, we can at best hope for relations that include projectors which appear in the tensor product of two fundamental representations of $\mathfrak{sl}_{n+1}$.
	
	For now, let us close the generalization of the properties of $D$ (and $D^{(1)}$). We will come back to this problem in a moment.
	The idea for the generalization of the $\mathfrak{sl}_2$ construction remains the same. Since $D_{1,...,m}$ is meromorphic in $\lambda_1,...,\lambda_m$ with at most simple poles at $\lambda_i-\lambda_j \in \mathbb{Z}\backslash\{0,\pm 1,\dots,\pm n\}$, it is completely determined by its residues and asymptotic behaviour. %which are at most simple in $(n+1)\mathbb{Z}$, we are interested in the residues of $D_{1,...,m}$ in $(n+1)\mathbb{Z}$.
	Using the $R$-matrix relations, we can assume $i=1$, $j=2$ and $k$ positive without loss of generality. 
	%Since $D_{1,\dots,m}(\lambda_1,\dots,\lambda_m)$ is meromorphic in $\lambda_1$ with at most simple poles, it is completely determined by its residues and asymptotic behaviour.%
	In fact, looking at the residue of the operator $A^{(2)}_{\bar{1},1}(\lambda_1|\lambda_2,\dots,\lambda_m)$ at $\lambda_1 = \lambda_2-\frac{n+1}{2}$, we obtain the projector $P^-_{\bar{1}2}$ onto the singlet as before. This means that we can calculate the residues of $D_{1,\dots,m}(\lambda_1,\lambda_2,\dots,\lambda_m)$ at $\lambda_1=\lambda_2-k(n+1)$, $k=1,2,\dots$, in the exact same way as for $\mathfrak{sl}_2$. We explain it again anyway, as this is the case when we have to apply the rqKZ equation $2k-1$ times (i.e. $2k-1$ loops). Starting with $D_{1,\dots,m}(\lambda_1-k(n+1),\lambda_2,\dots,\lambda_m)$, we apply the rqKZ equation (property (\ref{rqKZ_sln+1})) $2k-1$ times.
	\begin{align}
		\notag
		&\underset{\lambda_{1} =\lambda_2 -k(n+1)}{\text{res}} D_{1,\dots,m}(\lambda_1,\dots,\lambda_m) = \underset{\lambda_{1} = \lambda_2}{\text{res}} D_{1,2,\dots,m}(\lambda_1-k(n+1),\lambda_2,\dots,\lambda_m) =\\
		&\notag
		\underset{\lambda_{1} = \lambda_2}{\text{res}} \{A^{(2)}_{\bar{1},1}(\lambda_1-(2k-1)\frac{n+1}{2}|\lambda_2,\dots,\lambda_m)A^{(1)}_{1,\bar{1}}(\lambda_1-(2k-2)\frac{n+1}{2}|\lambda_2,\dots,\lambda_m)\cdots\\
		&\cdots A^{(2)}_{\bar{1}1}(\lambda_1-\frac{n+1}{2}|\lambda_2,\dots,\lambda_m)D^{(1)}_{\bar{1},2,\dots,m}(\lambda_1-\frac{n+1}{2},\lambda_2,\dots,\lambda_m)\}.
		\label{eqn:rqKZ_2k-1_shift_sln+1}
	\end{align}
	Now, we use the relation (\ref{proj_red_rel_sln+1}) in corollary \ref{cor:rel7_sl3} to see that we can pull $D_{3,\dots,m}(\lambda_3,\dots,\lambda_m)$ out of the residue as it doesn't depend on $\lambda_{12}=:\lambda_1-\lambda_2$. This is only possible, because the residue of $A^{(2)}_{\bar{1}1}(\lambda_1-1|\lambda_2,\dots,\lambda_m)$ at $\lambda_1=\lambda_2$ contains the projector $P^-_{\bar{1},2}$ just at the right position. We obtain
	\begin{align}
		\underset{\lambda_{1} =\lambda_2 -(k+1)}{\text{res}} D_{1,\dots,m}(\lambda_1,\dots,\lambda_m) =\tilde{X}_{2k-1} D_{3,\dots,m}(\lambda_3,\dots,\lambda_m)
		\label{eqn:Snail_Operator_firstdef_higher_rank}
	\end{align}
	and see that the product
	\begin{align*}
		\tilde{X}_{2k-1}:=\underset{\lambda_{1} = \lambda_2}{\text{res}} \{A^{(2)}_{\bar{1},1}(\lambda_1-(2k-1)\frac{n+1}{2}|\lambda_2,\dots,\lambda_m)A^{(1)}_{1,\bar{1}}(\lambda_1-(2k-2)\frac{n+1}{2}|\lambda_2,\dots,\lambda_m)\cdots\\ \cdots A^{(2)}_{\bar{1}1}(\lambda_1-\frac{n+1}{2}|\lambda_2,\dots,\lambda_m)\}
	\end{align*}
	can be used as a first definition for the Snail Operator $\tilde{X}_{k}$ for higher rank.
	Suppose for simplicity $k=2$, then we apply the rqKZ equation three times (figure \ref{fig:rqKZ_sln+1_3_loops}).
	\begin{figure}[H]
		\centering
		\begin{minipage}[b]{.4\linewidth} % [b] => Ausrichtung an \caption
			\includegraphics[scale = .45, trim = 6cm 6cm 0cm 3cm]{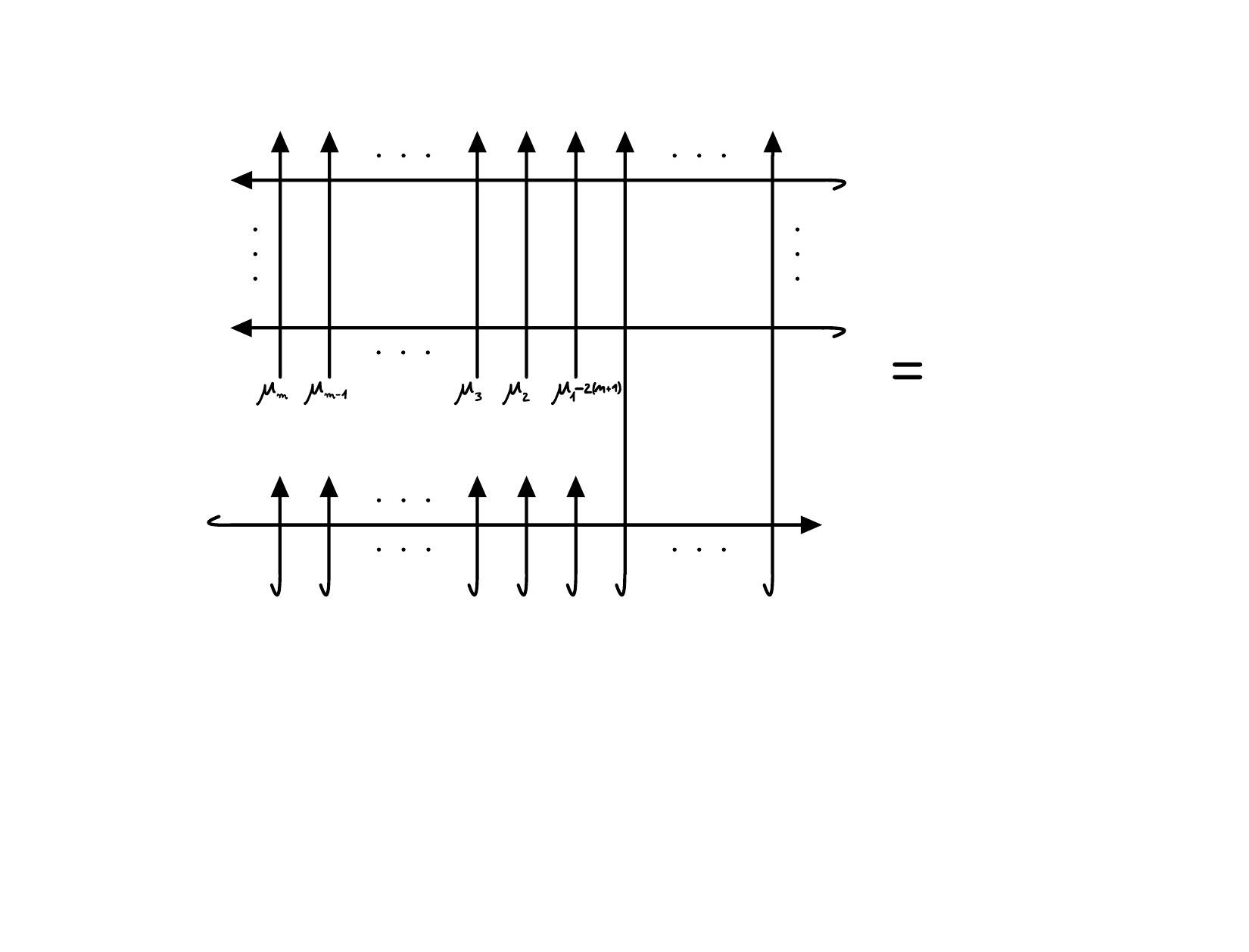}
		\end{minipage}
		\hspace{.1\linewidth}% Abstand zwischen Bilder
		\begin{minipage}[b]{.4\linewidth} % [b] => Ausrichtung an \caption
			\includegraphics[scale = .45, trim = 7cm 2.75cm 0cm 2cm]{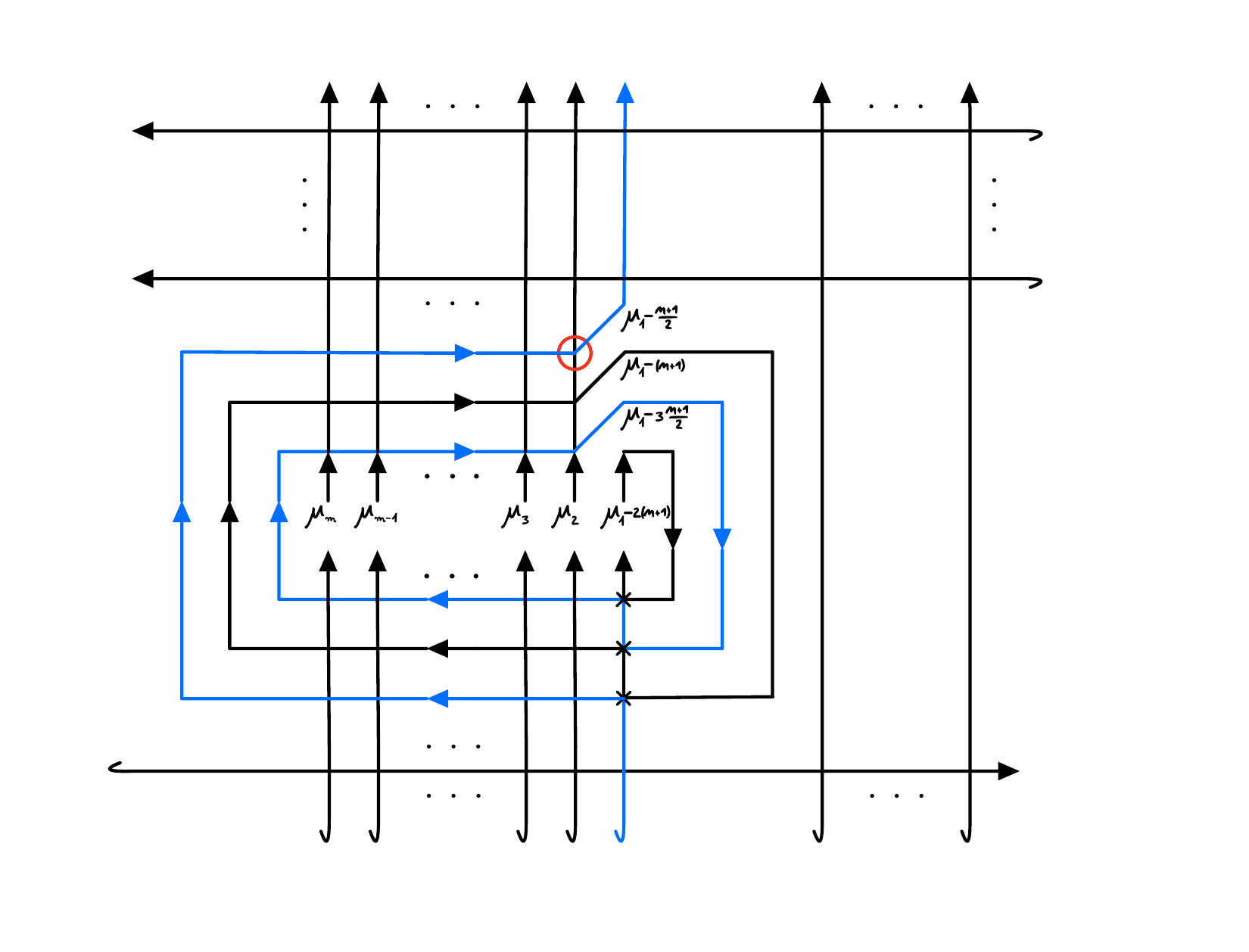}
		\end{minipage}
		\caption{Applying the rqKZ equations 3 times.}
		\label{fig:rqKZ_sln+1_3_loops}
	\end{figure}
	Taking the residue at $\mu_1 = \mu_2$, $\bar{\bar{R}}_{\bar{1}2}(\mu_1-\mu_2-\frac{n+1}{2})$ in the red circle (figure \ref{fig:rqKZ_sln+1_3_loops}) reduces to $2P^-_{\bar{1}2}$ up to a scalar prefactor. As a consequence, we can apply the relation (\ref{proj_red_rel_sln+1}) in corollary \ref{cor:rel7_sl3} to obtain the result in figure \ref{fig:Snail_with_three_loops_sln+1}.\footnote{To be precise, figure \ref{fig:Snail_with_three_loops_sln+1} has to be understood as the limit $\mu_1\to\mu_2-\frac{n+1}{2}$ of $(\mu_1-\mu_2+\frac{n+1}{2})$ times figure \ref{fig:rqKZ_sln+1_3_loops}.}
	\begin{figure}[H]
		\centering
		\includegraphics[scale = .5, trim = 3.5cm 3cm 6.5cm 2cm]{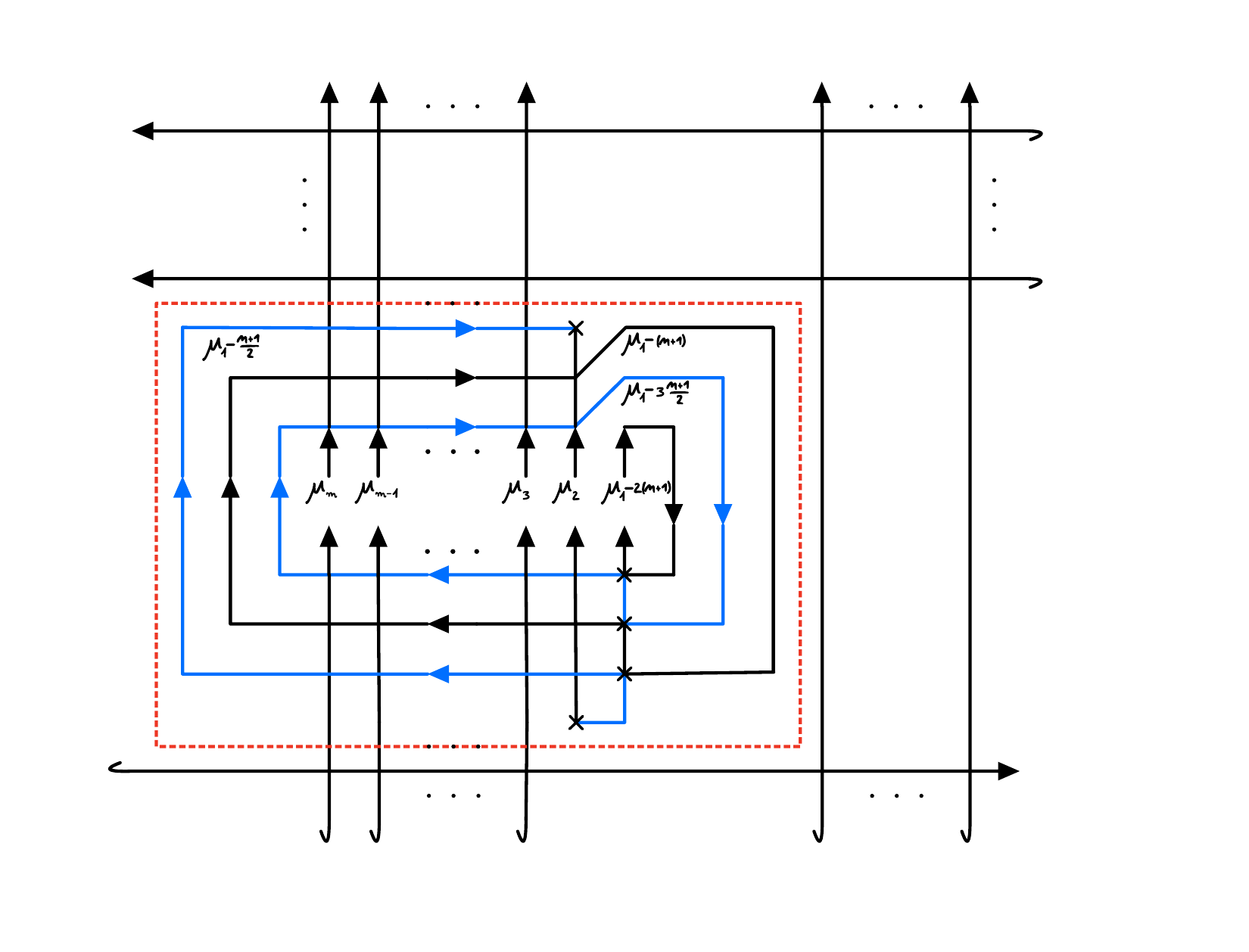}
		\caption{The Snail Operator with three loops.}
		\label{fig:Snail_with_three_loops_sln+1}
	\end{figure}
	Where we have split the operator $2P^-_{\bar{1}2}$ into the tensor product of a singlet in $\overline{V}_{\bar{1}}\otimes V_2$ (a cross with two ingoing lines) and its dual in $(\overline{V}_{\bar{1}})^\star\otimes V_2^\star$ (a cross with two outgoing lines) respectively (cf. section \ref{subsect:graphical_notation}). The operator in the box with the dashed red line (multiplied by the scalar prefactor) is the Snail Operator with three loops.
	
	Due to the fact that we have projectors in the last two loops of the Snail Operator and the spectral parameters of successive lines differ by exactly $\frac{n+1}{2}$, the Snail Operator can be further simplified by applying identities similar to figure \ref{fig:fusion_rel_sln+1}. Note that we omit the prefactor of the $R$-matrix in this figure, i.e. taking the numerical $R$-matrices $r(\lambda) = \lambda+P$ and $\bar{r} = ((\lambda+\frac{n+1}{2})1-\widetilde{C}\otimes \widetilde{C}) = \bar{\bar{r}}$ instead of $R(\lambda)$, $\bar{R}(\lambda)$ and $\bar{\bar{R}}(\lambda)$. The arguments of the $r$'s are written next to the vertices.
	\begin{figure}[H]
		\centering
		\includegraphics[scale = .6, trim = 5.75cm 9cm 5cm 4.5cm]{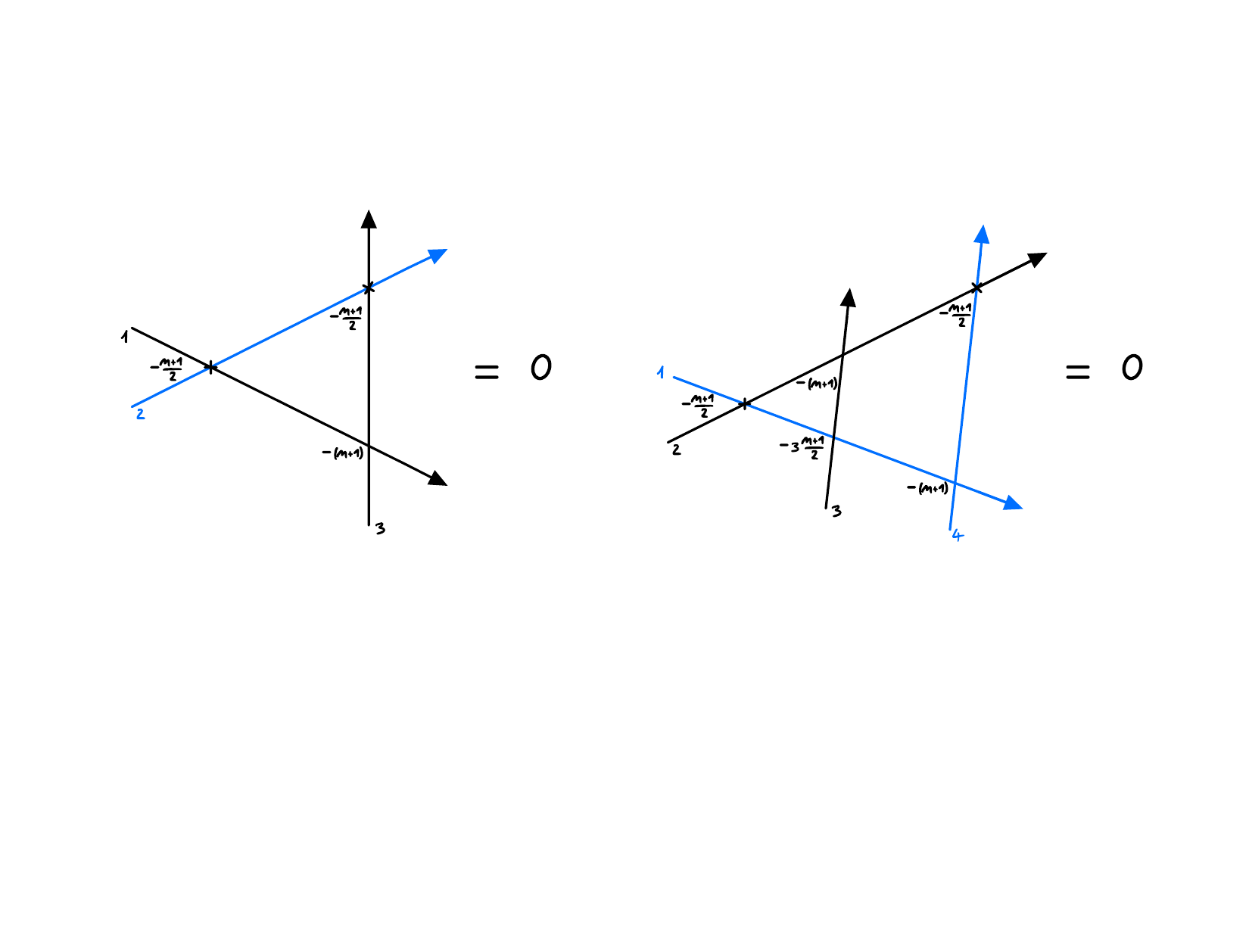}
		\caption{}
		\label{fig:fusion_rel_sln+1}
	\end{figure}
	Moreover, we claim that the $k$ loops of the Snail Operator $\tilde{X}_{k}$ again fuse to a single irreducible representation of the Yangian $Y(\mathfrak{sl}_{n+1})$, the minimal snake module which we call $S^{(k)}$. However, as a representation of $\mathfrak{sl}_{n+1}$, $S^{(k)}$ will not be irreducible anymore if $n\geq 2$. Before we give a detailed definition of snake modules \cite{MY} and the minimal snake module $S^{(k)}$, let us come back to the discussion of the pole structure of $D_{1,...,m}$ in the remaining cases. % We consider the other cases at $\lambda_1=\lambda_2-k$, $k\in \mathbb{Z}\backslash(n+1)\mathbb{Z}$. Looking at the results for $\sltr$ in the paper \cite{BHN}, the poles remain simple at least for the one, two and three point function ($m=1,2,3$). However, it has to be clarified if this changes for $m>3$ or rank $n>3$. In this sense, it may be possible to prove a stronger version of (\ref{analyt_sln+1}). Anyway, let us focus on the residues.
	We consider the residue of the density matrix $D_{1,...,m}$ at $\lambda_1=\lambda_2-k(n+1)-l$, $l=1,\dots,n$. Using the fact that $D_{1,...,m}$ has no poles at $\lambda_1-\lambda_2 \in \{0,\pm 1,\dots,\pm n\}$ (property (\ref{analyt_sln+1})) and the properties of the prefactors of $R(\lambda)$ and $\bar{R}(\lambda)$ (respectively $\bar{\bar{R}}(\lambda)$), it is easy to see that the density matrix can't have poles at $\lambda_1=\lambda_2-k(n+1)-l$ for $l=2,\dots,n$ (see Appendix \ref{App:A}). Thus, only the case $l=1$ remains for discussion. The idea is the same as in the case $l=0$. We apply the rqKZ equation to reduce $D_{1,\dots,m}(\lambda_1-k(n+1)-1,\lambda_2,\dots,\lambda_m)$ to a product of $A^{(1)}$'s and $A^{(2)}$'s acting on $D_{1,\dots,m}(\lambda_1-1,\lambda_2,\dots,\lambda_m)$
	\begin{align*}
		&D_{1,2,\dots,m}(\lambda_1-k(n+1)-1,\lambda_2,\dots,\lambda_m) =\\
		&\notag
		\underset{\lambda_{1} = \lambda_2}{\text{res}} \{A^{(2)}_{\bar{1},1}(\lambda_1-(2k-1)\frac{n+1}{2}-1|\lambda_2,\dots,\lambda_m)A^{(1)}_{1,\bar{1}}(\lambda_1-(2k-2)\frac{n+1}{2}-1|\lambda_2,\dots,\lambda_m)\cdots\\
		&\cdots A^{(1)}_{\bar{1}1}(\lambda_1-1|\lambda_2,\dots,\lambda_m)D_{1,2,\dots,m}(\lambda_1-1,\lambda_2,\dots,\lambda_m)\}.
	\end{align*}
	At $\lambda_1=\lambda_2$ we have a simple pole as long as $\lim_{\lambda_1\to\lambda_2}D_{1,2,\dots,m}(\lambda_1-1,\lambda_2,\dots,\lambda_m)\neq 0$. Taking the residue at $\lambda_1=\lambda_2$, $R_{12}(\lambda_1-\lambda_2-1)$ reduces to the projector onto the second fundamental representation with fundamental weight $\omega_2$ up to a scalar prefactor. In the case of $\mathfrak{sl}_2$ there is no fundamental weight $\omega_2$ and we observe that it is exactly the projector onto the singlet.\footnote{ Up to a minus sign which can be absorbed into the prefactor.} This is clear by the isomorphy of the fundamental and antifundamental representation (or self-duality of representations) of $\mathfrak{sl}_2$. Therefore, everything goes completely analogous to the case $l=0$ (cf. section \ref{sect:sl2snailconstr}). However, in the case of $\mathfrak{sl}_3$ the fundamental representation with fundamental weight $\omega_2$ is exactly the antifundamental representation. Thus, we expected to find the reduction relation
	\begin{align}
		R_{12}(-1)D_{1,2,\dots,m}(\lambda-1,\lambda,\dots,\lambda_m) = F_{\bar{1}}^{1,2}D^{(1)}_{\bar{1},3,\dots,m}(\lambda-\frac{1}{2},\lambda_3,\dots,\lambda_m)F_{1,2}^{\bar{1}},
	\end{align}
	where $F_{1,2}^{\bar{1}}$ is a map from $V_1\otimes V_2$ to $\overline{V}_{\bar{1}}$ ('fusion') and $F_{\bar{1}}^{1,2}$ is a map from $\overline{V}_{\bar{1}}$ to $V_1\otimes V_2$ ('defusion') such that $F_{\bar{1}}^{1,2}F_{1,2}^{\bar{1}}=R_{12}(-1)$. Since $R_{12}(-1)$ is proportional to the projector onto $\overline{V}_{\bar{1}}\subset V_1\otimes V_2$, the decomposition into $F_{1,2}^{\bar{1}}$ and $F_{\bar{1}}^{1,2}$ is unique if we demand $F_{\bar{1}}^{1,2}:=(F_{1,2}^{\bar{1}})^t$, the transpose of $F_{1,2}^{\bar{1}}$. Thus, the relation reduces $D_{1,2,\dots,m}(\lambda-1,\lambda,\dots,\lambda_m)$ to $D^{(1)}_{\bar{1},3,\dots,m}(\lambda-\frac{1}{2},\lambda_3,\dots,\lambda_m)$. Then, the idea is to apply the rqKZ-equation (Property (\ref{rqKZ_sln+1})) to $D^{(1)}_{\bar{1},3,\dots,m}(\lambda-\frac{1}{2},\lambda_3,\dots,\lambda_m)$ one more time to recover the usual density matrix $D_{1,3,\dots,m}(\lambda+1,\lambda_3,\dots,\lambda_m)$ of length $m-1$ as follows
	\begin{align}
		R_{12}(-1)&D_{1,2,\dots,m}(\lambda-1,\lambda,\dots,\lambda_m) =\notag\\ &F_{\bar{1}}^{1,2}A^{(1)}_{1,\bar{1}|3,\dots,m}(\lambda+1|\lambda_3,\dots,\lambda_m)\left(D_{1,3,\dots,m}(\lambda+1,\lambda_3,\dots,\lambda_m)\right)F_{1,2}^{\bar{1}}.
	\end{align}
	Now, as $D_{1,3,\dots,m}(\lambda+1,\lambda_3,\dots,\lambda_m)$ doesn't have a pole at $\lambda = 0$, we can pull it out of the residue as in the case $l=0$. Using mathematica and the results in the papers \cite{BHN} and \cite{KR}, we could verify this relation for $m\leq 3$ and reproduce the first few residues in these cases in the described way. However, in general when the rank is greater than two, $\omega_2$ is just the second fundamental representation and we can only hope to find a relation which projects the first two fundamental lines of $D_{1,2,\dots,m}(\lambda-1,\lambda,\dots,\lambda_m)$ onto a line with the second fundamental representation and spectral parameter $\lambda-\frac{1}{2}$. This will be clear from the discussion of the extended T-systems in the next section. The corresponding Young tableaux for the tensor product of two fundamental representations of $\mathfrak{sl}_{n+1}$ are depicted in figure \ref{fig:Young_tableaux_ff}.
	\begin{figure}[H]
		\centering
		\includegraphics[scale = .6, trim = 5cm 9.5cm 6cm 8.25cm]{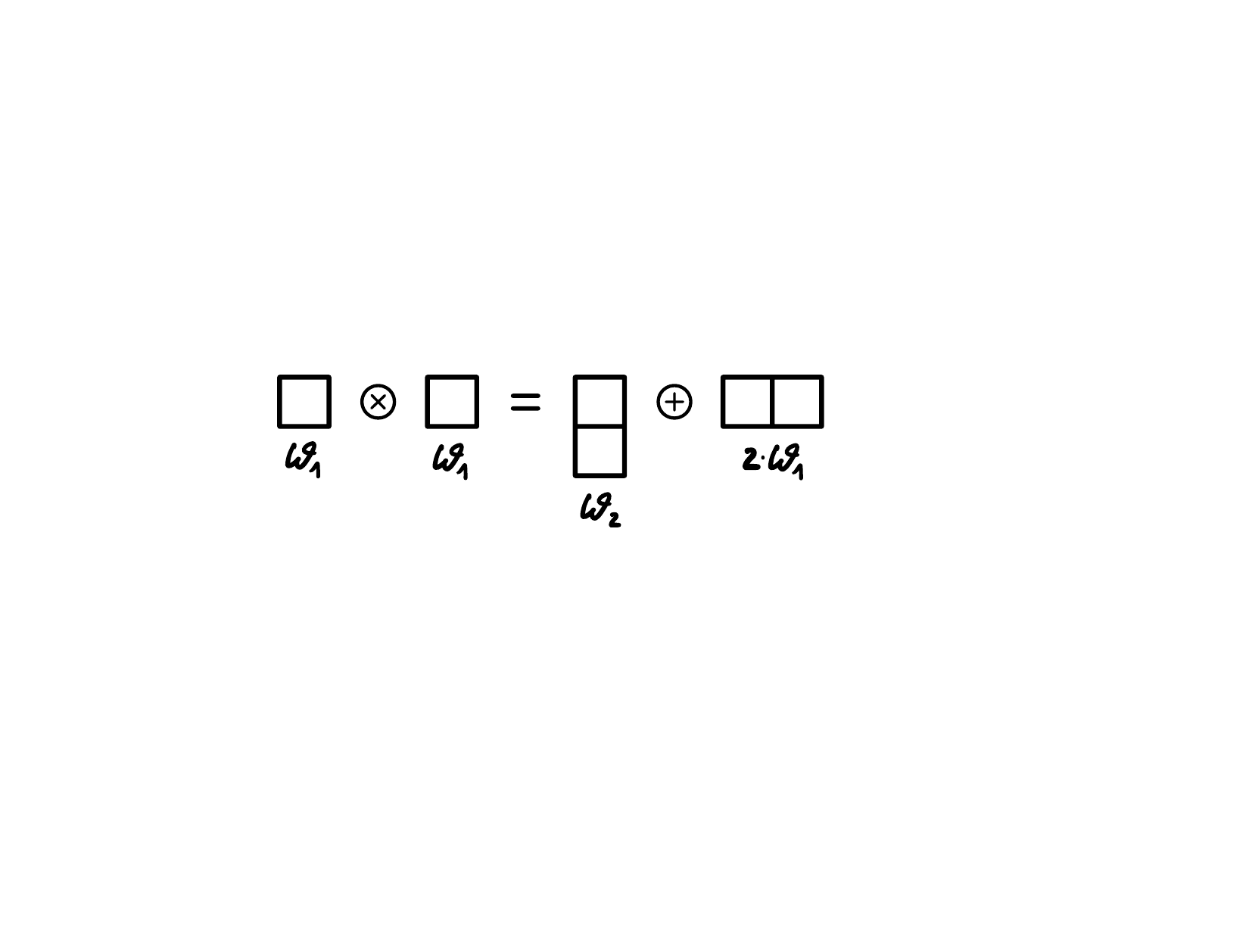}
		\caption{The tensor product of two fundamental representations of $\mathfrak{sl}_{n+1}$.}
		\label{fig:Young_tableaux_ff}
	\end{figure}
	From this point, it is unclear for us how to come back to $D_{1,3,\dots,m}(\lambda+1,\lambda_3,\dots,\lambda_m)$ in general. It seems that additional information is needed when the rank $n$ is greater than $2$. Note that this problem has already been observed in the paper \cite{KR} and a solution was presented by introducing two possible generalizations of the physical density operator. However, yet we do not know if there is a similar construction to describe the residues corresponding to the case $l=1$ above. 
	%We imagine that there could be an rqKZ like equation similar to \ref{eqn:second_rqKZ}, but with $\frac{n+1}{2}$ replaced by $\frac{3}{2}$ and the antifundamental line replaced by a line with the second fundamental representation.
	\subsection{Extended T-systems and the Snail Operator $\tilde{X}_k$}
	\label{subsect:extT-systems}
	Let us now come back to the case $l=0$ and the discussion of the Snail Operator $\tilde{X}_k$ in the general case. The idea is similar to the $\mathfrak{sl}_2$ case and the short exact sequences that appear are almost analogue. However, the representations are not the higher rank Kirillov--Reshetikhin modules, but the so called snake modules. These were introduced by Mukhin and Young in 2012 \cite{MY} for the quantum affine algebras of type A and B. Of course, they also apply to the Yangians by means of the equivalence of categories between the finite dimensional (type $\boldsymbol{1}$) representations of Yangians and of quantum affine algebras stated in \cite{GT}. We only need these results for type $A_n$ in our case (i.e. $Y(\mathfrak{sl}_{n+1})$). We draw the Snail Operator in two equivalent ways as shown in figure \ref{fig:gen_snail_op}. Since the $R$-matrix $\bar{\bar{R}}(\lambda)$ has a simple pole at $\lambda=\frac{n+1}{2}$, figure \ref{fig:gen_snail_op} is only understood in terms of the residue at $\mu_1=\mu_2-k(n+1)$. As above, we multiply it by the scalar prefactor obtained from $\bar{R}(\lambda)$ in the limit $\lambda\to-\frac{n+1}{2}$. Anyway, we explain how figure \ref{fig:Snail_Operator_sl2} can be made into a precise definition in a moment.
	\begin{figure}[H]
		\centering
		\begin{minipage}[b]{.44\linewidth} % [b] => Ausrichtung an \caption
			\includegraphics[scale = .5, trim = 5cm 5cm 3cm 4cm]{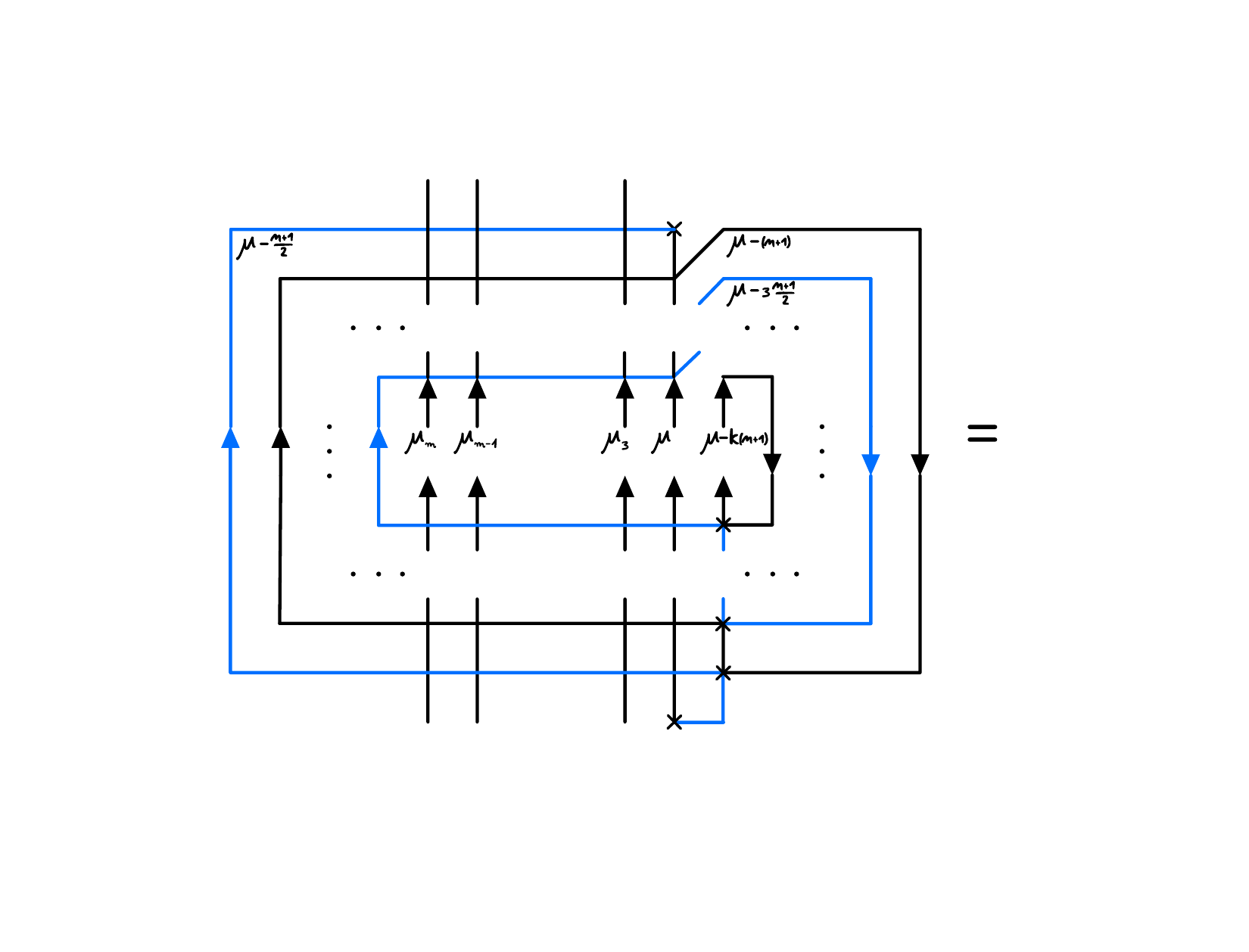}
		\end{minipage}
		\hspace{.1\linewidth}% Abstand zwischen Bilder
		\begin{minipage}[b]{.44\linewidth} % [b] => Ausrichtung an \caption
			\includegraphics[scale = .5, trim = 6.5cm 5cm 3cm 6cm]{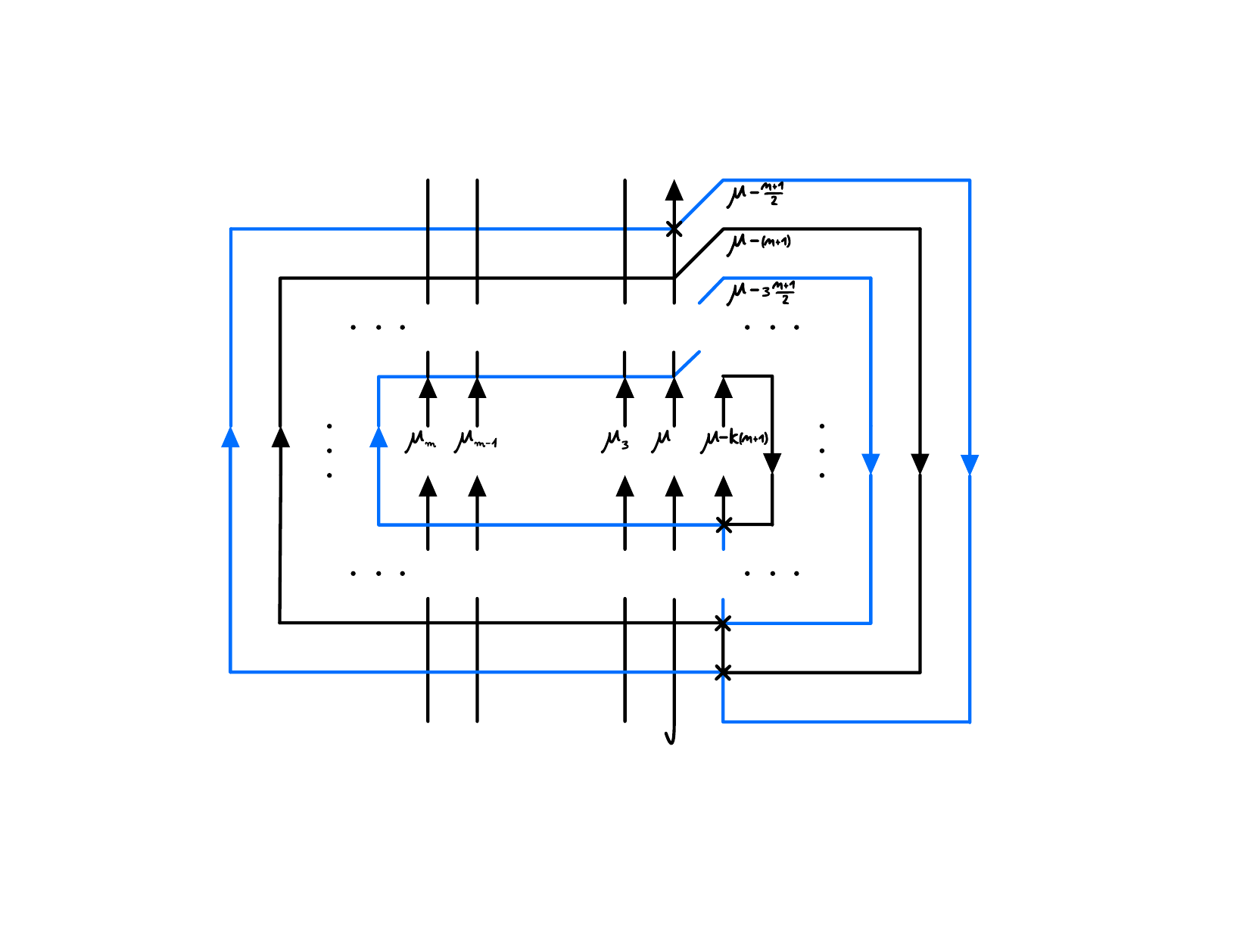}
		\end{minipage}
		\caption{The Snail Operator with $2k-1$ (closed) loops.}
		\label{fig:gen_snail_op}
	\end{figure}
	We remind the reader that every black line is regarded a fundamental representation whereas any blue line is regarded an antifundamental representation of the Yangian $Y(\mathfrak{sl}_{n+1})$. As explained in section \ref{subsect:graphical_notation} and \ref{subsect:dual modules and crossing}, they can be obtained by pulling back the fundamental (respectively antifundamental) representation $V=V^{(1)}_1$ (respectively
	$\overline{V}=V^{(1)}_n$) of $\mathfrak{sl}_{n+1}$ to the fundamental weight $\omega_1$ (respectively $\omega_n$) by the evaluation homomorphism $\operatorname{ev}_a$, where $a\in\mathbb{C}$ is the spectral parameter associated to the corresponding line. We write $\operatorname{ev_a}^*(V^{(1)}_1)=:V^{(1)}_1(a)$ and $\operatorname{ev_a}^*(V^{(1)}_n)=:V^{(1)}_n(a)$. Moreover, we define the evaluation representations of the $\mathfrak{sl}_{n+1}$ representations $V^{(k)}_i$ with highest weight $k\omega_i$, $i=1,\dots,n$, as $V^{(k)}_i(a):=\operatorname{ev}_a^*(V^{(k)}_i)$. They are related to the higher rank Kirillov--Reshetikhin modules via $W^{(k)}_i(a):=V^{(k)}_i(a+\frac{1}{2}(\frac{2ki}{n+1}+i-n-1))$, where $W^{(k)}_1(a)=W^{(k)}(a)$ recovers the definition for $\mathfrak{sl}_2$ in Section \ref{subsect:T-systems}.\footnote{Note that there are actually two evaluation homomorphisms for higher rank. Here, we fix $ev_a$ as $ev^a$ in the book \cite{CPBook} Chapter 12.} The fundamental representations of the successive lines are therefore given by $W^{(1)}_n(\mu-l(n+1)+\frac{n+1}{2})$, $l=1,\dots,k$, and $W^{(1)}_1(\mu-l(n+1))$, $l=1,\dots,k-1$. Looking at the right side of figure \ref{fig:gen_snail_op}, we use the identity $A_1 = \tr_{V_\alpha}(A_\alpha P_{\alpha,1}), \, A\in \End(V_1)$, to write the Snail Operator as the residue at $\mu = \mu_2$ of the alternating product of the monodromy matrices
	\begin{align*}
		&\bar{\mathcal{T}}_{\bar{\alpha}_{2l-1};2,\dots,2m-1}(\mu-(2l-1)\frac{n+1}{2};\mu_2,\dots,\mu_m,\mu_m,\dots,\mu_2):=\\
		&\tr_{\bar{a}}\{\overline{T}_{\bar{a};2,\dots,m}(\mu-(2l-1)\frac{n+1}{2};\mu_2,\dots,\mu_m) T_{\bar{a};m+1,\dots,2m-1}(\mu-(2l-1)\frac{n+1}{2};\mu_m,\dots,\mu_2)P_{\bar{a},\bar{\alpha}_{2l-1}}\}\\ &\stackrel{P_{\bar{a},\bar{\alpha}_{2l-1}}=R_{\bar{a},\bar{\alpha}_{2l-1}}(0)}{=}\\
		&\tr_{\bar{a}}\{\bar{R}_{2,\bar{a}}(\mu_2-\mu+(2l-1)\frac{n+1}{2})\bar{R}_{3,\bar{a}}(\mu_{3}-\mu+(2l-1)\frac{n+1}{2})\cdots \bar{R}_{m,\bar{a}}(\mu_m-\mu+(2l-1)\frac{n+1}{2})\\
		&\bar{\bar{R}}_{\bar{a},m+1}(\mu-(2l-1)\frac{n+1}{2}-\mu_m)\bar{\bar{R}}_{\bar{a},m+2}(\mu-(2l-1)\frac{n+1}{2}-\mu_{m-1})\cdots\bar{\bar{R}}_{\bar{a},2m-1}(\mu-(2l-1)\frac{n+1}{2}-\mu_2)\cdot\\
		&\cdot R_{\bar{a},\bar{\alpha}_{2l-1}}(0)\}\quad l=1,\dots,k,
	\end{align*}
	and
	\begin{align*}
		&\mathcal{T}_{\alpha_{2l};2,\dots,2m-1}(\mu-l(n+1);\mu_2,\dots,\mu_m,\mu_m,\dots,\mu_2):=\\
		&\tr_a\{\overline{T}_{a;2,\dots,m}(\mu-l(n+1);\mu_2,\dots,\mu_m)T_{a;m+1,\dots,2m-1}(\mu-l(n+1);\mu_m,\dots,\mu_2)P_{a,\alpha_{2l}}\} \stackrel{P_{a,\alpha_{2l}}=R_{a,\alpha_{2l}}(0)}{=}\\
		& \tr_a\{R_{2,a}(\mu_2-\mu+l(n+1))R_{3,a}(\mu_{3}-\mu+l(n+1))\cdots R_{m,a}(\mu_m-\mu+l(n+1))\\
		&R_{a,m+1}(\mu-l(n+1)-\mu_m)R_{a,m+2}(\mu-l(n+1)-\mu_{m-1})\cdots R_{a,2m-1}(\mu-l(n+1)-\mu_2)R_{a,\alpha_{2l}}(0)\}\\
		&l=1,\dots,k-1,
	\end{align*}
	multiplied by the operator
	\begin{align*}
		\mathfrak{P}_{2,\alpha_1,\dots,\alpha_{2k-1},1}:= (n+1)^{2k-1} P^-_{\bar{\alpha}_1,2}P^-_{\alpha_{2},\bar{\alpha}_1}\cdots P^-_{\bar{\alpha}_{2k-1},\alpha_{2k-2}}P^-_{1,\bar{\alpha}_{2k-1}}
	\end{align*}
	and contracted over the spaces $2,\bar{\alpha}_1,\alpha_2,\bar{\alpha}_3,\dots,\alpha_{2k-2},\bar{\alpha}_{2k-1}$. Note that we used the projector identity $(P^-_{\bar{\alpha}_1,2})^2= P^-_{\bar{\alpha}_1,2}$ to be able to introduce the operator $\mathfrak{P}$. 
	As in the $\mathfrak{sl}_2$ case, we expect that $\mathfrak{P}$ is a projector on some subrepresentation in the tensor product of the spaces $\bar{\alpha}_1,\alpha_2,\bar{\alpha}_3,\dots,\alpha_{2k-2},\bar{\alpha}_{2k-1}$ times the supposed singlet in the tensor product of two adjoint representations built from the spaces $V_1\otimes\overline{V}_1$ and $V_2\otimes\overline{V}_2$ when acting on the $R$-matrices on the vertical line with spectral parameter $\lambda$.\footnote{$\mathfrak{P}$ itself has rank $(n+1)^{2k-1}$, but it is further reduced due to the fusion properties of the $R$-matrices.} Where we again identify $V_1\otimes \overline{V}_1$ with $V_1\otimes V_1^\star \cong \End(V_1)$ using the dual of the singlet in $V_1^\star\otimes (\overline{V}_1)^\star$ and similarly for $V_2$. Therefore, we are interested in the irreducible composition factors in the tensor product of fundamental and antifundamental representations of the Yangian $Y(\mathfrak{sl}_{n+1})$. Though, in contrast to the $\mathfrak{sl}_2$ case the representation theory for rank ($n\geq2$) is quite different. The irreducible representations of the Yangian $Y(\mathfrak{sl}_{n+1})$ are not necessary irreducible as representations of $\mathfrak{sl}_{n+1}$ anymore. However, for any representation of $\mathfrak{sl}_{n+1}$ we still obtain representations of the Yangian of the same (Lie algebra) weight using $\operatorname{ev}_a$.
	For any other type where no evaluation homomorphism is present, even this property doesn't hold. In this case one can consider minimal affinizations (MA) among all the possible affinizations \footnote{a representation of $Y(\mathfrak{g})$ that has the representation of $\mathfrak{g}$ as a proper $\mathfrak{g}$-subrepresentation}. Luckily, the fundamental modules always have only one (up to equivalence \footnote{isomorphy as representations of $\mathfrak{g}$}) (minimal) affinization (cf. \cite{CP1994} section 6).\\
	
	However, in order to understand the irreducible composition factors in the tensor products of the fundamental (evaluation) representations of the Yangian $Y(\mathfrak{sl}_{n+1})$, we need to dig deeper into the representation theory. We intend to give a short review of the paper \cite{MY} of Mukhin and Young and explain how their extended T-systems naturally generalize the rank $1$ case explained above. As the paper \cite{MY} is about the quantum affine algebras, we may later again use the result of the paper \cite{GT} to come back to the Yangians. We shall also refer the reader to the papers \cite{HL}, \cite{FM}, \cite{CP1994}, \cite{CP1991} and the books \cite{Carter}, \cite{CPBook}, where the proofs and some basic definitions can be found.
	
	Let us restrict ourselves to the category $\mathrm{C}$ of finite dimensional $U_q(\tilde{\mathfrak{g}})$-modules of type 1. As explained in remark \ref{rem:type_1}, we can equivalently consider the category of finite dimensional (type 1) representations of the quantum loop algebra $U_q(\mathcal{L}(\mathfrak{g}))$. Therefore, $C^{1/2}\equiv1$ and the $\mathcal{H}_{i,r}$ and $\mathcal{K}_i$ mutually commute.\footnote{ Since $U_q(\tilde{\mathfrak{g}})$ is a Hopf algebra, $\mathrm{C}$ is an abelian monoidal category.} As a consequence, we can write any object $V\in\mathrm{C}$ as a direct sum of common generalized eigenspaces for the action of the $\mathcal{K}_i$ and $\mathcal{H}_{i,r}$, the so called \textit{loop-} or \textit{l-weight-spaces} of $V$. The eigenvalues are given as follows (cf. \cite{FM} prop. 2.4).
	\begin{prop}[l-weight]
		The eigenvalues of the $\mathcal{H}_{i,r}$ ($r>0$) in an \textit{l-weight-space} $W$ of $V$ are always of the form
		\begin{align}
			\frac{q^m-q^{-m}}{m(q-q^{-1})}\left(\sum_{r=1}^{k_i}(a_{ir})^m-\sum_{s=1}^{l_i}(b_{is})^m\right)\qquad a_{ir},\, b_{ir}\in\mathbb{C}^\times.\label{eqn:l-weight}
		\end{align}
		They completely determine the eigenvalues of $\mathcal{H}_{i,r}\,(r<0)$ and $\mathcal{K}_i$ on $W$.
		The collection of eigenvalues (\ref{eqn:l-weight}) is called the $l$\textit{-weight} of $W$. $\odot$
	\end{prop}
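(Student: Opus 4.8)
The plan is to reduce the statement to a rationality property of the Drinfeld generating series and then to the well-understood rank-one situation. First I would use that we work in the category $\mathrm{C}$ of finite-dimensional type $1$ modules, where $\mathcal{C}$ acts as the identity (Remark \ref{rem:type_1}). Then the relation $[\mathcal{H}_{i,r},\mathcal{H}_{j,s}]=\delta_{r,-s}\frac{1}{r}[ra_{ij}]_{q_i}\frac{\mathcal{C}^r-\mathcal{C}^{-r}}{q_j-q_j^{-1}}$ has a vanishing right-hand side, so together with $[\mathcal{K}_i,\mathcal{H}_{j,r}]=[\mathcal{K}_i,\mathcal{K}_j]=0$ the whole commutative subalgebra generated by the $\mathcal{K}_i$ and the $\mathcal{H}_{i,r}$ acts on the finite-dimensional space $V$. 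Hence $V$ splits into generalized common eigenspaces, the $l$-weight spaces $W$, and on each of them the semisimple part of every $\mathcal{H}_{i,r}$ acts by a scalar; it therefore suffices to control these scalars.

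Second, I would repackage them through the generating function $\Phi_i^+(u)=\mathcal{K}_i\exp\left((q_i-q_i^{-1})\sum_{s\geq1}\mathcal{H}_{i,s}u^s\right)$ of Definition \ref{def:second_drin_quantum_aff}. Writing $\gamma_i^+(u)$ for the eigenvalue of $\Phi_i^+(u)$ on $W$, a formal power series whose constant term is the $\mathcal{K}_i$-eigenvalue, the claim (\ref{eqn:l-weight}) becomes equivalent to the assertion that $\gamma_i^+(u)$ is the Taylor expansion at $u=0$ of a rational function of the special product shape
\begin{align*}
  \gamma_i^+(u)=q_i^{\,k_i-l_i}\prod_{r=1}^{k_i}\frac{1-a_{ir}q_i^{-1}u}{1-a_{ir}q_i u}\prod_{s=1}^{l_i}\frac{1-b_{is}q_i u}{1-b_{is}q_i^{-1}u}.
\end{align*}
Indeed, taking $\log$ of this expression and comparing the coefficient of $u^r$ with $(q_i-q_i^{-1})$ times the $\mathcal{H}_{i,r}$-eigenvalue reproduces exactly the right-hand side of (\ref{eqn:l-weight}) (with $q_i=q$ in type $A$), while the degrees of numerator and denominator recover $\mathcal{K}_i=q_i^{\,k_i-l_i}=\gamma_i^+(0)$.

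Third, and this is the crux, I would establish the rational product shape. For $\mathfrak{g}=\mathfrak{sl}_{n+1}$ one can use that every finite-dimensional irreducible type $1$ module is a subquotient of a tensor product of evaluation representations (cf. \cite{CPBook}, Corollary 12.2.14). On an evaluation module $\operatorname{ev}_a^*V$ the action of $\Phi_i^+(u)$ is computed directly: each $l$-weight contributes elementary factors $(1-aq_i^{\pm1}u)^{\pm1}$ determined by $a$ and the underlying $\mathfrak{sl}_{n+1}$-weight, which are already of the claimed shape. Since the coproduct of $\Phi_i^+(u)$ is group-like up to correction terms built from the $\mathcal{X}_{i,s}^\pm$, which strictly shift the $l$-weight, its eigenvalues on the $l$-weight spaces of a tensor product multiply; thus a product of evaluation modules yields a product of such rational functions, again of the required form, and passing to a subquotient only selects a subcollection of factors. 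A general finite-dimensional object has a composition series whose factors are of this type, and the eigenvalue on a given generalized eigenspace is read off from the relevant factor. Equivalently, one may restrict $V$ to the subalgebra generated by $\mathcal{X}_{i,s}^\pm,\mathcal{H}_{i,r},\mathcal{K}_i$, which is a homomorphic image of $U_{q_i}(\widehat{\mathfrak{sl}}_2)$, and invoke the Chari--Pressley classification of its finite-dimensional modules by Drinfeld polynomials $P_i$, for which $\gamma_i^+(u)=q_i^{\deg P_i}P_i(uq_i^{-2})/P_i(u)$ has precisely the displayed form after a rescaling of the roots.

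Finally, the last sentence of the proposition follows from the fact that a single rational function $\gamma_i(u)$ governs $\Phi_i^-(u)$ as well, through its expansion at $u=\infty$, with $\mathcal{K}_i=\gamma_i(0)$ (cf. \cite{FM}); since this function is determined by the power sums $\sum_r a_{ir}^{\,r}-\sum_s b_{is}^{\,r}$ carried by the $\mathcal{H}_{i,r}$ with $r>0$, these fix the eigenvalues of the $\mathcal{H}_{i,r}$ with $r<0$ and of $\mathcal{K}_i$. I expect the real difficulty to lie entirely in the third step: proving genuine rationality (not merely a formal series) and, in particular, the multiplicativity of the $\Phi_i^+(u)$-eigenvalues under tensor products, which rests on the coproduct being group-like only up to lower-order terms and must be handled carefully on the $l$-weight spaces.
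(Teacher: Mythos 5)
The paper does not prove this proposition at all: it is imported verbatim from Frenkel--Mukhin (cited in the text as \cite{FM}, Prop.~2.4, going back to Frenkel--Reshetikhin), so there is no in-paper argument to compare against. Your sketch is essentially a correct reconstruction of the standard proof from that source --- commutativity of the $\mathcal{K}_i$, $\mathcal{H}_{i,r}$ after setting $\mathcal{C}^{1/2}=1$, repackaging the eigenvalues into the series $\Phi_i^{\pm}(u)$, reduction to the $U_{q_i}(\widehat{\mathfrak{sl}}_2)$-subalgebra and to tensor products of evaluation modules, and the $\log$-expansion recovering \eqref{eqn:l-weight} --- and you correctly identify the genuinely delicate point, namely the multiplicativity of the $\Phi_i^{+}(u)$-eigenvalues on generalized eigenspaces of a tensor product, which rests on the upper-triangularity of $\Delta(\Phi_i^{\pm}(u))$ modulo terms that strictly shift the $l$-weight.
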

	\begin{definition}[$q$\textit{-character}]
		Define the $q$\textit{-character} of $V$ as a Laurent polynomial with positive integer coefficients in some indeterminates $Y_{i,a}\,(i\in I,\, a\in\mathbb{C}^\times)$, that encode the decomposition of $V$ into $l$\textit{-weight-spaces}.
		
		The collection of eigenvalues (\ref{eqn:l-weight}) is encoded by the Laurent monomial
		\begin{align}
			\prod_{i\in I}\left(\prod_{r=1}^{k_i}Y_{i,a_{ir}}\prod_{s=1}^{l_i}Y_{i,b_{is}}^{-1}\right)\label{eqn:l-weight_poly}
		\end{align}
		and the coefficient of it in the $\boldsymbol{q}$\textit{\textbf{-character}} of $V$ is the dimension of $W$.
		We equivalently say that the monomial (\ref{eqn:l-weight_poly}) is the $l$\textit{-weight} of $W$. Moreover, define $\mathcal{Y} := \mathbb{Z}[Y_{i,a}^{\pm 1}]_{i\in I;a\in\mathbb{C}^\times}$ and let $\mathcal{P}\subset\mathcal{Y}$ be the multiplicative abelian subgroup of all monomials. Then $\mathcal{P}$ is in bijection with the set of all l-weights and $\chi_q(V)\in \mathcal{Y}$ denotes the $q$\textit{-character} of $V\in\mathrm{C}$. $\odot$
	\end{definition}
	\begin{rem}
		Certainly, the notion of the $Y_{i,a}$ can be interpreted as a short hand notation for the Drinfeld polynomial $P_i(u)=1-ua$. We will recall some of their properties in terms of the $Y_{i,a}$ in a moment. However, we shall refer the reader to the results of Chari and Pressley \cite{CPBook}, \cite{CP1991}, \cite{CP1994}, \cite{CP1996} in the case of quantum affine algebras and the result of Drinfeld \cite{D2} for the Yangians. In particular, for each $i\in I$ and $a\in \mathbb{C}^\times$ we can define an irreducible representation $V_{\omega_i}(a):=V(\textbf{P}_a^{(i)})$ to the highest weight $\textbf{P}^{(i)}_a$, which is the $I$-tuple of polynomials, such that $P_i(u)=1-ua$ and $P_j(u)=1,$ $\forall j\neq i$. Anyway, we prefer to define everything in terms of the $Y_{i,a}$ as in \cite{FM}, \cite{FR}, \cite{HL} and \cite{MY}. We summarize that $V_{\omega_i}(a)$ is the representation with highest $l$-weight $Y_{i,a}$.
	\end{rem}
	The properties of the q-character $\chi_q$ and the relation to the usual character $\chi$ of the corresponding $U_q(\mathfrak{g})$-module are summarized in the following theorem (cf. \cite{FM} theorem 2.2 and \cite{FR} section 3).
	\begin{thm}[properties of $\chi_q$]\hspace{1em}
		\label{thm:properties_of_chi_q}
		\begin{enumerate}
			\item $\chi_q$ is an injective ring-homomorphism from the Grothendieck ring $\text{Rep}(U_q(\tilde{\mathfrak{g}}))$ to $\mathcal{Y}$.
			\item For any finite-dimensional representation $V$ of $U_q(\tilde{\mathfrak{g}})$, $\chi_q(V)\in\mathbb{Z}_+[Y_{i,a}^{\pm 1}]_{i\in I;a\in\mathbb{C}^\times}(=:\mathcal{Y}_+)$.
			\item Let $\chi : \text{Rep}(U_q(\mathfrak{g}))\to \mathbb{Z}[e^{\pm \omega_i}]_{i\in I}$ be the $U_q(\mathfrak{g})$-character homomorphism, let $\operatorname{wt}:\mathcal{Y}\to\mathbb{Z}[e^{\pm \omega_i}]_{i\in I}$ be the homomorphism \footnote{induced by the homomorphism of abelian groups $\operatorname{wt}:\mathcal{P}\to P$, $Y_{i,a}\mapsto \omega_i$} defined by $Y^{\pm1}_{i,a}\mapsto e^{\pm\omega_i}$ and let res$:\text{Rep}(U_q(\tilde{\mathfrak{g}}))\to \text{Rep}(U_q(\mathfrak{g}))$ be the restriction homomorphism. Then the diagram
			\begin{align*}
				\xymatrix @!=0cm @C=4cm @R=2cm{
					\text{Rep}(U_q(\tilde{\mathfrak{g}})) \ar[r]^{\chi_q} \ar[d]_{\text{res}}    &  \mathbb{Z}[Y_{i,a}^{\pm 1}]_{i\in I, a\in\mathbb{C}^\times}   \ar[d]^{\operatorname{wt}}  \\
					\text{Rep}(U_q(\mathfrak{g})) \ar[r]^{\chi}&   \mathbb{Z}[e^{\pm \omega_i}]_{i\in I}   
				}
			\end{align*}
			commutes (i.e. $\chi_q(V)$ reduces to $\chi(V)$ on the subalgebra $U_q(\mathfrak{g}) \leq U_q(\tilde{\mathfrak{g}})$).
			\item $\text{Rep}(U_q(\tilde{\mathfrak{g}}))$ is a commutative ring that is isomorphic to $\mathbb{Z}[t_{i,a}]_{i\in I;a\in\mathbb{C}^\times}$, where $t_{i,a}$ is the class of $V_{\omega_i}(a)$. \footnote{ $V_{\omega_i}(a) = V_i^{(1)}(a)$ when $\mathfrak{g}$ is of of type A.} $\odot$
		\end{enumerate}
	\end{thm}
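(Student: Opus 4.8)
The proof follows the strategy of Frenkel--Reshetikhin \cite{FR} and Frenkel--Mukhin \cite{FM}, treating the four parts in turn. Throughout, recall that on any $V\in\mathrm{C}$ the generators $\mathcal{K}_i$ and $\mathcal{H}_{i,r}$ act by commuting operators, so $V$ splits as a direct sum of generalized joint eigenspaces -- the $l$-weight spaces -- and $\chi_q(V)$ records, monomial by monomial, the dimensions of these spaces. Part (2) is then immediate: each coefficient of $\chi_q(V)$ is by definition $\dim W$ for some $l$-weight space $W$, hence a non-negative integer, so $\chi_q(V)\in\mathcal{Y}_+$.

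For the ring-homomorphism half of (1) the essential input is the behaviour of the Drinfeld--Cartan series $\Phi_i^\pm(u)$ (the generating series of the $\varPhi_{i,r}^\pm$ from Definition \ref{def:second_drin_quantum_aff}) under the coproduct. First I would establish that
\begin{align*}
	\Delta\bigl(\Phi_i^\pm(u)\bigr)=\Phi_i^\pm(u)\otimes\Phi_i^\pm(u)+\cdots,
\end{align*}
where the omitted terms each involve at least one raising or lowering Drinfeld generator $\mathcal{X}^\pm$. Ordering monomials by the usual weight order, these correction terms are strictly triangular, so on a joint eigenvector $v\otimes w$ with $v,w$ $l$-weight vectors they cannot change the generalized eigenvalue; they only mix vectors of equal or lower $l$-weight. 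Consequently the $l$-weight of $v\otimes w$ is the product of those of $v$ and $w$, which is exactly the statement $\chi_q(V\otimes W)=\chi_q(V)\chi_q(W)$. Since $\chi_q$ is visibly additive on short exact sequences, it descends to a ring homomorphism $\text{Rep}(U_q(\tilde{\mathfrak{g}}))\to\mathcal{Y}$.

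Injectivity in (1), and with it the structural statement (4), rests on the classification of simple objects by their highest $l$-weight. Every simple $V\in\mathrm{C}$ is $l$-highest-weight with a unique dominant monomial $m_V\in\mathcal{P}$, and one shows that $m_V$ is the unique maximal monomial of $\chi_q(V)$ for the order induced by $\operatorname{wt}$ (every other monomial has strictly smaller weight). Hence the matrix expressing the classes of the simples in the monomial basis of $\mathcal{Y}$ is unitriangular, the $\chi_q([V])$ are linearly independent, and $\chi_q$ is injective. Commutativity of $\text{Rep}(U_q(\tilde{\mathfrak{g}}))$ is then immediate, since an injective ring homomorphism into the commutative ring $\mathcal{Y}$ forces $[V][W]=[W][V]$. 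For freeness, observe that the dominant monomials are precisely the products $\prod_{i,a}Y_{i,a}^{k_{i,a}}$ with $k_{i,a}\geq0$, each such product being the highest $l$-weight of the corresponding tensor product of fundamentals $V_{\omega_i}(a)=V_i^{(1)}(a)$; thus the classes $t_{i,a}=[V_{\omega_i}(a)]$ are algebraically independent and generate, giving $\text{Rep}(U_q(\tilde{\mathfrak{g}}))\cong\mathbb{Z}[t_{i,a}]$.

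Finally, (3) follows by tracking the $\mathcal{K}_i$ alone. On an $l$-weight space with monomial $\prod_{i\in I}\bigl(\prod_{r=1}^{k_i}Y_{i,a_{ir}}\prod_{s=1}^{l_i}Y_{i,b_{is}}^{-1}\bigr)$ the elements $\mathcal{K}_i$ act by the scalars recording the ordinary $U_q(\mathfrak{g})$-weight $\nu=\sum_{i}(k_i-l_i)\omega_i$, which is exactly the image of that monomial under $\operatorname{wt}:Y_{i,a}\mapsto e^{\omega_i}$. Since restriction to $U_q(\mathfrak{g})$ remembers only the $\mathcal{K}_i$-grading, one obtains $\operatorname{wt}\circ\chi_q=\chi\circ\text{res}$, i.e.\ the square commutes. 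I expect the main obstacle to be the coproduct analysis underlying the ring-homomorphism property: one must control the triangular correction terms in $\Delta(\Phi_i^\pm(u))$ carefully enough to be sure that \emph{generalized} (not merely genuine) eigenvalues multiply, whereas everything else is either immediate or reduces to the highest-$l$-weight classification.
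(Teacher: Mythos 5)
The paper offers no proof of Theorem \ref{thm:properties_of_chi_q}: it is stated as a known result and attributed to \cite{FM} (Theorem 2.2) and \cite{FR} (Section 3), so there is no in-paper argument to compare yours against. Your sketch is a faithful reconstruction of the standard Frenkel--Mukhin-style proof and is essentially sound: part (2) is indeed definitional, part (3) follows from reading off the $\mathcal{K}_i$-eigenvalues of an $l$-weight, and the injectivity/freeness arguments via the unique highest monomial and the bijection between dominant monomials and simples are the usual ones (for injectivity you do not even need the full containment $\mathcal{P}(L(m))\subset m\mathcal{Q}^-$; the elementary fact that every non-highest monomial has strictly smaller $\operatorname{wt}$-image already gives unitriangularity, which avoids any appearance of circularity with the later Frenkel--Mukhin result). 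The one place where your outline leans on a genuinely nontrivial input is the multiplicativity step: the asserted form of $\Delta\bigl(\Phi_i^\pm(u)\bigr)$ modulo terms containing the Drinfeld raising/lowering generators is a theorem in its own right (due to Damiani, and used in \cite{FM}), since the Drinfeld coproduct is not known in closed form; the original route in \cite{FR} instead obtains multiplicativity from the construction of $\chi_q$ via the universal $R$-matrix and transfer matrices. Your triangularity argument on $V\otimes W$, graded by the ordinary weight of the first tensor factor, is the correct way to conclude that generalized eigenvalues multiply once that coproduct estimate is granted, so the gap you flag at the end is exactly the right one to flag, and it is fillable from the literature.
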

	As for the representation theory of $\mathfrak{g}$ (or rather $U_q(\mathfrak{g})$) we can introduce the notion of \textit{highest (l-)weights} for $U_q(\tilde{\mathfrak{g}})$. It can be stated in the following way (cf. \cite{CPBook} and \cite{CP1994}).
	
	For each $j\in I$, a monomial $m = \prod_{i\in I, a\in\mathbb{C}^\times}Y_{i,a}^{u_{i,a}}\in \mathcal{P}$ is said to be $j$\textit{-dominant} (resp. $j$-anti-dominant) $\Leftrightarrow\, u_{j,a}\geq 0$ (resp. $u_{j,a}\leq 0$) for all $a\in \mathbb{C}^\times$. It is said to be \textit{(anti-)dominant} if it is $j$\textit{(-anti)-dominant} for all $j\in I$. We denote by $\mathcal{P}^+\subset\mathcal{P}$ the \textit{set of dominant monomials}.
	
	Let $\mathcal{P}(V):=\{m\in \mathcal{P}: m\text{ is a monomial of }\chi_q(V)\}\subset\mathcal{P}$ and let $m\in\mathcal{P}(V)$ be dominant, then a vector $\ket{m}\in V_m\backslash\{0\}$ is called a \textit{highest} $l$\textit{-weight vector} with \textit{highest} $l$\textit{-weight} $m$, if $\mathcal{X}_{i,r}^+\ket{m}=0 \text{ for all } i\in I, r\in\mathbb{Z}$ and $\ket{m}$ is a simultaneous eigenvector for the $\mathcal{K}_i$ and $\mathcal{H}_{i,r}$. $V$ is called a \textit{highest} $l$\textit{-weight representation} with highest l-weight $m$, if $V=U_q(\tilde{\mathfrak{g}})\ket{m}$.
	
	Now, it is known that for each $m\in\mathcal{P}^+$, there is a unique finite-dimensional simple module, denoted $L(m)$ that is highest l-weight with highest l-weight $m$. Conversely, every finite dimensional irreducible $U_q(\tilde{\mathfrak{g}})$-module is of this form for some $m\in\mathcal{P}^+$.
	
	In addition, we should add the following definition.
	\begin{definition}[special, thin, prime, real]\hspace{1em}
		\begin{enumerate}
			\item A module $V\in\mathrm{C}$ is said to be \textit{special (resp. anti-specail)} if $\chi_q(V)$ has exactly one dominant (resp. anti-dominant) monomial.
			\item It is called \textit{thin} if no l-weight space of $V$ has dimension greater than one.
			\item $V$ is said to be \textit{prime} if it is not isomorphic to a tensor product of two nontrivial $U_q(\hat{\mathfrak{g}})$-modules.
			\item $V$ is called \textit{real}, if $V\otimes V$ is simple. $\odot$
		\end{enumerate}
	\end{definition}
	As $\mathcal{P}$ is an 'affine' analogue of the weight lattice we can also introduce an 'affine' analogue of the root-lattice. Let's also assume that $\mathfrak{g}$ is single laced for simplicity (see e.g. \cite{MY} section 2.3 for the general case).
	\begin{definition}[the 'affine' root lattice]
		For $i\in I$, $a\in\mathbb{C}^\times$ and $A=(a_{ij})$ the Cartan matrix define
		\begin{align}
			A_{i,a} = Y_{i,aq}Y_{i,aq^{-1}}\prod_{j\neq i}Y^{a_{ij}}_{j,a},
		\end{align}
		then $\operatorname{wt}(A_{i,a}) = \alpha_i$, i.e. $A_{i,a}$ can be viewed as an \textit{'affine' simple root}.
		
		Let $\mathcal{Q}$ be the subgroup of $\mathcal{P}$ generated by the $A_{i,a}, \,i\in I\, a\in \mathbb{C}^\times,$ and let $\mathcal{Q}^\pm$ be the monoid generated by $A_{i,a}^{\pm1},\,i\in I\, a\in \mathbb{C}^\times$. $\mathcal{Q}$ can be considered the \textit{'affine' root lattice} and $\mathcal{Q}^+$ ($\mathcal{Q}^-$) the sets of positive (negative) 'affine' simple roots. $\odot$
	\end{definition}
	The 'affine' weight lattice $\mathcal{P}$ and the 'affine' root lattice $\mathcal{Q}$ are compatible with the usual weight lattice $P$ and root lattice $Q$ as follows.
	\begin{cor}[partial order]\hspace{1em}
		\begin{enumerate}
			\item There is a \textit{partial order} $\leq$ on $\mathcal{P}$ such that $m\leq m'$ iff $m'm^{-1}\in\mathcal{Q}^+$.
			\item The partial order on $\mathcal{P}$ is compatible with the partial order on $P$ in the sense $m\leq m' \, \Rightarrow \operatorname{wt} m \leq \operatorname{wt} m'$.
			\item For all $m^+\in\mathcal{P}^+$ we have $\mathcal{P}(L(m^+))\subset m^+\mathcal{Q}^-$. $\odot$
		\end{enumerate}
	\end{cor}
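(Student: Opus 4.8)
The plan is to dispatch parts (1) and (2) by elementary considerations in the abelian group $\mathcal{P}$, and to reduce part (3) to the highest $l$-weight structure of $L(m^+)$.

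For part (1), reflexivity and transitivity are immediate: $mm^{-1}=1\in\mathcal{Q}^+$ because $\mathcal{Q}^+$ is a monoid, and if $m'm^{-1}\in\mathcal{Q}^+$ and $m''(m')^{-1}\in\mathcal{Q}^+$ then their product $m''m^{-1}\in\mathcal{Q}^+$. The only point needing an argument is antisymmetry, for which I would introduce a height homomorphism. Let $\operatorname{ht}:Q\to\mathbb{Z}$ be the linear map determined by $\operatorname{ht}(\alpha_i)=1$ for all $i\in I$, and set $h:=\operatorname{ht}\circ\operatorname{wt}$, which makes sense on $\mathcal{Q}$ since $\operatorname{wt}(A_{i,a})=\alpha_i\in Q$. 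As $h$ is a composite of group homomorphisms it is well defined on $\mathcal{Q}$ independently of any expression, and $h(A_{i,a})=1$ for every $i\in I$, $a\in\mathbb{C}^\times$. Hence for any $x=\prod A_{i,a}^{n_{i,a}}\in\mathcal{Q}^+$ with all $n_{i,a}\geq0$ one gets $h(x)=\sum_{i,a}n_{i,a}\geq0$, with equality precisely when every exponent vanishes, i.e. $x=1$. Now if $m\leq m'$ and $m'\leq m$ then $x:=m'm^{-1}$ and $x^{-1}$ both lie in $\mathcal{Q}^+$, so $h(x)\geq0$ and $h(x)=-h(x^{-1})\leq0$; thus $h(x)=0$, forcing $x=1$ and $m=m'$.

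Part (2) then follows directly. If $m\leq m'$, write $m'm^{-1}=\prod A_{i,a}^{n_{i,a}}$ with $n_{i,a}\geq0$; applying the homomorphism $\operatorname{wt}$ and using $\operatorname{wt}(A_{i,a})=\alpha_i$ gives $\operatorname{wt}m'-\operatorname{wt}m=\sum_{i,a}n_{i,a}\alpha_i\in Q^+$, which is exactly $\operatorname{wt}m\leq\operatorname{wt}m'$ for the partial order on $P$.

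The substance is in part (3), which I would first restate in the partial-order language of part (1): since $m\in m^+\mathcal{Q}^-$ is equivalent to $m^+m^{-1}\in\mathcal{Q}^+$, i.e. to $m\leq m^+$, the assertion is that every $l$-weight of $L(m^+)$ is dominated by the highest $l$-weight $m^+$. To prove this I would use that $L(m^+)$ is a highest $l$-weight module, so $L(m^+)=U_q(\tilde{\mathfrak{g}})\ket{m^+}$ and is in fact spanned by vectors obtained from $\ket{m^+}$ by successive application of the lowering Drinfeld generators $\mathcal{X}^-_{i,r}$. The key input is the $l$-weight shift lemma of Frenkel--Mukhin (\cite{FM}, see also \cite{MY}): acting by $\mathcal{X}^-_{i,r}$ carries a vector of $l$-weight $m$ into the sum of the $l$-weight spaces of weights $m\,A_{i,a}^{-1}$, $a\in\mathbb{C}^\times$. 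Granting this, an induction on the number of lowering operators applied to $\ket{m^+}$ shows that every nonzero $l$-weight space of $L(m^+)$ has $l$-weight of the form $m^+\prod A_{i,a}^{-1}\in m^+\mathcal{Q}^-$, which is the claim. The main obstacle is precisely this shift lemma: one must show that the $\mathcal{X}^-_{i,r}$ move $l$-weights by the inverse affine simple roots $A_{i,a}^{-1}$ and not by arbitrary monomials of negative weight. This is a genuine representation-theoretic fact, established through the $\widehat{\mathfrak{sl}}_2$-reduction along each node $i$ together with the commutation relations between $\mathcal{X}^-_{i,r}$ and the Cartan currents $\Phi^{\pm}_i$ that define the $l$-weights; I would invoke it from \cite{FM} rather than reprove it, since it is exactly the ingredient needed and is already available in the literature on which we rely.
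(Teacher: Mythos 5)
The paper offers no proof of this corollary: it is stated as part of the background review of the $q$-character formalism and is implicitly attributed to the literature (\cite{FM}, \cite{FR}, \cite{MY}), after which the text immediately moves on to the consequence $\chi_q(L(m))=m(1+\sum_p M_p)$. Your proposal therefore cannot be compared to an argument in the paper itself, but on its own terms it is sound and fills in what the paper leaves to citation. Parts (1) and (2) are complete: reflexivity and transitivity are indeed immediate from the monoid structure of $\mathcal{Q}^+$ (including the empty product for $mm^{-1}=1$), and your height functional $h=\operatorname{ht}\circ\operatorname{wt}$ is the right device for antisymmetry --- in particular you correctly avoid having to know that $\mathcal{Q}$ is free on the $A_{i,a}$, since $h$ is well defined as a composite of homomorphisms and evaluates to $\sum_{i,a} n_{i,a}$ on any one nonnegative expression, which vanishes only if all exponents do. Part (2) is just $\operatorname{wt}(A_{i,a})=\alpha_i$ applied termwise. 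For part (3) you correctly translate the claim into the order of part (1) and reduce it, via the triangular decomposition $L(m^+)=U^-\ket{m^+}$ and induction on the number of lowering operators, to the single nontrivial input: that $\mathcal{X}^-_{i,r}$ maps the $l$-weight space $V_m$ into $\bigoplus_a V_{mA_{i,a}^{-1}}$. That shift lemma is exactly the content of the Frenkel--Mukhin argument (their proof of the statement conjectured by Frenkel--Reshetikhin, via restriction to the $i$-th $U_q(\widehat{\mathfrak{sl}}_2)$), and invoking it from \cite{FM} rather than reproving it is consistent with how the paper itself treats this entire circle of results. No gap; if anything, your write-up makes explicit the one place where genuine representation theory enters, which the paper's bare statement conceals.
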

	Therefore, we can conclude that $\chi_q(L(m)) = m \left(1+\sum_{p}M_p\right)$, where the $M_p$ are monomials in the variables $A_{i,a}^{-1}$.
	
	Let us now focus on the type A, where we have an evaluation homomorphism. We refer the reader to the original paper \cite{MY} for type B.
	
	We recap some facts for the case $\mathfrak{g}=\mathfrak{sl}_2$ in terms of l-weights (see \cite{CP1991}). Due to \textit{Jimbos homomorphism} $\text{ev}_a:U_q(\hat{\mathfrak{sl}_2})\to U_q(\mathfrak{sl}_2)$ we can get type 1 \textit{spin} $k/2$ evaluation representations $V^{(k)}(a)$ by pulling back with $\text{ev}_a$.
	$V^{(k)}(a)$ is a highest $l$\textit{-weight representation} with highest $l$\textit{-weight} $Y_{aq^{k-1}}Y_{aq^{k-3}}\cdots Y_{aq^{-k+1}}=:S_k(a)$, called $q$\textit{-String}.
	Let $V= V^{(k)}(a)\otimes V^{(l)}(b)$ and $ 0\leq p < \min\{k,l\}$ be an integer. Then $V$ is irreducible iff $b/a \neq q^{\pm(k+l-2p)}$. In this case, $S_k(a)$ and $S_l(b)$ are said to be in \textit{general position}. Otherwise $S_k(a)$ and $S_l(b)$ are in \textit{special position} and $V= V^{(k)}(a)\otimes V^{(l)}(b)$ has a unique proper submodule (c.f. \ref{prop:special_position}). In addition, every finite dimensional simple $U_q(\hat{\mathfrak{sl}_2})$-module is isomorphic to a tensor product of evaluation representations.
	%\item[--] $V_{n_1}(a_1)\otimes\cdots\otimes V_{n_r}(a_r)$ is irreducible iff the $q$-Strings $S_{n_i}(a_i)\; (i=1,...,r)$ are mutually in general position.
	The module $W^{(k)}(a) := V^{(k)}(aq^{k-1})$ is called Kirillov--Reshetikhin module.
	
	Coming back to the general case, we have seen above (thm. \ref{thm:properties_of_chi_q} (4)) that the Grothendieck ring of $\mathrm{C}$ is the polynomial ring over $\mathbb{Z}$ in the classes $[V_{\omega_i}(a)]\;(i\in I,\, a\in \mathbb{C}^\times)$ of fundamental modules. Kirillov--Reshetikhin modules are then defined through $W_{i}^{(k)}(a) := L(S_k^i(aq^{k-1})) (= \operatorname{ev}_{aq^{k-1}}^*(V^{(k)}_i))$ as above, where $S_k^i(aq^{k-1})$ is obtained from $S_k(aq^{k-1})$ via $Y_{aq^l}\mapsto Y_{i,aq^l}$.\footnote{ The definition is as in the Yangian case above, but in terms of their Drinfeld polynomials. Clearly, this also explains how KR modules are naturally defined for any other Dynkin type symmetry.} Note that we have $\text{wt}(W_{i}^{(k)}(a)) = k\omega_i$. In particular, $W^{(1)}_{i}(a)$ coincides with the fundamental module $V_{\omega_i}(a)$.
	
	The classes $[W_{i}^{(k)}(a)]$ in $\mathcal{Y}$ satisfy the \textit{T-system}
	\begin{align}
		[W_{i}^{(k)}(a)][W_{i}^{(k)}(aq^2)]=[W_{i}^{(k+1)}(a)][W_{i}^{(k-1)}(aq^2)]+\prod_{a_{ij}=-1}[W_{j}^{(k)}(aq)].
	\end{align}
	It generalizes the T-system (\ref{eqn:tsys_1}) and is sometimes referred to as \textit{the T-system} \cite{KNS} \cite{N} \cite{H}. In fact, it can be used to calculate the class $[W^{(k)}_i(a)]$ inductively as a polynomial in the classes of fundamental modules $[V_{\omega_i}(a)]$, $i\in I,\, a\in \mathbb{C}^\times$ (c.f. \cite{HL}). In physics, this system of equations is usually written in terms of transfer matrices with corresponding auxiliary spaces \footnote{"The product in $\text{Rep}(U_q(\tilde{\mathfrak{g}}))$ describes traces of tensor products"}. However, the Kirillov--Reshetikhin modules and the T-system only cover a small part of the prime irreducible modules for rank $n\geq 2$. In type A, they cover exactly the evaluation representations of representations of $U_q(\sln)$ which have a rectangular Young tableau. In particular, the tensor product of fundamental and antifundamental modules above is not included. Moreover, this product contains prime simple composition factors that are not evaluation representations. Thus, in order to understand all the simple objects in $\mathrm{C}$ we should consider more general modules as the so called Snake modules introduced in \cite{MY2} by Mukhin and Young. In fact, it turns out that they cover all the prime simple objects in $\mathrm{C}$.\\
	
	The Snake modules in type A are defined as follows. Define subsets $\mathcal{X}:=\{(i,k)\in I\times\mathbb{Z}:i-k=1\mod 2\}$ and $\mathcal{W}:={(i,k):(i,k-1)\in \mathcal{X}}\subset I\times \mathbb{Z}$. We fix $a\in\mathbb{C}^\times$ and only work with representations whose $q$\textit{-characters} lie in the subring $\mathbb{Z}[Y_{i,aq^k}^{\pm1}]_{(i,k)\in\mathcal{X}}\subset \mathcal{Y}$. Indeed, these form a subcategory $\mathrm{C}_{\mathbb{Z}}$ of $\mathrm{C}$ closed under taking tensor products. Conversely, every simple object $S$ of $\mathrm{C}$ can be written as a tensor product $S_1(a_1)\otimes\cdots\otimes S_k(a_k)$ for some simple objects $S_1,...,S_k \in \mathcal{C}_{\mathbb{Z}}$ and $\frac{a_i}{a_j}\in \mathbb{C}\backslash q^{2\mathbb{Z}}$, here $S(a)$ is the pullback of $S$ by $\tau_a \in \text{Aut}U_q(\hat{\mathfrak{g}})$ (see \cite{HL} 3.6 and 3.7 and \cite{C2002} for the proof). By abuse of notation, we set $Y_{i,aq^k}=: Y_{i,k}$, $A_{i,aq^k}=:A_{i,k}$, $\mathbb{Z}[Y_{i,k}^{\pm1}]_{(i,k)\in\mathcal{X}} = \mathcal{Y}_{\mathbb{Z}}$ and $\mathbb{Z}[A_{i,k}^{\pm1}]_{(i,k)\in\mathcal{W}} = \mathcal{Q}_{\mathbb{Z}}$.
	\begin{definition}[snake position, snakes and snake modules]
		Let $(i,k)\in\mathcal{X}$.
		\begin{enumerate}
			\item A point $(i',k')\in\mathcal{X}$ is said to be in \textbf{snake position} with respect to $(i,k)$ iff $k'-k \geq |i'-i|+2$.
			\begin{enumerate}
				\item The point $(i',k')$ is in \textbf{minimal snake position} to $(i,k)$ iff $k'-k$ is equal to the lower bound.
				\item We say that $(i',k')\in\mathcal{X}$ is in \textbf{prime snake position} with respect to $(i,k)$ iff\\ $\min\{i'+i,2n+2-i-i'\}\geq k'-k \geq |i'-i|+2$.
			\end{enumerate}
			\item A finite sequence $(i_t,k_t)$ ($1\leq t\leq M \in \mathbb{N}$) of points in $\mathcal{X}$ is a \textbf{snake} iff for all $2\leq t\leq M$, $(i_t,k_t)$ is in \textit{snake position} with respect to $(i_{t-1},k_{t-1})$.
			\begin{enumerate}
				\item It is a \textbf{minimal} (resp. \textbf{prime}) \textbf{snake} iff any two successive points are in minimal (resp. prime) snake position to each other.
			\end{enumerate}
			\item The simple module $L(m)$ is called a (\textbf{minimal/prime}) \textbf{snake module} iff $m = \prod_{t=1}^{M}Y_{i_t,k_t}$ for some (minimal/prime) snake $(i_t,k_t)_{1\leq t\leq M}$. $\odot$
		\end{enumerate}
	\end{definition}
	One can now proof the following properties (cf. \cite{MY2},\cite{MY},\cite{CH},\cite{BJY}).\newpage
	\begin{thm}[snake modules]\hspace{1em}
		\label{thm:snake_modules}
		\begin{enumerate}
			\item Snake modules are special, anti-special and thin.
			\item A snake module is prime iff its snake is prime.
			\item Prime snake modules are real.
			\item If a snake module is not prime then it is isomorphic to a tensor product of prime snake modules defined uniquely up to permutation. $\odot$
		\end{enumerate}
	\end{thm}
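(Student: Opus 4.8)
The plan is to build everything on an explicit combinatorial description of the $q$-character of a snake module, following \cite{MY2} and its type-$A$ specialisation in \cite{MY}, and then to read off the four statements from that description together with the injectivity of $\chi_q$ (Theorem \ref{thm:properties_of_chi_q}(1)). Concretely, for a snake $(i_t,k_t)_{1\le t\le M}$ with highest $l$-weight $m=\prod_{t=1}^{M}Y_{i_t,k_t}$ I would first establish a path model: to each point $(i_t,k_t)\in\mathcal{X}$ one attaches a finite family of lattice paths, and one proves that
\begin{align*}
\chi_q\big(L(m)\big)=\sum_{(p_1,\dots,p_M)}\ \prod_{t=1}^{M} m(p_t),
\end{align*}
where the sum runs over tuples of pairwise non-overlapping admissible paths and $m(p_t)$ is the Laurent monomial obtained from the corners of $p_t$. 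The highest tuple (all paths in their top position) reproduces $m$, while moving any single path downward multiplies the associated monomial by some $A_{j,l}^{-1}$. This formula is the engine for all four parts.

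Part (1) then follows by inspection of the path model. Thin-ness is immediate: distinct tuples produce distinct monomials and each occurs with coefficient one, so no $l$-weight space has dimension exceeding one. For specialness one shows that deviating any path from its top position introduces a factor $A_{j,l}^{-1}$ creating a strictly negative power of some $Y_{j,\cdot}$ that cannot be repaired by the other (non-overlapping) paths; hence $m$ is the only dominant monomial of $\chi_q(L(m))$. Anti-specialness follows by applying the contragredient (left/right dual) functor, which maps snake modules to snake modules and interchanges the roles of dominant and anti-dominant monomials.

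For part (2) the key is that a failure of the \emph{prime} snake-position bound at some consecutive pair, i.e.\ $k_{t+1}-k_t>\min\{i_t+i_{t+1},\,2n+2-i_t-i_{t+1}\}$, makes the path families for $(p_1,\dots,p_t)$ and $(p_{t+1},\dots,p_M)$ mutually non-interacting, so the sum factors and
\begin{align*}
\chi_q\big(L(m)\big)=\chi_q\big(L(m')\big)\,\chi_q\big(L(m'')\big),\qquad m'=\prod_{s\le t} Y_{i_s,k_s},\quad m''=\prod_{s> t} Y_{i_s,k_s}.
\end{align*}
Since $m'$ and $m''$ are dominant, injectivity of $\chi_q$ gives $L(m)\cong L(m')\otimes L(m'')$, so $L(m)$ is not prime; conversely, a prime snake admits no such split and one checks that the paths are fully entangled, forbidding any factorisation. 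Part (4) is the global form of this: cutting the snake at every non-prime gap yields a decomposition into maximal prime sub-snakes, and iterating the factorisation above writes $L(m)$ as a tensor product of the corresponding prime snake modules. The gaps are intrinsic to $m$, and because $\mathrm{Rep}(U_q(\tilde{\mathfrak{g}}))$ is a polynomial ring (Theorem \ref{thm:properties_of_chi_q}(4)) the factorisation is unique up to permutation.

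The hardest step is part (3), reality of a prime snake module $V=L(m)$, and I expect it to be the main obstacle. The strategy is to show that $\chi_q(V)^2=\chi_q(V\otimes V)$ has $m^2$ as its \emph{unique} dominant monomial, occurring with multiplicity one. Granting this, every simple subquotient of $V\otimes V$ has highest monomial a dominant monomial of the character, hence must be $L(m^2)$; multiplicity one then forces $\chi_q(V\otimes V)=\chi_q(L(m^2))$, and injectivity of $\chi_q$ yields $V\otimes V\cong L(m^2)$, which is simple. The delicate point is the no-cancellation claim: a product of two monomials of $\chi_q(V)$, each lying in $m\mathcal{Q}^-$, can be dominant only when both equal $m$. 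A priori the negative powers carried by two non-top monomials could cancel to create a spurious dominant monomial, and ruling this out requires the fine structure of the path model for prime snakes, namely the precise positions at which the factors $A_{j,l}^{-1}$ can occur. This is exactly where specialness and thin-ness from part (1) are used, and where the prime snake-position constraint is indispensable.
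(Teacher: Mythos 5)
First, a point of comparison: the paper does not prove Theorem \ref{thm:snake_modules} at all --- it is stated as a collection of known results with the citation ``(cf. \cite{MY2},\cite{MY},\cite{CH},\cite{BJY})'', so there is no in-paper proof to measure your attempt against. Your proposal is therefore best read as a reconstruction of the arguments in those references, and in that respect your starting point is the correct one: the path description of $\chi_q$ for snake modules from \cite{MY2} (specialised to type $A$ in \cite{MY}) is exactly the engine used there, and your derivations of thinness, specialness, anti-specialness via the dual, and the factorisation of $\chi_q$ across a non-prime gap (parts (1), (4), and one direction of (2)) follow the standard route faithfully. The argument you give for why ``unique dominant monomial of multiplicity one'' forces simplicity of $V\otimes V$ is also correct as stated.

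The genuine gaps are in the converse of (2) and in (3), and you have not closed either. For (2), ``one checks that the paths are fully entangled, forbidding any factorisation'' is not an argument: primality requires excluding an isomorphism $L(m)\cong A\otimes B$ for \emph{arbitrary} nontrivial simple modules $A,B$, not merely for tensor products of snake modules, so one must show that the Laurent polynomial $\chi_q(L(m))$ admits no factorisation into two $q$-characters of simple modules with dominant highest monomials multiplying to $m$; this is a real combinatorial lemma about prime snakes, not an inspection. For (3), you explicitly leave the no-cancellation claim --- that no product of two non-highest monomials of $\chi_q(V)$, each in $m\mathcal{Q}^-$, can be dominant --- as ``the main obstacle'', and that claim is precisely the content of reality; without it the proof of (3) does not exist. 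It is also worth noting that in the literature reality of prime snake modules is not obtained by a direct analysis of $\chi_q(V)^2$ but rather via the extended T-system machinery of \cite{MY} and the identification of prime snake modules with cluster variables in \cite{BJY}, where reality comes from the Laurent/positivity structure of the cluster algebra. So your outline identifies the right tools and correctly locates the hard points, but the two statements that carry the real mathematical weight are asserted rather than proven.
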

	Thus, prime snake modules are prime, special, anti-special, thin, and real and any snake module decomposes into a tensor product of prime snakes. We claim that these are exactly the building blocks that we were looking for to solve our problem of finding the irreducible composition factors in our tensor product of fundamental and antifundamental representations above. Indeed, we can use the main result of Mukhin and Young \cite{MY}, which is the existence of a short exact sequence called the extended T-system. To write it down in a nice way, it is helpful to add the following definition.
	\begin{definition}[neighbouring points and neighbouring snakes]\hspace{1em}
		\begin{enumerate}
			\item For any two successive points $(i,k)$ and $(i',k')$ define the \textbf{neighbouring points} by
			\begin{align*}
				&\mathbb{X}_{i,k}^{i',k'}:=
				\begin{cases}
					((\frac{1}{2}(i+k+i'-k'),\frac{1}{2}(i+k-i'+k')))  & {\scriptstyle k+i>k'-i'}\\
					\emptyset & {\scriptstyle k+i=k'-i'}
				\end{cases}\\
				&\mathbb{Y}_{i,k}^{i',k'}:=
				\begin{cases}
					((\frac{1}{2}(i'+k'+i-k),\frac{1}{2}(i'+k'-i+k)))  & {\scriptstyle k+N+1-i>k'-N-1+i'}\\
					\emptyset & {\scriptstyle k+N+1-i=k'-N-1+i'.}
				\end{cases}
			\end{align*}
			\item For any prime snake $(i_t,k_t)_{1\leq k\leq M}$ we define its \textbf{neighbouring snakes} $\mathbb{X}:=\mathbb{X}_{(i_t,k_t)_{1\leq k\leq M}}$ and $\mathbb{Y}:=\mathbb{Y}_{(i_t,k_t)_{1\leq k\leq M}}$ by concatenating its neighbouring points. $\odot$
		\end{enumerate}
	\end{definition}
	Now, the \textit{extended T-system} is stated as follows (cf. \cite{MY} theorem 4.1).
	\begin{thm}[the extended T-system]
		\label{thm:the_extended_T-system}
		Let $(i_t,k_t)\in\mathcal{X}$, $1\leq t\leq M$, be a prime snake of length $M\geq 2$. Let $\mathbb{X}$ and $\mathbb{Y}$ be its neighbouring snakes. Then we have the following relation in the Grothendieck ring $\operatorname{Rep}(U_q(\tilde{\mathfrak{g}}))$.
		\begin{align}
			\notag \left[L\left(\prod_{t=1}^{M-1}Y_{i_t,k_t}\right)\right] \left[L\left(\prod_{t=2}^{M}Y_{i_t,k_t}\right)\right] = \left[L\left(\prod_{t=2}^{M-1}Y_{i_t,k_t}\right)\right] \left[L\left(\prod_{t=1}^{M}Y_{i_t,k_t}\right)\right]\\ +\left[L\left(\prod_{(i,k)\in\mathbb{X}}Y_{i_t,k_t}\right)\right] \left[L\left(\prod_{(i,k)\in\mathbb{Y}}Y_{i_t,k_t}\right)\right],
		\end{align}
		where the summands on the right hand side are classes of irreducible modules, i.e.
		\begin{align*}
			&L\left(\prod_{t=2}^{M-1}Y_{i_t,k_t}\prod_{t=1}^{M}Y_{i_t,k_t}\right)\cong L\left(\prod_{t=2}^{M-1}Y_{i_t,k_t}\right)\otimes L\left(\prod_{t=1}^{M}Y_{i_t,k_t}\right)\\
			&L\left(\prod_{(i,k)\in\mathbb{X}}Y_{i_t,k_t}\prod_{(i,k)\in\mathbb{Y}}Y_{i_t,k_t}\right)\cong L\left(\prod_{(i,k)\in\mathbb{X}}Y_{i_t,k_t}\right)\otimes L\left(\prod_{(i,k)\in\mathbb{Y}}Y_{i_t,k_t}\right).\quad\odot
		\end{align*}
	\end{thm}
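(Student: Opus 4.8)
The plan is to prove the relation at the level of $q$-characters and then descend to the Grothendieck ring by the injectivity of $\chi_q$ (Theorem \ref{thm:properties_of_chi_q}(1)). Write $m_1 = \prod_{t=1}^{M-1}Y_{i_t,k_t}$ and $m_2 = \prod_{t=2}^{M}Y_{i_t,k_t}$ for the two highest $l$-weights on the left, and set $m_{\mathbb{X}} = \prod_{(i,k)\in\mathbb{X}}Y_{i,k}$, $m_{\mathbb{Y}} = \prod_{(i,k)\in\mathbb{Y}}Y_{i,k}$. Since every index $t=2,\dots,M-1$ occurs in both $m_1$ and $m_2$ while $t=1$ and $t=M$ occur once each, one checks $m_1 m_2 = \bigl(\prod_{t=2}^{M-1}Y_{i_t,k_t}\bigr)\bigl(\prod_{t=1}^{M}Y_{i_t,k_t}\bigr)$, so the target identity becomes
\begin{align*}
[L(m_1)][L(m_2)] = [L(m_1 m_2)] + [L(m_{\mathbb{X}} m_{\mathbb{Y}})].
\end{align*}
Hence the whole problem reduces to controlling the \emph{dominant} monomials occurring in the product $\chi_q(L(m_1))\,\chi_q(L(m_2))$.

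First I would dispose of the two claimed tensor-product factorisations on the right. Because prime snake modules are real and any snake module decomposes into prime snakes (Theorem \ref{thm:snake_modules}(3),(4)), it suffices to verify that the pairs $\bigl(\prod_{t=2}^{M-1},\,\prod_{t=1}^{M}\bigr)$ and $(\mathbb{X},\mathbb{Y})$ are in \emph{general position}, i.e.\ that no two of the underlying points violate the general-position condition. This yields $L(m_1 m_2)\cong L(\prod_{t=2}^{M-1}Y_{i_t,k_t})\otimes L(\prod_{t=1}^{M}Y_{i_t,k_t})$ and $L(m_{\mathbb{X}} m_{\mathbb{Y}})\cong L(\mathbb{X})\otimes L(\mathbb{Y})$, so the right-hand side is genuinely a sum of products of two irreducible classes as stated. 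I expect this step to be routine once the neighbouring-point formulas $\mathbb{X}_{i,k}^{i',k'}$, $\mathbb{Y}_{i,k}^{i',k'}$ are unwound.

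The main work is the following claim: the product $\chi_q(L(m_1))\,\chi_q(L(m_2))$ contains \emph{exactly} two dominant monomials, namely $m_1 m_2$ and $m_{\mathbb{X}} m_{\mathbb{Y}}$, each with multiplicity one. Here I would exploit that snake modules are thin and special (Theorem \ref{thm:snake_modules}(1)): specialness means each $\chi_q(L(m_j))$ has a unique dominant monomial (its highest $l$-weight) and is therefore computed exactly by the Frenkel--Mukhin algorithm, while thin-ness gives all coefficients equal to one and an explicit combinatorial (path) description of each factor. A dominant monomial of the product is a product $p_1 p_2$ of monomials $p_j$ of $\chi_q(L(m_j))$; using that every non-highest monomial is right-negative, together with the prime-snake-position hypothesis on the $(i_t,k_t)$, I would enumerate the pairs $(p_1,p_2)$ whose product is dominant and show that only the two listed survive, each arising in a unique way.

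With the dominant monomials pinned down, the identity follows by the standard peeling argument: the top monomial $m_1 m_2$ occurs with coefficient one, $L(m_1 m_2)$ is special so its $q$-character's only dominant monomial is $m_1 m_2$ (in particular $m_{\mathbb{X}} m_{\mathbb{Y}}$ does not occur in it), and subtracting $\chi_q(L(m_1 m_2))$ leaves a genuine nonnegative combination of simple $q$-characters whose only dominant monomial is $m_{\mathbb{X}} m_{\mathbb{Y}}$ with coefficient one; that remainder must then equal $\chi_q(L(m_{\mathbb{X}} m_{\mathbb{Y}}))$, and injectivity of $\chi_q$ gives the Grothendieck-ring relation. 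I would organise the argument as an induction on the snake length $M$, with base case $M=2$ (a tensor product of two fundamental modules in prime snake position, whose two-term decomposition is classical) and an inductive step that feeds the ordinary $T$-system into the monomial enumeration. The hard part is precisely the third step: ruling out any additional dominant monomial produced by cancellations in the product and showing the two survivors have multiplicity exactly one. This is where the thin/special combinatorics of snake $q$-characters must be used in full, and where the prime-snake-position hypothesis is essential.
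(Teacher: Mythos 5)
This theorem is not proved in the paper at all: it is quoted verbatim from Mukhin and Young (\cite{MY}, Theorem 4.1), so there is no in-paper argument to compare yours against. Measured against the actual proof in \cite{MY}, your plan reconstructs the right strategy: reduce to an identity of $q$-characters via injectivity of $\chi_q$ (Theorem \ref{thm:properties_of_chi_q}), use that snake modules are thin and special (Theorem \ref{thm:snake_modules}) to get an explicit combinatorial description of each factor, show that $\chi_q(L(m_1))\chi_q(L(m_2))$ has exactly the two dominant monomials $m_1m_2$ and $m_{\mathbb{X}}m_{\mathbb{Y}}$ each with multiplicity one, and then peel off composition factors. That is essentially how the cited proof goes, with the enumeration of dominant monomials carried out through the path description of snake $q$-characters.

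Two cautions. First, the entire mathematical content lives in the step you defer: the enumeration of pairs $(p_1,p_2)$ whose product is dominant. Right-negativity alone does not suffice for higher rank; in \cite{MY} this is a genuinely intricate analysis of overlapping path tuples, and without it the argument is a plan rather than a proof. Second, your treatment of the irreducibility of the right-hand tensor products is not quite right as stated: realness (Theorem \ref{thm:snake_modules}(3)) only concerns $V\otimes V$, and ``general position'' is defined in this paper only for $\mathfrak{sl}_2$ $q$-strings, so there is no off-the-shelf criterion to ``verify routinely.'' In \cite{MY} the simplicity of $L(\prod_{t=2}^{M-1}Y_{i_t,k_t})\otimes L(\prod_{t=1}^{M}Y_{i_t,k_t})$ and of $L(m_{\mathbb{X}})\otimes L(m_{\mathbb{Y}})$ is itself established by the same dominant-monomial counting (a tensor product of simple modules is simple when the product of their $q$-characters has a unique dominant monomial), so this sub-step is not independent of, but rather another instance of, the hard combinatorial core.
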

	Note that in the case of Kirillov--Reshetikhin modules, the theorem is the standard \textit{T-system}.
	Let us draw an example of the \textit{extended T-system} for $A_4$.
	\begin{figure}[H]
		\centering
		\includegraphics[scale = .9, trim = 1cm 14.5cm 0cm 6.5cm]{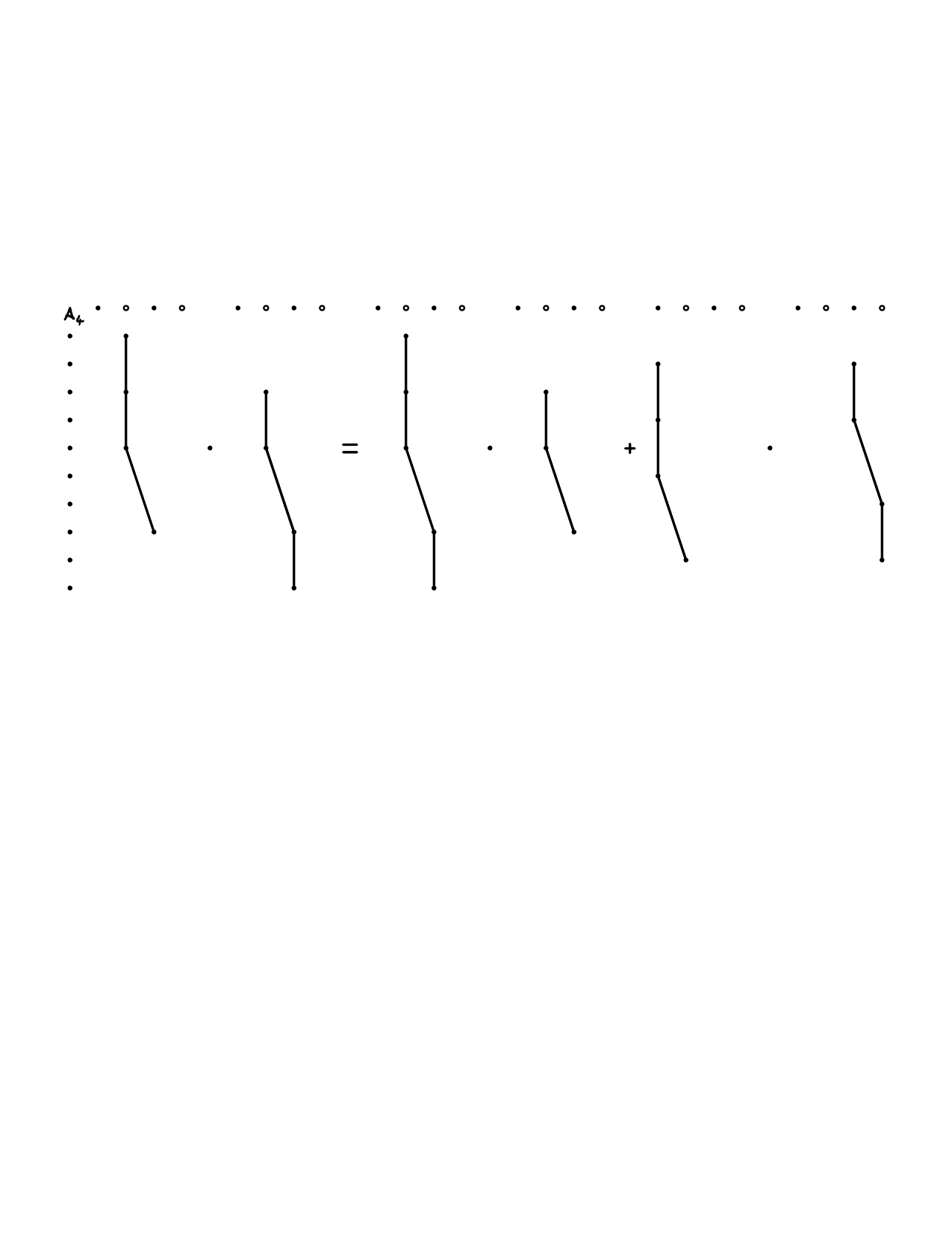}
		\caption{An example of the extended T-system for $A_4$.}
	\end{figure}
	We are now in the position to analyse the tensor product of fundamental and antifundamental lines in the Snail Operator. Using the correspondence in \cite{GT} between finite dimensional representations of Yangians and quantum affine algebras we note that everything can be defined in exactly the same way. We take $\mu\in\mathbb{C}$ fixed and set $Y_{i,\mu+\frac{k}{2}}=:Y_{i,k}$ as well as $A_{i,\mu+\frac{k}{2}}=:A_{i,k}$. We note that the loop variables of the successive lines in the snail operator are in minimal snake position. Using the extended T-system, we can now prove the following assertions.
	\begin{thm}[Snake and Snail]
		\label{thm:snake_and_snail}
		The tensor product
		\begin{align}
			W_{N(m)}^{(1)}(\mu+\frac{k}{2})\otimes W_{N(m+1)}^{(1)}(\mu+\frac{k+n+1}{2})\otimes\cdots\otimes W_{N(m+l)}^{(1)}(\mu+\frac{k+l(n+1)}{2}), \, k\in\mathbb{Z}
		\end{align}
		of $l+1$ many antifundamental and fundamental representations has Fibonacci$(l+1)$ many composition factors, one of which is the minimal snake module 
		\begin{align}
			S^{(l+1)}_m(\mu+\frac{k}{2}):=L(\prod_{t=0}^{l}Y_{N(t+m),k+t(n+1)}), \, k\in\mathbb{Z},
		\end{align}
		where $N(t):=\left\{^{1,\, t\text{ even}}_{n,\, t\text{ odd}}\right.$. In particular, we can prove the existence of short exact sequences of the form
		\begin{align}
			[S^{(1)}_{m+1}(\mu-l\frac{n+1}{2})][S^{(l)}_m(\mu-(l-1)\frac{n+1}{2})] = [S^{(l+1)}_{m+1}(\mu-l\frac{n+1}{2})]+[S^{(l-1)}_{m+1}(\mu-(l-2)\frac{n+1}{2})]
			\label{eqn:new_ext_t_sys}
		\end{align}
		which we may also call extended T-systems. $\odot$
	\end{thm}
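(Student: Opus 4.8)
The plan is to derive every assertion from the extended T-system (Theorem \ref{thm:the_extended_T-system}) applied to the snake built from the loop parameters of the successive lines, exactly paralleling the use of (\ref{eqn:tsys_1}) and (\ref{eqn:tsys_2}) in Section \ref{subsect:T-systems}. First I would record the geometry of the relevant points. Writing $P_t:=(N(m+t),\,k+t(n+1))$ for $t=0,\dots,l$, one checks that each $P_t\in\mathcal{X}$ and that $k_{t+1}-k_t=n+1=|i_{t+1}-i_t|+2$, since the colours alternate between $1$ and $n$ and hence $|i_{t+1}-i_t|=n-1$. Thus successive points are in \emph{minimal} snake position, so $S^{(l+1)}_m$ is indeed a minimal snake module. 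Moreover, for colours in $\{1,n\}$ one has $\min\{i+i',\,2n+2-i-i'\}=n+1$, so minimal snake position here coincides with \emph{prime} snake position. Consequently the snake is prime, and by Theorem \ref{thm:snake_modules} all modules $S^{(j)}$ occurring below are prime, real, special, anti-special and thin; these are the properties that will make the Grothendieck-ring manipulations determine the decomposition uniquely.

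Second, I would collapse the extended T-system for these minimal snakes. Evaluating the neighbouring points of a minimal pair $P_t,P_{t+1}$, both case distinctions in the definitions of $\mathbb{X}$ and $\mathbb{Y}$ land in the equality branch ($k+i=k'-i'$ and $k+n+1-i=k'-n-1+i'$), so $\mathbb{X}_{P_t}^{P_{t+1}}=\mathbb{Y}_{P_t}^{P_{t+1}}=\emptyset$; concatenating, the neighbouring snakes of the whole snake are empty. Since $L(\emptyset)$ is the trivial module with class $1$, Theorem \ref{thm:the_extended_T-system} degenerates, for every sub-snake $P_a,\dots,P_b$, to
\begin{align}
[P_a\cdots P_{b-1}]\,[P_{a+1}\cdots P_b]=[P_{a+1}\cdots P_{b-1}]\,[P_a\cdots P_b]+1,
\label{eqn:collapsed_ext_T}
\end{align}
where $[P_a\cdots P_b]:=\bigl[L\bigl(\prod_{t=a}^{b}Y_{i_t,k_t}\bigr)\bigr]$ and the empty product is $1$. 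This is precisely the higher-rank analogue of (\ref{eqn:tsys_1}).

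Third, I would obtain (\ref{eqn:new_ext_t_sys}) from (\ref{eqn:collapsed_ext_T}) in the same way (\ref{eqn:tsys_2}) was derived from (\ref{eqn:tsys_1}). Using the period-two identity $N(j)=N(j+2)$, the substitution $Y_{i,\mu+k/2}=:Y_{i,k}$, and the shift $m\mapsto m+1$, one identifies $S^{(1)}_{m+1}$, $S^{(l)}_m$, $S^{(l+1)}_{m+1}$, $S^{(l-1)}_{m+1}$ with $[P_0]$, $[P_1\cdots P_l]$, $[P_0\cdots P_l]$, $[P_2\cdots P_l]$, so the target reads
\begin{align}
[P_0]\,[P_1\cdots P_l]=[P_0\cdots P_l]+[P_2\cdots P_l].
\label{eqn:front_attach}
\end{align}
I would prove this and its mirror (back-attachment) by induction on $l$, the key point being that $\mathrm{Rep}(U_q(\tilde{\mathfrak g}))$ is an integral domain (Theorem \ref{thm:properties_of_chi_q}(4)): for instance, at $l=2$ the relations (\ref{eqn:collapsed_ext_T}) force $[P_0P_1P_2]=[P_0][P_1][P_2]-[P_0]-[P_2]$ after cancelling the common factor $[P_1]$, and both (\ref{eqn:front_attach}) and its mirror follow immediately. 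Reading (\ref{eqn:front_attach}) and back-attachment as statements about composition factors then yields the Fibonacci count: peeling the last factor, $\prod_{t=0}^{l}[P_t]=\bigl(\prod_{t=0}^{l-1}[P_t]\bigr)[P_l]$ expands so that each tiling of $\{0,\dots,l-1\}$ either appends a monomer $l$ or (when it ends in a monomer at $l-1$) also contributes a term with the dimer $\{l-1,l\}$; the resulting terms are exactly the snake classes indexed by monomer--dimer tilings of $\{0,\dots,l\}$, whose number satisfies $c_{l+1}=c_l+c_{l-1}$ with $c_1=1,\,c_2=2$, hence $\mathrm{Fib}(l+1)$. The all-monomer tiling contributes $S^{(l+1)}_m$, and distinct tilings give distinct dominant monomials $\prod_{\mathrm{monomers}}Y_{i_t,k_t}$, so the decomposition is multiplicity-free and the composition factors are precisely these $\mathrm{Fib}(l+1)$ snake modules.

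The hard part will be the third step, and within it the control of the terms in the iterated expansion. Two issues need care: (i) when $P_l$ is attached to a sub-snake whose last monomer is \emph{not} $P_{l-1}$, the pair is in non-minimal (general) snake position, and one must know that the corresponding tensor product stays simple, i.e.\ $[\,\mathrm{snake}\,][P_l]=\bigl[L(m_{\mathrm{snake}}Y_{P_l})\bigr]$ with no extra summand; and (ii) equivalently, that multiplying the explicit thin $q$-character of a fundamental into $\chi_q$ of a snake produces no dominant monomial beyond the two (resp.\ one) claimed. I expect (i)/(ii) --- the general-position irreducibility that guarantees each tiling really contributes a single simple snake class --- to be the genuine obstacle; it should follow from the specialness, anti-specialness and thinness in Theorem \ref{thm:snake_modules} together with the partial order on $\mathcal{P}$ used to peel off $L(m)$ at successive maximal dominant monomials, but verifying completeness (that nothing else survives) is the delicate point, while the tiling combinatorics and the domain-cancellation induction are routine once this is in place.
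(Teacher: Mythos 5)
Your treatment of the relation (\ref{eqn:new_ext_t_sys}) is essentially the paper's own proof. The paper likewise specializes Theorem \ref{thm:the_extended_T-system} to the minimal snake, where the neighbouring-snake terms are trivial so that the relation collapses to $[S^{(l)}_m(\mu+\frac{n+1}{2})][S^{(l)}_{m+1}(\mu)]=[S^{(l+1)}_{m+1}(\mu)][S^{(l-1)}_{m}(\mu+\frac{n+1}{2})]+1$, and then runs exactly your induction on $l$: multiply the induction hypothesis by $[S^{(l)}_{m}(\mu+\frac{n+1}{2})]$, apply the collapsed relation to both resulting products, and cancel the common factor $[S^{(l-1)}_{m}(\mu+\frac{n+1}{2})]$ using that the Grothendieck ring is an integral domain (Theorem \ref{thm:properties_of_chi_q}). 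Your explicit check that both branches of the neighbouring-point definitions land in the equality case, so that $\mathbb{X}=\mathbb{Y}=\emptyset$ for colours in $\{1,n\}$ at distance $n+1$, is a detail the paper leaves implicit, and it is the correct justification for the ``$+1$''.

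For the Fibonacci count your route is genuinely different, and it is exactly the part where your own worry is justified. The paper does not expand $\prod_t[P_t]$ by iterated attachment relations in the Grothendieck ring; instead it multiplies the explicit thin $q$-characters of $W^{(1)}_1(a)$ and $W^{(1)}_n(a)$ obtained from the path formula of \cite{MY}, observes that only the highest and lowest $l$-weights of each factor can combine into a dominant monomial of the product, counts the resulting dominant monomials as monomer--dimer tilings (giving $\sum_{k}\binom{l-k}{k}=\text{Fibonacci}(l+1)$ distinct dominant monomials), and converts this into a count of composition factors by the specialness of snake modules (Theorem \ref{thm:snake_modules}), subtracting one $\chi_q(L(m))$ at a time. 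This sidesteps entirely the issue you flag as the ``genuine obstacle'', namely that $[\,\text{snake}\,][P_l]$ be a single simple class when $P_l$ is attached in non-prime relative position: the $q$-character argument never needs simplicity of tensor products of snakes in general position. If you keep your route, the missing input is supplied not by thinness and the partial order, as you suggest, but by part (4) of Theorem \ref{thm:snake_modules}: the concatenated monomer set is still a snake, its unique decomposition into prime snakes isolates $P_l$ precisely when its gap to the last monomer exceeds the prime bound $n+1$, and hence the tensor product is the simple snake module with the concatenated highest $l$-weight; with that, your tiling recursion closes. Finally, note that in the paper the identification of $S^{(l+1)}_m$ as a composition factor comes for free, since $\prod_{t=0}^{l}Y_{N(t+m),k+t(n+1)}$ is the unique maximal dominant monomial of the product and therefore the highest $l$-weight of the top composition factor.
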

	The proof of the theorem is given in the Appendix \ref{App:A}.
	We state the following conjecture.
	\begin{conj}[Snake in the Snail]
		The extended T-systems (\ref{eqn:new_ext_t_sys}) ($m$ odd) appear in the successive lines of the snail operator and the component corresponding to $[S^{(l-1)}_{m+1}(\mu-(l+2)\frac{n+1}{2})]$ cancels out, i.e. the snail operator $\tilde{X}_k$ can be defined through a single irreducible representation of the Yangian just like in the $\mathfrak{sl}_2$ case. The minimal snake module $S^{(k)}_{1}(\mu-(k-1)\frac{n+1}{2})$. $\odot$
	\end{conj}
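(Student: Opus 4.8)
The plan is to prove the statement by induction on the number of loops of $\tilde{X}_k$, mirroring the $\mathfrak{sl}_2$ argument of Section~\ref{subsect:T-systems} but with the T-system~(\ref{eqn:tsys_2}) replaced throughout by the extended T-system~(\ref{eqn:new_ext_t_sys}) of Theorem~\ref{thm:snake_and_snail}. First I would make precise the claim that the extended T-systems ``appear in the successive lines''. Writing $\tilde{X}_k$ in the transfer-matrix form with the chain of singlet projectors $\mathfrak{P}_{2,\alpha_1,\dots,\alpha_{2k-1},1}$, the auxiliary spaces on consecutive loops carry alternating fundamental and antifundamental evaluation modules whose highest $l$-weights step by $n+1$ in the index (i.e.\ by $\tfrac{n+1}{2}$ in the spectral parameter) and alternate between the nodes $1$ and $n$; by the remark preceding Theorem~\ref{thm:snake_and_snail} these are precisely in minimal snake position, the parity condition $m$ odd fixing the starting node. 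Thus each junction is governed by exactly one instance of~(\ref{eqn:new_ext_t_sys}).

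For the induction I would let $\tilde{X}^{(l)}$ denote the partial snail assembled from the first $l$ loops together with the singlet projectors $P^-$ gluing them, and assert that $\tilde{X}^{(l)}$ coincides --- up to the fixed singlets living in the physical factors $V_1\otimes\overline{V}_1$ and $V_2\otimes\overline{V}_2$ --- with the transfer matrix carrying the single minimal snake module $S^{(l)}$ in the auxiliary space. The base case is the degeneration $\bar{\bar{R}}(\tfrac{n+1}{2})\propto P^-_{\bar{1}2}$ recorded in Corollary~\ref{cor:rel7_sl3}, which projects the first loop onto $S^{(1)}$. In the inductive step the relevant crossing R-matrix argument equals the special value $\tfrac{n+1}{2}$, so by~(\ref{eqn:new_ext_t_sys}) the fused auxiliary space has the full snake $S^{(l+1)}$ and the shorter snake $S^{(l-1)}$ as its two composition-factor blocks; what remains is to show that, after composing with the junction singlet, only the $S^{(l+1)}$ block survives.

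The cancellation of the $S^{(l-1)}$ summand is the heart of the proof and the step I expect to be the main obstacle. In the $\mathfrak{sl}_2$ case it is checked directly from the fusion identities of Figure~\ref{fig:fusion_rel_sl2}, where the shorter module is annihilated once $P^-$ is split into a singlet and its dual and slid along the loop. For higher rank I would generalise this using the identities of Figure~\ref{fig:fusion_rel_sln+1}: the neighbouring snake $\mathbb{Y}$ in the sense of Theorem~\ref{thm:the_extended_T-system}, which for the minimal snakes at hand collapses to the shorter snake $S^{(l-1)}$, and I would show that the defusion map onto $\mathbb{Y}$, composed with the singlet creation/annihilation coming from $\res\,\bar{\bar{R}}=P^-_{\bar{1}2}$, vanishes. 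The structural input making this rigorous is Theorem~\ref{thm:snake_modules}: because minimal snake modules are special, anti-special and thin, the highest $l$-weight vector generating $S^{(l+1)}$ is singled out uniquely by the singlet projector, while the strictly lower $l$-weight carried by $S^{(l-1)}$ is incompatible with it and is killed. The genuine new difficulty relative to $\mathfrak{sl}_2$ is the loss of self-duality $V\cong\overline{V}$: the singlet now really interchanges $V$ and $\overline{V}$, so one must track both crossing transforms $\bar{R}$ and $\bar{\bar{R}}$ of Definition~\ref{def:r_matrix_for_f_and_fbar} through each junction, and a careful bookkeeping of the spectral-parameter shifts is needed to match the iterated snake base points (this is also what accounts for the canceling factor appearing at $\mu-(l+2)\tfrac{n+1}{2}$ rather than at the single-application value $\mu-(l-2)\tfrac{n+1}{2}$).

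Iterating the inductive step over all loops then leaves only the top snake, realising $\tilde{X}_k$ as the transfer matrix carrying the single minimal snake module $S^{(k)}_1(\mu-(k-1)\tfrac{n+1}{2})$ in the auxiliary space, exactly as in the $\mathfrak{sl}_2$ case. To complete the analogy I would finally analytically continue in $k$, as in~\cite{BJMST}, by replacing the fused crossing R-matrices with the $L$-operators attached to $S^{(k)}_1$ and applying a suitable trace function, thereby obtaining the rank-$n$ analogue of the operator $X_k$.
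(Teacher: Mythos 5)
You should first note that the statement you are proving is, in the paper, only a conjecture: the authors verify it numerically (Mathematica, $n=2$, up to $k=4$ loops), remark that ``a general proof might be done by induction and considering a corresponding partition of unity'', and state explicitly that they have not yet analysed the projectors. So there is no proof in the paper to compare against, and your proposal must stand on its own. Its framing --- induction on the number of loops, successive auxiliary lines in minimal snake position, one instance of the extended T-system (\ref{eqn:new_ext_t_sys}) at each junction, base case from the degeneration $\bar{\bar{R}}(\tfrac{n+1}{2})\propto P^-_{\bar{1}2}$ of Corollary \ref{cor:rel7_sl3} --- coincides with the strategy the paper itself envisions. But the proposal does not close the argument: the cancellation of the $[S^{(l-1)}]$ component, which you yourself identify as ``the heart of the proof'', is only described, not established.

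The gap is that you attempt to extract an operator-level vanishing from character-level data. Theorem \ref{thm:snake_modules} (special, anti-special, thin) and Theorem \ref{thm:snake_and_snail} are statements about $q$-characters, i.e.\ about classes in the Grothendieck ring, and the extended T-system identifies only the composition factors of the fused auxiliary space, not its submodule structure. Since the relevant tensor products are not semisimple, one has a filtration rather than a direct sum, so there are no canonical projectors onto ``the $S^{(l+1)}$ block'' and ``the $S^{(l-1)}$ block'' that you could compose with the junction singlet; one must instead exhibit explicit fusion/defusion intertwiners and prove that the partial snail factors through the correct subquotient. Likewise, the assertion that ``the strictly lower $l$-weight carried by $S^{(l-1)}$ is incompatible with the singlet projector and is killed'' is not a valid inference: $P^-$ is a projector built from $\mathfrak{sl}_{n+1}$-module data, it does not interact with $l$-weights in any way that forces this vanishing, and in the $\mathfrak{sl}_2$ case the corresponding step in \cite{BJMST} required a genuine combinatorial computation with the fusion projectors (the identities behind Figure \ref{fig:fusion_rel_sl2}), not a character argument. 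Until the analogous analysis of the operator $\mathfrak{P}_{2,\alpha_1,\dots,\alpha_{2k-1},1}$ and the rank-$n$ fusion identities (Figure \ref{fig:fusion_rel_sln+1}) is actually carried out --- precisely the part the paper defers to future work --- your last two paragraphs describe the theorem one would like to have, not a proof of it. (Your observation about the shift $\mu-(l+2)\tfrac{n+1}{2}$ versus $\mu-(l-2)\tfrac{n+1}{2}$ is reasonable and likely points to a typo in the conjecture, but it is a side issue.)
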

	We checked the first few steps of this conjecture in Mathematica. So $n = 2$ and up to
	$k = 4$ loops as explained above. Moreover, we can also set $n = 3, 4,\dots$ in the program,
	but the calculational effort grows rapidly. However, a general proof might be done by
	induction and considering a corresponding partition of unity. We haven’t analysed the
	projectors yet, but a proof should be possible in the future. Let us emphasize again, that
	the Snail Operator $X_k$ only applies to our problem when the number of loops $k$ is odd. This is discussed in section \ref{sect:sl3snailconstr}. We should also say that it is possible to calculate the $q$-character of the minimal snake module $S^{(k)}_m$ using the path description and Theorem 6.5 in \cite{MY}. Moreover, we can analyse $S^{(k)}_m$ as a representation of $\sln$ and calculate the Young diagrams of all its irreducible subrepresentations. At last, we remark that snake modules can be seen as part of a cluster algebra. Thus, the extended T-systems can be understood as explicit cluster relations. We have seen that it is helpful to understand these kind of relations. We have seen that it can be helpful to understand these kind of relations.
	
	\section*{Acknowledgement}
	
	We thank our friends and colleagues Frank Göhmann, Andreas Klümper and Alexander Razumov for helpful discussions and advices during the time of research and the writing of this paper.
	
	Henrik Jürgens especially thanks Matthias Wendt for the ideas and the almost weekly discussions over more than a year regarding in particular the representation-theoretical background that was used in this paper.
	
	Henrik Jürgens thanks his friend Alec Cooper for helpful discussions, ideas and useful suggestions during the course of our research.
	
	All authors were supported by DFG through the research unit FOR 2316. 
	Henrik Jürgens acknowledges financial support from Deutsche 
	Forschungsgemeinschaft through DFG project BO 340/5-1.
	
	Henrik Jürgens acknowledges the
	support of the grant FAR UNIMORE project CUP-E93C23002040005.

	%The proof is given in the appendix.
	%\begin{itemize}
	%	\item[--] One can calculate the $\boldsymbol{q}$\textit{\textbf{-character}} of the\textbf{\textit{ minimal snake module}}.
	%	\item[--] One can calculate \textbf{Young tableaux} of it as a representation of $U_q(\mathfrak{sl}_n)$.
	%	\item[--] One can define an operator that \textbf{generalizes the} $\boldsymbol{\mathfrak{sl}}_2$ \textbf{snail} to $\boldsymbol{\mathfrak{sl}}_n$ through $L(\prod_{t=0}^{l}Y_{n(t),k+t(N+2)})$.
	%\end{itemize}
	%\begin{itemize}
	%	\item[--] There is more to explain about how to optain the \textbf{projectors}.
	%	\item[--] There is more to say about \textbf{Drinfeld polynomials}.
	%	\item[--] Snake modules can be seen as part of a \textbf{cluster algebra}.
	%\item[--] In which sense do the extended T-systems provide explicit \textbf{cluster relations}?	
	%\end{itemize}
	\newpage
	\appendix
	\section{}
	\label{App:A}
	Let the action of the density matrices $D$ and $D^{(1)}$ be defined as 
	\begin{align*}
		D_{1,\dots,m}(\lambda_1,\dots,\lambda_n)(X_{1,\dots,m}) &:= \operatorname{tr}_{1,\dots,m}\left(D_{1,\dots,m}(\lambda_1,\dots,\lambda_m)X_{1,\dots,m}\right),\\
		D^{(1)}_{\bar{1},2,\dots,m}(\lambda_1,\dots,\lambda_m)(X_{\bar{1},2,\dots,m}) &:= \operatorname{tr}_{\bar{1},2,\dots,m}\left(D^{(1)}_{\bar{1},2,\dots,m}(\lambda_1,\dots,\lambda_m)X_{\bar{1},2,\dots,m}\right),
	\end{align*}
	where on the right hand side $D$ is understood as an element of $\End(V^{\otimes m})$ via the transposition isomorphism ($\End(V^{\otimes m})^\star\cong \End(V^{\otimes m})$).
	Then, one can write the two reduced qKZ-equations as
	\begin{align}
		&D^{(1)}_{\bar{1},2,\dots,m}(\lambda_1-\frac{n+1}{2},\lambda_2,\dots,\lambda_m) = A^{(1)}_{1,\bar{1}|2,\dots,m}(\lambda_1|\lambda_2,\dots,\lambda_m)\left(D_{1,\dots,m}(\lambda_1,\lambda_2,\dots,\lambda_m)\right):=\notag\\ &\operatorname{tr}_1\left(R_{1m}(\lambda_1-\lambda_m)\cdots R_{12}(\lambda_1-\lambda_2)D_{1,\dots,m}(\lambda_1,\lambda_2,\dots,\lambda_m)(n+1)P^{-}_{1\bar{1}}R_{21}(\lambda_2-\lambda_1)\cdots R_{m1}(\lambda_m-\lambda_1)\right),\\
		&D_{1,\dots,m}(\lambda_1-\frac{n+1}{2},\lambda_2,\dots,\lambda_m) = A^{(2)}_{\bar{1},1|2,\dots,m}(\lambda_1|\lambda_2,\dots,\lambda_m)\left(D^{(1)}_{\bar{1},2,\dots,m}(\lambda_1,\lambda_2,\dots,\lambda_m)\right):=\notag\\ &\operatorname{tr}_{\bar{1}}\left(\bar{\bar{R}}_{\bar{1}m}(\lambda_1-\lambda_m)\cdots \bar{\bar{R}}_{\bar{1}2}(\lambda_1-\lambda_2)D^{(1)}_{\bar{1},2,\dots,m}(\lambda_1,\lambda_2,\dots,\lambda_m)(n+1)P^{-}_{1\bar{1}}\bar{R}_{2\bar{1}}(\lambda_2-\lambda_1)\cdots \bar{R}_{m\bar{1}}(\lambda_m-\lambda_1)\right),\label{eqn:second_rqKZ}
	\end{align}
	where $(P^{-}_{1\bar{1}})^2=P^{-}_{1\bar{1}}$ is the projector onto the singlet in the tensor product $V\otimes\overline{V}$ of the fundamental and antifundamental representation of $\mathfrak{sl}_n$.
	\begin{proof}[Proof of the pole structure of $D_m$]
		We use the rqKZ equation to calculate $D_{1,\dots,m}(\lambda_1-k(n+1)-l,\lambda_2,\dots,\lambda_m)$, $k\in\mathbb{N}$, $l=0,1,\dots,n$, in terms of a product of $R$-matrices and $D_{1,\dots,m}(\lambda_1-l,\lambda_2,\dots,\lambda_m)$. We assume that the parameters $\lambda_3,\dots,\lambda_m$ are in general position (in the sense of property (\ref{analyt_sln+1}) conjecture \ref{conj:analyt_prop_sln+1}) and therefore just write out the $R$-matrices that depend on the difference $\lambda_1-\lambda_2$.
		\begin{align*}
			&\underset{\lambda_{1} = \lambda_2}{\text{res}}D_{1,\dots,m}(\lambda_1-k(n+1)-l,\lambda_2,\dots,\lambda_m)=\\
			&\underset{\lambda_{1} = \lambda_2}{\text{res}}\tr_{\bar{1}}[\dots\bar{\bar{R}}_{\bar{1}2}(\lambda_1-\lambda_2-k(n+1)+\frac{n+1}{2}-l)\tr_1[\dots R_{12}(\lambda_1-\lambda_2-(k-1)(n+1)-l)\dots\\ &\dots\tr_{\bar{1}}[\dots\bar{\bar{R}}_{\bar{1}2}(\lambda_1-\lambda_2-\frac{n+1}{2}-l)\tr_1[\dots R_{12}(\lambda_1-\lambda_2-l)D_{1,\dots,m}(\lambda_1-l,\lambda_2,\dots,\lambda_m)\\
			&R_{21}(\lambda_2-\lambda_1+l)\dots]\bar{R}_{2\bar{1}}(\lambda_2-\lambda_1+\frac{n+1}{2}+l)\dots]\dots\\&\dots R_{21}(\lambda_2-\lambda_1+(k-1)(n+1)+l)\dots]\bar{R}_{2\bar{1}}(\lambda_2-\lambda_1+k(n+1)-\frac{n+1}{2}+l)\dots]
		\end{align*}
		Since $D_{1,\dots,m}(\lambda_1-l,\lambda_2,\dots,\lambda_m)$, $l=0,1,\dots,n$, has no poles at $\lambda_1=\lambda_2$ due to property (\ref{analyt_sln+1}) conjecture \ref{conj:analyt_prop_sln+1}, the pole at $\lambda_1=\lambda_2$ is completely determined from the prefactors of the $R$-matrices. Writing them out, we obtain
		\begin{align*}
			\prod_{j=1}^{k}&\frac{1}{\lambda_{12}-j(n+1)-l}\cdot\frac{1}{\lambda_{12}-(j-1)(n+1)-l}\cdot\frac{1}{\lambda_{12}-(j-1)(n+1)-l+1}\cdot\\&\cdot\frac{1}{\lambda_{12}-(j-1)(n+1)-l-1},
		\end{align*}
		which has a simple pole at $\lambda_{12}(=\lambda_1-\lambda_2)=0$ iff $l=0$ or $l=1$. Thus, the density matrix $D_{1,\dots,m}(\lambda_1,\lambda_2,\dots,\lambda_m)$ has no poles at $\lambda_1=\lambda_2-k(n+1)-l$ for $l=2,\dots,n$ and at most simple poles for $l=0,1$.
	\end{proof}
	\begin{proof}[Proof of theorem \ref{thm:snake_and_snail}]
		To proof the first part of the theorem, we use the $q$-characters of the fundamental modules $W_1^{(1)}(a) = L(Y_{1,0})$ and $W_n^{(1)}(a) = L(Y_{n,0})$. They are very easy to obtain by using the path formula (6.3) in theorem 6.5 of \cite{MY}. Namely,
		\begin{align}
			&\chi_q(W_1^{(1)}(a))=Y_{1,0}+Y^{-1}_{n,n+1}+\sum_{k=1}^{n-1}Y_{k+1,k}Y^{-1}_{k,k+1},\\
			&\chi_q(W_n^{(1)}(a))=Y^{-1}_{1,n+1}+Y_{n,0}+\sum_{k=1}^{n-1}Y^{-1}_{k+1,n+1-k}Y_{k,n+1-(k+1)}.
		\end{align}
		Using the multiplicativity of the $q$-character theorem \ref{thm:properties_of_chi_q}, we can easily calculate the $q$-character of the tensor product
		\begin{align}
			W_{N(m)}^{(1)}(\mu+\frac{k}{2})\otimes W_{N(m+1)}^{(1)}(\mu+\frac{k+n+1}{2})\otimes\cdots\otimes W_{N(m+l)}^{(1)}(\mu+\frac{k+l(n+1)}{2}).
			\label{eqn:tensor_prod_proof}
		\end{align}
		Assume $m$ is odd, then the product is of the form
		\begin{align*}
			&(Y^{-1}_{1,k+n+1}+Y_{n,k}+\dots)(Y_{1,k+n+1}+Y^{-1}_{n,k+2(n+1)}+\dots)(Y^{-1}_{1,k+3(n+1)}+Y_{n,k+2(n+1)}+\dots)\\
			&(Y_{1,k+3(n+1)}+Y^{-1}_{n,k+4(n+1)}+\dots)\cdots,
		\end{align*}
		where we omitted all the non dominant monomials in every bracket, that can't multiply to dominant monomials due the shift in the second index (i.e. the spectral parameter). We want to count the number of dominant monomials. Fortunately, this is an easy combinatoric task. Starting from the left we choose one of the two l-weights in the first bracket. If we choose the anti-dominant one we have to choose the dominant one in the next factor if we want to multiply to a dominant monomial, otherwise we still have the free choice for the next factor as soon as it is not the last. We pick one of the two options and look at the next factor if possible. Obviously, the situation is the exact same. Thus, the problem of finding the number of dominant monomials in the $q$-character is equivalent to the number of options to partition $l+1$ objects into boxes of size $m\leq 2$. The solution to this problem is
		\begin{align}
			\sum_{k=0}^{\lfloor \frac{l}{2}\rfloor}\binom{l-k}{k},
		\end{align}
		which is equal to the $(l+1)$th Fibonacci number Fibonacci$(l+1)$. The argument for $m$ even works the exact same way. Thus, the number of irreducible composition factors is smaller or equal to Fibonacci$(l+1)$. Moreover, all the dominant monomials obtained in this way are the dominant monomials of certain snake modules. Due to theorem \ref{thm:snake_modules} we know that these are isomorphic to the tensor product of minimal snakes. Now, let's assume that we have an irreducible composition Factor of the tensor product (\ref{eqn:tensor_prod_proof}). Then, it has to be of the form $L(m)$ where $m$ is one of the dominant monomials in the $q$-character of (\ref{eqn:tensor_prod_proof}). As we have seen, $L(m)$ is a snake module. Thus, since snake modules are special, it contains only one dominant monomial. Subtracting the $q$-character of $L(m)$ from the $q$-character in the tensor product (\ref{eqn:tensor_prod_proof}), the $q$-character of the other composition factor contains all the $\text{Fibonacci}(l+1)-1$ remaining dominant monomials. Therefore, by induction, we conclude that there must be exactly Fibonacci$(l+1)$ many composition factors.
		%But, since all the dominant monomials obtained in this way are the dominant monomials of certain snake modules, every dominant monomial corresponds to the class of an irreducible module.
		\\
		
		For the second part of the theorem we have to prove the existence of a short exact sequence of the form
		\begin{align}
			[S^{(1)}_{m+1}(\mu)][S^{(l)}_m(\mu+\frac{n+1}{2})] = [S^{(l+1)}_{m+1}(\mu)]+[S^{(l-1)}_{m+1}(\mu+2\frac{n+1}{2})],
		\end{align}
		where we omitted the term $l\frac{n+1}{2}$ since it can be absorbed into the spectral parameter. We prove the existence by induction over $l$. The case $l=1$ is clear as it is one of the simplest cases ($M=2$) of the \textit{extended T-system} (theorem \ref{thm:the_extended_T-system}). Thus, we only need to prove the step $l-1\to  l$. By the \textit{extended T-system} (theorem \ref{thm:the_extended_T-system}), we have
		\begin{align}
			[S^{(l)}_m(\mu+\frac{n+1}{2})][S^{(l)}_{m+1}(\mu)] = [S^{(l+1)}_{m+1}(\mu)][S^{(l-1)}_{m}(\mu+\frac{n+1}{2})] + 1.
			\label{eqn:proof_ex_seq}
		\end{align}
		We multiply the induction hypothesis by $[S^{(l)}_{m}(\mu+\frac{n+1}{2})]$ and obtain
		\begin{align}
			\notag&[S^{(1)}_{m+1}(\mu)][S^{(l-1)}_m(\mu+\frac{n+1}{2})][S^{(l)}_{m}(\mu+\frac{n+1}{2})]=\\
			\notag&[S^{(l)}_{m+1}(\mu)][S^{(l)}_{m}(\mu+\frac{n+1}{2})]+[S^{(l-2)}_{m+1}(\mu+2\frac{n+1}{2})][S^{(l)}_{m}(\mu+\frac{n+1}{2})]\stackrel{(\ref{eqn:proof_ex_seq})}{=}\\
			\notag&[S^{(l+1)}_{m+1}(\mu)][S^{(l-1)}_{m}(\mu+\frac{n+1}{2})] + 1+[S^{(l-1)}_{m+1}(\mu+2\frac{n+1}{2})][S^{(l-1)}_{m}(\mu+\frac{n+1}{2})] - 1=\\
			&([S^{(l+1)}_{m+1}(\mu)]+[S^{(l-1)}_{m+1}(\mu+2\frac{n+1}{2})])[S^{(l-1)}_{m}(\mu+\frac{n+1}{2})],
			\label{eqn:second_part_up_to_factor}
		\end{align}
		where we applied equation (\ref{eqn:proof_ex_seq}) to both terms on the left hand side.
		This is the desired equation multiplied by $[S^{(l-1)}_{m}(\mu+\frac{n+1}{2})]$. As the Grothendieck ring is an integral domain, (\ref{eqn:second_part_up_to_factor}) proves the second part of the theorem.
	\end{proof}
	\newpage


\begin{thebibliography}{99}
		
		\bibitem{KNR}
		A. Klümper, K. S. Nirov and A. V. Razumov,
		{\it Reduced qKZ equation: General case},
		J. Phys. A: Math. Theor. 53 015202 (2020) 
		(arXiv:1905.06014v2)	https://doi.org/10.1088/1751-8121/ab3b9e
		
		\bibitem{KNS}
		A. Kuniba, T. Nakanishi and J. Suzuki,
		{\it Functional relations in solvable lattice models. I. Functional relations and
			representation theory.},
		Internat. J. Modern Phys. A 9 (1994), 5215–5266.
		
		\bibitem{BJY}
		B. Duan, J. R. Li, and Y. Luo,
		{\it Cluster algebras and snake modules},
		Journal of Algebra 519 (2015).
		
		\bibitem{H}	
		D. Hernandez,
		{\it The Kirillov-Reshetikhin conjecture and solutions of T-systems},
		J. Reine Angew. Math. 596 (2006), 63–87.
		
		\bibitem{HL}
		D. Hernandez and B. Leclerc,
		{\it Cluster algebras and quantum affine algebras},
		Duke Math. J. 154, no. 2 (2010), 265-341.
		(arXiv:0903.1452v3)	https://doi.org/10.1215/00127094-2010-040
		
		\bibitem{FM}
		E. Frenkel and E. Mukhin,
		{\it Combinatorics of q-characters of finite-dimensional representations of quantum affine algebras},
		Commun. Math. Phys. 216 (2001), 23-57.
		
		\bibitem{FR}
		E. Frenkel and N. Reshetikhin,
		{\it The q–characters of representations of quantum affine agebras and deformations of $\mathcal{W}$–algebras},
		Preprint math.QA/9810055; in Contemporary Math 248, 163–205, AMS 2000.
		
		\bibitem{MY2}
		E. Mukhin and C.A.S. Young,
		{\it Path description of type B q-characters},
		Advances in Mathematics 231 (2012), 1119--1150.
		
		\bibitem{MY}
		E. Mukhin and C.A.S. Young,
		{\it Extended T-systems},
		Sel. Math. New Ser. 18 (2012), 591-631.
		
		\bibitem{KR}
		G.A.P. Ribeiro and A. Klümper,
		{\it Correlation functions of the integrable SU(n) spin chain},
		J. Stat. Mech. (2019) 013103
		
		\bibitem{BHN}
		H. Boos, A. Hutsalyuk and Kh. Nirov,
		{\it On the calculation of the correlation functions of the
			$s\ell_3$-model by means of the reduced qKZ equation},
		J. Phys. A: Math. Theor. 51 (2018) 445202.
		
		\bibitem{BJMST}
		H. Boos, M. Jimbo, T. Miwa, F. Smirnov and Y. Takeyama,
		{\it A recursion formula for the correlation functions of an inhomogeneous XXX model}, Algebra and Analysis 17 (2005), 115-159, (Engl. version in St.Petersburg Math.
		J. 17 (2006), 85-117). (arXiv:hep-th/0405044)
		
		
		\bibitem{BJMST2}
		H. Boos, M. Jimbo, T. Miwa, F. Smirnov, and Y. Takeyama,
		{\it Reduced qKZ equation and correlation functions of the XXZ model}, Commun.Math.Phys. 261 (2006), 245-276.
		
		\bibitem{BGKNR}
		H. Boos, F. Göhmann, A. Klümper, K. S. Nirov and A. V. Razumov,
		{\it Exercises with the universal $R$-matrix},
		J. Phys. A.: Math. Theor. 43 (2010) 415208.
		
		\bibitem{BK}
		H. Boos and V. Korepin,
		{\it Quantum spin chains and Riemann zeta functions with odd arguments},
		hep-th/0104008; J. Phys. A 34 (2001) 5311-5316.
		
		\bibitem{BKS}
		H. Boos, V. Korepin and F. Smirnov
		{\it Emptiness formation probability and quantum Knizhnik--Zamolodchikov equation},
		hep-th/0209246; Nucl. Phys. B 658 (2003), no. 3, 417-439.
		
		\bibitem{BJMST_HGS}
		H. Boos, M. Jimbo, T. Miwa, F. Smirnov and Y. Takeyama,
		{\it Hidden Grassmann structure in the XXZ model},
		Commun. Math. Phys. 272 (2007), 263-281. (arXiv:hep-th/0606280v4)
		
		\bibitem{BJMST_HGS2}
		H. Boos, M. Jimbo, T. Miwa, F. Smirnov and Y. Takeyama,
		{\it Hidden Grassmann structure in the XXZ Model II: Creation operators},
		Commun. Math. Phys. 286 (2009), 875-932. (arXiv:0801.1176v3 [hep-th])
		
		\bibitem{FK}
		H. Furutsu and T. Kojima,
		{\it The $U_q(\mathfrak{sl}_n)$ analogue of the XXZ chain with boundary},
		Journal of Mathematical Physics 41 (1999), 4413–4436. (arXiv:2207.12772v1) 
		https://doi.org/10.48550/arXiv.2207.12772
		
		\bibitem{N}
		H. Nakajima,
		{\it t-analogs of q-characters of Kirillov-Reshetikhin modules of quantum affine algebras},
		Represent. Theory 7 (2003), 259–274.
		
		\bibitem{NR} K. S. Nirov and A. V. Razumov,
		{\it Vertex models and spin chains in formulas and pictures},
		SIGMA 15 (2019), 068, 67 pages.
		(arXiv:1811.09401)	https://doi.org/10.3842/SIGMA.2019.068
		
		\bibitem{JM}	
		M. Jimbo and T. Miwa,
		{\it Algebraic analysis of solvable lattice models},
		CBMS Regional Conf. Ser. in
		Math., vol. 85, Amer. Math. Soc., Providence, RI, (1995). MR1308712 (96e:82037)
		
		\bibitem{Carter}
		R. Carter,
		{\it Lie algebras of finite and affine type},
		Cambridge University Press (2005).
		
		\bibitem{GT}
		S. Gautam and V. Toledano Laredo,
		{\it Yangians, quantum loop algebras, and abelian
			difference equations},
		J. Amer. Math. Soc. 29 (2016), 775–824.
		
		\bibitem{KT}
		S. M. Khoroshkin and V. N. Tolstoy,
		{\it Yangian double},
		Lett. Math. Phys. 36 (1994), 373–402. \\
		S. M. Khoroshkin and V. N. Tolstoy,
		{\it Yangian double and rational R-matrix},
		(arXiv:hep-th/9406194) (1994)
		
		\bibitem{KQ}
		T. Kojima and Y.-H. Quano,
		{\it Difference equations for the higher rank XXZ model with a boundary},
		Int. J. Mod. Phys. A15 (2000), 3699–3716. (arXiv:nlin/0001038v2) https://doi.org/10.1142/S0217751X00001816
		
		
		\bibitem{C2002}
		V. Chari,
		{\it Braid group actions and tensor products},
		Int. Math. Res. Not. 2002, 7, 357–382.
		
		\bibitem{CPBook}
		V. Chari and A. Pressley,
		{\it A guide to quantum groups},
		Cambridge University Press (1994).
		
		\bibitem{CP1991}
		V. Chari and A. Pressley,
		{\it Quantum affine algebras},
		Commun. Math. Phys. 142 (1991), 261–283.
		https://doi.org/10.1007/BF02102063
		
		\bibitem{CP1994}
		V. Chari and A. Pressley,
		{\it Quantum affine algebras and their representations},
		(arXiv:hep-th/9411145v1) (1994).
		
		\bibitem{CP1996}
		V. Chari and A. Pressley,
		{\it Minimal affinizations of representations of quantum groups: the simply
			laced case},
		J. Algebra 184 (1996), no. 1, 1–30.
		
		
		\bibitem{CH}	
		V. Chari and D. Hernandez,
		{\it Beyond Kirillov–Reshetikhin modules},
		Contemp. Math. 506 (2010), 49–81.
		
		\bibitem{D}
		V. G. Drinfeld,
		{\it Quantum groups},
		Journal of Soviet Mathematics 41 (1988), 898–915. https://doi.org/10.1007/BF01247086
		
		\bibitem{D2}
		V. G. Drinfeld,
		{\it A new realization of Yangians and of quantum affine algebras},
		Sov. Math. Dokl. 36 (1987), 212-216.
		
		\bibitem{Kac} V. G. Kac,
		{\it Infinite dimensional Lie algebras},
		3rd ed., Cambridge University Press (1990).
		
		\bibitem{KIEU}
		V. Korepin, A. Izergin, F. Essler, and D. Uglov,
		{\it Correlation function of the spin-1/2 XXX antiferromagnet},
		cond-mat/9403066; Phys. Lett. A 190 (1994), 182–184. MR1283785 (95f:82013)
		
	\end{thebibliography}
\end{document}